\documentclass[12pt]{article}
\usepackage[utf8]{inputenc}
\usepackage{geometry}
\usepackage{amsfonts,amsthm,bm,dsfont}
\usepackage{amssymb}
\usepackage{upgreek}
\usepackage{setspace}
\usepackage{amsmath}
\usepackage{makecell}
\usepackage{multirow}
\usepackage{bbm}
\usepackage{adjustbox}
\usepackage{booktabs} 
\usepackage{xcolor}
\usepackage[title]{appendix}
\usepackage{url}
\usepackage{algorithm}
\usepackage{algpseudocode}
\usepackage{graphicx}
\usepackage{subcaption}

\numberwithin{equation}{section}

\DeclareMathOperator{\tr}{tr}
\newcommand{\vech}{\operatorname{vech}}

\usepackage[compact]{titlesec}
\titlespacing{\section}{0pt}{2ex}{1ex}
\titlespacing{\subsection}{0pt}{1ex}{0ex}
\titlespacing{\subsubsection}{0pt}{0.5ex}{0ex}

\usepackage{natbib}

\usepackage{xr}
\makeatletter
\newcommand*{\addFileDependency}[1]{
  \typeout{(#1)}
  \@addtofilelist{#1}
  \IfFileExists{#1}{}{\typeout{No file #1.}}
}
\makeatother

\newcommand*{\myexternaldocument}[1]{
    \externaldocument{#1}
    \addFileDependency{#1.tex}
    \addFileDependency{#1.aux}
}

\usepackage[colorlinks=true,
            linkcolor=blue,
            citecolor=blue,
            urlcolor=blue]{hyperref}

\newcommand{\fnlink}[1]{\hyperlink{fn#1}{#1}}

\providecommand{\keywords}[1]{{\small{\textbf{\textit{Keywords ---}} #1}}}

\myexternaldocument{supplementary}

\theoremstyle{plain}
\newtheorem{theorem}{Theorem}
\newtheorem{proposition}[theorem]{Proposition}
\newtheorem{assumption}{Assumption}

\newtheorem{remark}{Remark}

\newtheorem{lemma}{Lemma}
\newtheorem{corollary}{Corollary}

\begin{document}

\thispagestyle{empty} \baselineskip=28pt \vskip 5mm
\begin{center}
{\Huge{\bf Mixture-of-Finite-Mixtures Wishart Model for Clustering Covariance Matrices with an Application to Brain Functional Connectivity}}
\end{center}

\baselineskip=12pt \vskip 5mm

\begin{center}\large
Zongyu Li\textsuperscript{\fnlink{1},\fnlink{2},\fnlink{3}}, 
Stefano Castruccio\textsuperscript{\fnlink{1}}, 
and Zhiyong Zhang\textsuperscript{\fnlink{2}}
\end{center}

\footnotetext[1]{\hypertarget{fn1}{}\baselineskip=11pt Department of Applied and Computational Mathematics and Statistics,
University of Notre Dame, Notre Dame, IN 46556, USA.}

\footnotetext[2]{\hypertarget{fn2}{}\baselineskip=11pt Department of Psychology,
University of Notre Dame, Notre Dame, IN 46556, USA.}

\footnotetext[3]{\hypertarget{fn3}{}\baselineskip=11pt Corresponding author. Email: \href{mailto:zongyuli798@gmail.com}{zongyuli798@gmail.com}.}

\baselineskip=17pt \vskip 5mm 
\centerline{\today} 

\vskip 5mm
\begin{center}
{\large{\bf Abstract}}
\end{center}

Data represented as covariance-type matrices arise in many fields, including brain functional connectivity and diffusion tensor imaging. We develop the MFM--Wishart, a Bayesian model-based clustering approach for such data that combines Wishart mixture components with a mixture-of-finite-mixtures (MFM) prior, allowing joint posterior inference on both the number of clusters and clustering assignments. Theoretically, we study the properties of Wishart kernels in the context of mixture models and then establish results for posterior consistency for the number of clusters and posterior contraction of the mixing measure under standard regularity conditions. Computationally, we develop an efficient Markov chain Monte Carlo (MCMC) algorithm for posterior inference. Simulation studies show competitive clustering performance and accurate recovery of the number of clusters, even under model misspecification. We apply MFM--Wishart to cluster infants based on functional connectivity during sleep, estimated from functional near-infrared spectroscopy (fNIRS) data, illustrating the practical utility of the model and revealing interpretable heterogeneity.

\keywords{Mixture of finite mixtures; Model-based clustering; Posterior contraction; Brain functional connectivity; Matrix-valued data.}
\newpage

\section{Introduction}\label{sec:intro}

Covariance-type matrices arise naturally in many scientific fields, including neuroscience \citep{allen14, zhu09}, computer vision \citep{tuzel06}, and finance \citep{bn04}. These matrices encode second-order dependence structures among variables and form an important class of symmetric positive definite (SPD) matrix-valued data. Brain functional connectivity provides a particularly important example and is often summarized by covariance or correlation matrices constructed from time series measured across brain regions using electroencephalography (EEG), functional magnetic resonance imaging (fMRI), and functional near-infrared spectroscopy (fNIRS) \citep{niu14,rogers07,sakk11}. Many studies have focused on identifying significant links between brain regions \citep{smith11} or testing group differences across tasks or populations \citep{strain22}.

Beyond these objectives, clustering covariance-type matrices provides a natural way to uncover latent structure or heterogeneity by grouping observations into clusters with distinct patterns. Despite its broad relevance, this problem has received relatively limited attention. In the context of brain functional connectivity, clustering can help identify connectivity phenotypes and subject subtypes that may be relevant to cognition and neuropsychiatric disorders \citep{miranda21,zhang21}. At the same time, clustering covariance-type matrices is challenging because of the positive-definite constraint and the non-Euclidean geometry of the SPD manifold. We next review several existing approaches to clustering covariance-type matrices, with particular attention to methods relevant to brain functional connectivity.

\subsection{Background}\label{sec:related_work}

Many clustering methods proceed by specifying a notion of dissimilarity or similarity between observations and then grouping data based on pairwise distances \citep{irani16, van19}. Classical examples include \(k\)-means \citep{ikotun23} and hierarchical clustering \citep{murtagh12}, as well as graph- and kernel-based methods such as spectral clustering \citep{von07}. When the observations are covariance-type matrices, na\"ive Euclidean distances applied to vectorized entries may fail to capture the geometry of the observations. This has motivated substantial work on non-Euclidean distances for such data \citep{dryden09,jaya13,yin16}. Given such distances, standard distance-based pipelines can be extended to cluster covariance-type matrices, but their performance and interpretability depend sensitively on the chosen distance. Moreover, distance-based methods typically define clusters through an optimization criterion, without assuming an explicit generative model. As a result, they do not naturally support population-level inference, which is important in scientific settings such as phenotyping brain functional connectivity.

An alternative is model-based clustering \citep{bouveyron14,gormley23}, which formulates clustering through an explicit generative model, often via a mixture of distributions, so that latent components induce a partition of the observations. Within this framework, finite mixture models (FMMs) \citep{mclachlan19} assume a fixed number of components \(K\), which is typically interpreted as the number of clusters in clustering applications. For covariance-type matrix-valued observations, FMM-based clustering requires choosing suitable component distributions supported on the space of SPD matrices. The Wishart family provides a natural choice and leads to computationally convenient inference \citep{nydick12}. \citet{hidot10} proposed a Wishart finite mixture model for clustering covariance matrices and developed an expectation–maximization (EM) algorithm for estimation. More recently, \citet{capp25} introduced a penalized Wishart mixture model with component-specific sparsity structures to accommodate high-dimensional and sparse covariance-type matrices, and applied it to brain functional connectivity data.

Despite these advantages, FMM-based clustering has an important limitation: the number of components \(K\) must be specified before model fitting. When the number of clusters is unknown, which is common in practice, one typically fits models over a range of \(K\) values and then selects an optimal value by model comparison \citep{gormley23}. However, this post hoc procedure yields only a point estimate and does not automatically propagate uncertainty about \(K\) into downstream clustering summaries. This limitation motivates Bayesian formulations that infer the number of clusters jointly with assignments and component-specific parameters within a unified probabilistic framework.

One Bayesian method is based on Bayesian nonparametrics (BNP), and the Dirichlet process mixture (DPM) is arguably the most widely used example \citep{muller15,teh11}. Under a DPM, the number of clusters is inferred from the data together with the clustering assignments and can grow with sample size. \citet{cherian15} first combined DPM with Wishart kernels for clustering SPD matrices, using the Wishart--inverse-Wishart conjugate pair together with an efficient collapsed Gibbs sampler for posterior inference. \citet{tokuda21} later extended this approach to multiple clustering for functional connectivity data, with the goal of jointly identifying subject clusters and associated brain sub-networks.

Despite its flexibility, \citet{miller13,miller14} showed that the posterior estimate of the number of clusters under a DPM can be inconsistent when the true data-generating mixture has finitely many components, even in simple settings such as univariate normal mixtures with known unit variance. In practice, DPM-based clustering may produce extraneous small clusters, which can hinder interpretation of the inferred partition \citep{wade23}. This phenomenon can also be observed in its application to functional connectivity. For example, the clustering results reported in \citet{tokuda21} contain many small clusters that are difficult to interpret.

To address this issue, \citet{miller18} introduced the mixture-of-finite-mixtures (MFM) framework, which places a prior directly on the number of mixture components by treating \(K\) as a random variable. Unlike the DPM, the MFM assumes a finite number of mixture components almost surely and can consistently estimate the number of clusters under suitable regularity conditions. \citet{guha21} further showed that MFM achieves an optimal posterior contraction rate for the mixing measure under mild conditions. Despite these attractive properties, to our knowledge no existing work has developed an MFM-type model for clustering covariance matrices while also establishing posterior consistency for the number of clusters and posterior contraction, except for one loosely related study that combines an MFM prior with matrix normal distributions to cluster field-goal attempt patterns of professional basketball players \citep{yin23}. Motivated by this gap, we propose an MFM--Wishart model for clustering covariance-type matrices, with a particular focus on matrix-valued brain functional connectivity data.

\subsection{Contributions and organization of the paper}\label{sec:contribution}

This paper contributes to Bayesian model-based clustering of covariance-type matrices, with an application to covariance- and correlation-based brain functional connectivity. We propose the MFM--Wishart model, a Wishart mixture model equipped with a MFM prior, which enables joint posterior inference on the number of clusters and the clustering assignments. We further study the theoretical properties of the Wishart kernel in the context of mixture models and, under standard regularity conditions, establish posterior consistency for the number of clusters as well as posterior contraction of the mixing measure for the proposed MFM--Wishart model. For inference, we develop an efficient Markov chain Monte Carlo (MCMC) algorithm that exploits Wishart--inverse-Wishart conjugacy and the partition representation induced by the MFM prior. We evaluate the empirical performance of the method through simulation studies and apply it to infants' brain functional connectivity during sleep, estimated from fNIRS data, where the method identifies interpretable cluster-specific connectivity patterns.

The remainder of the paper is organized as follows. Section~\ref{sec:method} presents the proposed MFM--Wishart model and discusses its advantages over DPM-based approaches. Section~\ref{sec:theory} establishes theoretical properties of the proposed model. Section~\ref{sec:inference} develops an MCMC algorithm for posterior inference. Section~\ref{sec:simulation} reports extensive simulation studies comparing the proposed method with DPM- and FMM-based models as well as distance-based clustering methods. Section~\ref{sec:application} presents the application to infant fNIRS functional connectivity. Section~\ref{sec:conclusion} concludes with a discussion and directions for future research. Proofs of the theoretical results and additional experiments are provided in the Appendices.

\section{Method}\label{sec:method}

\subsection{Wishart distribution}\label{sec:wishart}

We work with SPD matrices in \(\mathbb S_{++}^p\), the cone of \(p\times p\) SPD matrices. Given the scale matrix \(\mathbf{\Sigma}\in\mathbb S_{++}^p\) and degrees-of-freedom \(\nu>p-1\), we write \(\mathbf{W}\sim\mathcal W_p(\mathbf{\Sigma},\nu)\) if \(\mathbf{W}\) admits the density
\begin{equation}
    f(\mathbf{W}\mid \mathbf{\Sigma},\nu)
    =
    \frac{1}{2^{\nu p/2}|\mathbf{\Sigma}|^{\nu/2}\Gamma_p\!\left(\nu/2\right)}
    |\mathbf{W}|^{(\nu-p-1)/2}
    \exp\!\left\{-\frac{1}{2}\tr\!\left(\mathbf{\Sigma}^{-1}\mathbf{W}\right)\right\},
    \qquad \mathbf{W}\in\mathbb S_{++}^p.
    \label{eq:wishart_density}
\end{equation}
Here \(\Gamma_p(a):=\pi^{p(p-1)/4}\prod_{i=1}^p\Gamma\!\left(a-\frac{i-1}{2}\right)\) is the multivariate gamma function (defined for \(a>(p-1)/2\)). The mean satisfies \(\mathbb E(\mathbf{W})=\nu\,\mathbf{\Sigma}\), and for any \(1\le i,j,k,\ell\le p\),
\(
\operatorname{Cov}((\mathbf{W})_{ij},(\mathbf{W})_{k\ell})
=\nu\Big((\mathbf{\Sigma})_{ik}(\mathbf{\Sigma})_{j\ell}+(\mathbf{\Sigma})_{i\ell}(\mathbf{\Sigma})_{jk}\Big),
\)
where \((\mathbf W)_{ij}\) denotes the \((i,j)\)-th element of \(\mathbf W\). If \(\boldsymbol{x}_1,\dots,\boldsymbol{x}_T\in\mathbb R^p\) are i.i.d.\ \(\mathcal N(\boldsymbol{\mu},\mathbf{\Sigma})\), \(\mathbf{S}\sim\mathcal W_p(\mathbf{\Sigma},T-1)\) with  \(\bar{\boldsymbol{x}}=\frac{1}{T}\sum_{t=1}^T \boldsymbol{x}_t\) and \(\mathbf{S}=\sum_{t=1}^T (\boldsymbol{x}_t-\bar{\boldsymbol{x}})(\boldsymbol{x}_t-\bar{\boldsymbol{x}})^\top\). Equivalently, the sample covariance \(\widehat{\mathbf{\Sigma}}=\mathbf{S}/(T-1)\) satisfies \((T-1)\,\widehat{\mathbf{\Sigma}}\sim\mathcal W_p(\mathbf{\Sigma},(T-1))\) and \(\mathbb E(\widehat{\mathbf{\Sigma}})=\mathbf{\Sigma}\). This provides the classical motivation for using a Wishart likelihood for covariance-type observations. 

In covariance-based functional connectivity, each observation is typically computed from a multivariate time series \(\{\boldsymbol{x}_t\}_{t=1}^T\) across multiple regions of interest (ROIs) or channels, which often exhibits positive temporal autocorrelation. Consequently, \(\mathbf{S}\) and \(\widehat{\mathbf{\Sigma}}\) are generally not exactly Wishart, even under Gaussianity. Nevertheless, if \(\mathbb E(\boldsymbol{x}_t)=\boldsymbol{0}\) and \(\operatorname{Cov}(\boldsymbol{x}_t)=\mathbf{\Sigma}\) for all \(t\), then the sample covariance remains unbiased, i.e., \(\mathbb E(\widehat{\mathbf{\Sigma}})=\mathbf{\Sigma}\), whereas temporal dependence primarily affects the second-order fluctuations of \(\widehat{\mathbf{\Sigma}}\). In Appendix~\ref{app:temporal_moments}, we derive the second moments of \(\widehat{\mathbf{\Sigma}}\) under a jointly Gaussian, second-order stationary model and show that positive autocorrelation can inflate \(\operatorname{Cov}(\widehat{\mathbf{\Sigma}}_{ij},\widehat{\mathbf{\Sigma}}_{k\ell})\) relative to the i.i.d.\ case.

Motivated by this, we can treat \(\nu\) as unknown and infer it from the data, allowing the likelihood to adapt to the levels of temporal dependence. In this case, one can interpret \(\nu\) as an effective sample size, and \(\nu\) is often estimated to be much smaller than \(T\) when temporal autocorrelation is strong and positive, as shown in Appendix~\ref{app:temporal_moments}. Similar approaches using Wishart likelihoods have been studied previously for functional connectivity, with \(\nu\) interpreted as an effective sample size; see, for example, \citet{niel17}, \citet{tokuda21}, and \citet{seil17}.

\subsection{Mixture of finite mixtures with Wishart kernels}\label{sec:MFM--Wishart}

The preceding discussion motivates the use of Wishart densities as working likelihoods for covariance-type matrices. For clustering, a single Wishart distribution is insufficient because the observed matrices may arise from multiple latent subpopulations with distinct covariance or connectivity patterns. This motivates the use of Wishart densities as component kernels in a mixture model. Let \(\mathbf{W}_1,\ldots,\mathbf{W}_n\) denote \(n\) observed covariance-type matrices in \(\mathbb{S}_{++}^p\), and suppose that they arise from \(K\) mixture components. Let \(z_i\in\{1,\ldots,K\}\) denote the cluster label for observation \(i\). In this model-based clustering framework, clustering amounts to inferring the labels \(z_1,\ldots,z_n\).

We note that, with a finite number of \(n\) samples, it is possible that only \(K_{+,n}\) of the \(K\) components are occupied by the observations, with \(K_{+,n}\le K\). From now on, we will refer to \(K_{+,n}\) as the number of occupied components, or simply, the number of clusters in the data. When \(K\) is unknown, one must also infer it and thus the number of clusters \(K_{+,n}\). Bayesian methods provide a natural way to infer the number of clusters and the clustering assignments jointly. A widely used model is the DPM, which induces a prior on the clustering assignments through the Chinese restaurant process (CRP):
\begin{equation}\label{eq:crp}
    p(z_i=c\mid z_1,\dots,z_{i-1})=\begin{cases}
        \frac{|c|}{i-1+\alpha},\quad\text{if \(c\) is an existing cluster},\\
        \frac{\alpha}{i-1+\alpha},\quad \text{if \(c\) is a new cluster},
    \end{cases}
\end{equation}
where \(|c|\) denotes the size of an existing cluster \(c\) in \(\{z_1,\dots,z_{i-1}\}\), and \(\alpha>0\) is the concentration parameter controlling the probability of creating a new cluster. When \(\alpha\) is small, observation \(i\) is more likely to be assigned to an existing cluster; when \(\alpha\) is large, it is more likely to initiate a new cluster. In addition, larger existing clusters receive higher prior mass, producing the ``rich-get-richer'' effect. The prior in \eqref{eq:crp} assigns positive mass to partitions with countably many clusters. For any finite sample, however, only finitely many clusters can exist, and this number can thus be interpreted as the inferred number of clusters in the data.

As mentioned in Section~\ref{sec:related_work}, the CRP prior in \eqref{eq:crp} can produce extraneous small clusters and the posterior estimate of the number of clusters can be inconsistent when the true data-generating mixture has finitely many components. In many applications, it is more natural to assume that the number of components is finite but unknown. For example, under a fixed experimental condition, one may expect only finitely many biologically meaningful brain activity patterns. In such settings, a prior that assumes infinitely many components may be less appropriate. To address this issue, the MFM was introduced \citep{miller18}, which is specified hierarchically as
\begin{equation}\label{eq:mfm}
\begin{aligned}
K &\sim p_K,\\
\boldsymbol{\pi}=(\pi_1,\dots,\pi_k)\mid K=k &\sim \mathrm{Dirichlet}(\gamma,\dots,\gamma), \\
z_i\mid \boldsymbol{\pi}, K=k &\stackrel{\mathrm{i.i.d.}}{\sim} \mathrm{Discrete}(\pi_1,\dots,\pi_k), \qquad i=1,\dots,n,
\end{aligned}
\end{equation}
where \(p_K\) denotes a prior distribution supported on the positive integers, \(\mathrm{Dirichlet}(\gamma,\dots,\gamma)\) is a symmetric Dirichlet distribution with concentration parameter \(\gamma>0\), and \(\mathrm{Discrete}(\pi_1,\dots,\pi_k)\) denotes a categorical distribution over \(\{1,\dots,k\}\) with probability vector \(\boldsymbol{\pi}\).

Like the CRP in \eqref{eq:crp}, the MFM prior also induces the following prior full conditional for the cluster labels \(\{z_i\}_{i=1}^n\) \citep{miller18}:
\begin{equation}\label{eq:prior_mfm}
    p(z_i=c\mid \boldsymbol{z}_{-i})\propto\begin{cases}
        |c|+\gamma,&\text{if \(c\) is an existing cluster},\\
        \frac{V_n\left(K^*+1\right)}{V_n\left(K^*\right)}\gamma,\ &\text{if \(c\) is a new cluster},
    \end{cases}
\end{equation}
where \(\boldsymbol{z}_{-i} = (z_1,\dots,z_{i-1},z_{i+1},\dots,z_n)^\top\), and \(K^*\) is the number of clusters in \(\boldsymbol{z}_{-i}\). The term \(V_n(t)\) is defined as
\begin{equation}
    V_n(t) = \sum_{k=1}^{\infty}\frac{k_{(t)}}{(\gamma k)^{(n)}}p_K(k),
\end{equation}
where \(k^{(m)}= k(k + 1)\cdots(k + m - 1)\) and \(k_{(m)}= k(k - 1)\cdots(k - m + 1)\), with \(k^{(0)}=k_{(0)}=1\). Compared with the CRP prior in \eqref{eq:crp}, the ratio \(V_n\left(K^*+1\right)/V_n\left(K^*\right)\) tends to slow the creation of new clusters and hence helps avoid many small extraneous clusters. In practice, the infinite sum defining \(V_n(t)\) is truncated at a sufficiently large value of \(k\) so that the omitted tail probability is negligible.

As argued in \citet{miller18}, relative to the DPM, the MFM places a prior directly on the finite but unknown number of mixture components, making prior beliefs about cluster complexity more transparent and easier to calibrate. In contrast, under a DPM the induced prior behavior on the number of components is controlled only indirectly through the concentration parameter and is also sensitive to the sample size \(n\). 

By combining the MFM prior with Wishart kernels, we propose the MFM--Wishart model:
\begin{equation}\label{eq:mfm_wishart_model}
\begin{aligned}
K &\sim p_K,\\
\boldsymbol{\pi}=(\pi_1,\ldots,\pi_k)\mid K=k &\sim \mathrm{Dirichlet}\!\left(\gamma,\ldots,\gamma\right),\\
z_i \mid \boldsymbol{\pi},K=k &\stackrel{\mathrm{i.i.d.}}{\sim} \mathrm{Discrete}(\pi_1,\ldots,\pi_k),\qquad i=1,\ldots,n,\\
\mathbf{\Sigma}_1,\ldots,\mathbf{\Sigma}_k \mid K=k &\stackrel{\mathrm{i.i.d.}}{\sim} \mathcal{IW}_p(\mathbf{\Psi}_0,\kappa_0),\\
\nu_1,\ldots,\nu_k \mid K=k &\stackrel{\mathrm{i.i.d.}}{\sim} \mathrm{Uniform}(\nu_L,\nu_U),\\
\mathbf{W}_i \mid z_i,\{(\mathbf{\Sigma}_j,\nu_j)\}_{j=1}^k &\sim \mathcal{W}_p(\mathbf{\Sigma}_{z_i},\nu_{z_i}),\qquad i=1,\ldots,n,
\end{aligned}
\end{equation}
where \(\mathcal{IW}_p(\mathbf{\Psi}_0,\kappa_0)\) denotes an inverse-Wishart distribution with hyperparameters \(\mathbf{\Psi}_0\in\mathbb{S}_{++}^p\) and \(\kappa_0>p-1\), and \(\mathrm{Uniform}(\nu_L,\nu_U)\) denotes the uniform distribution on \([\nu_L,\nu_U]\), with \(p-1<\nu_L<\nu_U<\infty\). The inverse-Wishart prior is chosen because it is conjugate to the Wishart likelihood for the scale matrix, which facilitates efficient posterior sampling in Section~\ref{sec:inference}. The inverse-Wishart density is parameterized as follows:
\begin{equation}
p(\mathbf{\Sigma}\mid\mathbf{\Psi}_0,\kappa_0)=\frac{|\mathbf{\Psi}_0|^{\kappa_0/2}}{2^{\kappa_0p/2}\Gamma_p\left(\kappa_0/2\right)}|\mathbf{\Sigma}|^{-(\kappa_0+p+1)/2}\exp{\left\{-\frac{1}{2}\tr\left(\mathbf{\Psi}_0\mathbf{\Sigma}^{-1}\right)\right\}},\quad \mathbf{\Sigma}\in \mathbb{S}_{++}^p.
\end{equation}

For \(p_K\), we use a Poisson distribution with parameter \(\lambda\) shifted to the positive integers, that is, \(K-1\sim\mathrm{Poisson}(\lambda)\). Following \citet{miller18}, we set \(\gamma=1\) and \(\lambda=1\).

For model~(\ref{eq:mfm_wishart_model}), both the scale matrix \(\mathbf{\Sigma}_k\) and the degrees-of-freedom parameter \(\nu_k\) are component-specific. A natural specialization of model~(\ref{eq:mfm_wishart_model}) is obtained by imposing a shared degrees-of-freedom parameter across mixture components,
\(
\nu_1=\cdots=\nu_K=\nu .
\)
This variant is particularly well suited to functional-connectivity studies in which subject-level covariance matrices are estimated from time series collected under a common protocol, generally with the same or comparable length. Under such conditions, it is natural to use a shared \(\nu\) to capture the overall variability induced by estimating connectivity matrices from time series. This specialization also yields a more parsimonious model, which helps improve the practical identifiability and interpretability of cluster-specific connectivity patterns by reducing potential confounding between component-specific \(\mathbf{\Sigma}_k\) and \(\nu_k\). Such a shared degrees-of-freedom parameter has also been used in previous Wishart mixture models, including DPM-based models for SPD matrices \citep{cherian15} and applications to brain functional connectivity \citep{tokuda21}. Beyond mixture modeling, related Wishart-type Bayesian models for group-level brain-connectivity analysis have also used similar shared degrees-of-freedom parameters across observations \citep{marr06,marr08,seil17}.

\section{Theoretical Results}\label{sec:theory}

In this section, we study Wishart kernels in the context of mixture models and establish large-sample theory for posterior consistency of the number of mixture components and posterior contraction of the mixing measure for our MFM--Wishart model formulated in (\ref{eq:mfm_wishart_model}). We focus on a reparameterization of \(\mathbf{\Sigma}\) by the precision matrix \(\mathbf{\Lambda}=\mathbf{\Sigma}^{-1}\). Following a common restriction in the theoretical analysis of MFM models \citep{guha21}, we develop our theory on the following compact parameter space:
\begin{equation}\label{eq:Theta_star}
\Theta^*
=
\Big\{\mathbf{\Lambda}\in \mathbb{S}_{++}^p:\ \underline{\lambda}\mathbf{I}_p\preceq \mathbf{\Lambda}\preceq \overline{\lambda}\mathbf{I}_p\Big\}
\times [\underline{\nu},\,\overline{\nu}],
\end{equation}
where \(\underline{\lambda}\), \(\overline{\lambda}\), 
\(\underline{\nu}\), and \(\overline{\nu}\) are fixed constants satisfying
\(
0<\underline{\lambda}<\overline{\lambda}<\infty
\)
and
\(
p+1<\underline{\nu}<\overline{\nu}<\infty .
\)
We write \(\mathbf{I}_p\) for the \(p\times p\) identity matrix. For symmetric matrices \(\mathbf{A}\) and \(\mathbf{B}\), \(\mathbf{A}\preceq \mathbf{B}\) means that \(\mathbf{B}-\mathbf{A}\) is positive semidefinite. 

To work with Euclidean distances, we identify each precision matrix \(\mathbf{\Lambda}\in\mathbb{S}_{++}^p\) with its half-vectorization \(\boldsymbol{\eta}:=\vech(\mathbf{\Lambda})\in\mathbb{R}^{p(p+1)/2}\), which stacks the lower-triangular entries of \(\mathbf{\Lambda}\) into a vector. Let \(d=p(p+1)/2\) and define the component parameter by \(\boldsymbol{\theta}=\big(\boldsymbol{\eta}^\top,\nu\big)^\top \in \mathbb{R}^{d+1}\). With a slight abuse of notation, we view \(\Theta^*\) as a compact subset of \(\mathbb{R}^{d+1}\) through this reparameterization. For \(\boldsymbol{\theta}=(\boldsymbol{\eta}^\top,\nu)^\top\) and \(\boldsymbol{\theta}'=(\boldsymbol{\eta}'^\top,\nu')^\top\), we use the Euclidean metric \(\|\boldsymbol{\theta}-\boldsymbol{\theta}'\|_2 := \left(\|\boldsymbol{\eta}-\boldsymbol{\eta}'\|_2^2 + |\nu-\nu'|^2\right)^{1/2}\).

Let \(G_0=\sum_{j=1}^{k_0}\pi_j^0\,\delta_{\boldsymbol{\theta}_j^0}\) be the true but unknown discrete mixing measure with unknown but fixed \(k_0\) support points, where  \(\delta_{\boldsymbol{\theta}}\) denotes a point mass at \(\boldsymbol{\theta}\) and \(\boldsymbol{\theta}_j^0\in \Theta^*\) for all \(j=1,\dots,k_0\). Let \(f(\mathbf{W}\mid\boldsymbol{\theta})\) denote the Wishart density under this precision parameterization. We denote the corresponding data-generating distribution by \(P_{G_0}\), with the density \(p_{G_0}(\mathbf{W})=\sum_{j=1}^{k_0}\pi_j^0\,f(\mathbf{W}\mid\boldsymbol{\theta}_j^0)\), which generates i.i.d.\ samples \(\mathbf{W}_1,\dots,\mathbf{W}_n\).

Next, we introduce the Wasserstein distance between two mixing measures. Consider another mixing measure \(G=\sum_{i=1}^{k}\pi_i\,\delta_{\boldsymbol{\theta}_i}\) with \(\boldsymbol{\theta}_i\in \Theta^*\) for all \(i=1,\dots,k\). A coupling of \(\boldsymbol{\pi}=(\pi_1,\dots,\pi_k)\) and \(\boldsymbol{\pi}^0=(\pi_1^0,\dots,\pi_{k_0}^0)\) is any joint distribution on \(\{1,\dots,k\}\times\{1,\dots,k_0\}\) with marginals \(\boldsymbol{\pi}\) and \(\boldsymbol{\pi}^0\). We represent a coupling by a matrix \(\mathbf{Q}=(q_{ij})_{1\le i\le k,\ 1\le j\le k_0}\in[0,1]^{k\times k_0}\) satisfying \(\sum_{i=1}^{k} q_{ij}=\pi^0_j\), \(j=1,\dots,k_0\), and \(\sum_{j=1}^{k_0} q_{ij}=\pi_i\), \(i=1,\dots,k\). Let \(\mathcal{Q}(\boldsymbol{\pi},\boldsymbol{\pi}^0)\) denote the collection of all such matrices \(\mathbf{Q}\). For any \(r\ge1\), we denote the order-\(r\) Wasserstein distance between \(G\) and \(G_0\) by
\begin{equation}\label{eq:Wasserstein_def}
W_r(G,G_0) = \inf_{\mathbf{Q}\in \mathcal{Q}(\boldsymbol{\pi},\boldsymbol{\pi}^0)}
\left(\sum_{i=1}^{k}\sum_{j=1}^{k_0}q_{ij}\|\boldsymbol{\theta}_i-\boldsymbol{\theta}_j^0\|_2^r\right)^{1/r}.
\end{equation}

We begin by presenting two lemmas that state some properties of the class of restricted Wishart kernels \(\{f(\mathbf{W}\mid \boldsymbol{\theta}):\,\boldsymbol{\theta}\in\Theta^*\}\). Lemma~\ref{lem:first_order_identifiability} establishes first-order identifiability, and Lemma~\ref{lem:first_order_uniform_lipschitz} establishes the first-order uniform Lipschitz property. These two results are key ingredients for obtaining posterior consistency for the number of mixture components \(K\) and posterior contraction of the mixing measure \(G\) under our MFM--Wishart model. We note that these properties have been studied for many kernels in the mixture-model literature \citep{guha21, ho16}. However, to the best of our knowledge, they have not been established for Wishart kernels.

\begin{lemma}[First-order identifiability]
\label{lem:first_order_identifiability}
Assume \(p\ge2\). For any \(k\ge1\), assume \(\boldsymbol{\theta}_1, \boldsymbol{\theta}_2,\dots, \boldsymbol{\theta}_k\in\Theta^*\) are distinct, and that \(\{\alpha_i\}_{i=1}^k\subset\mathbb{R}\) and \(\{\boldsymbol{\beta}_i\}_{i=1}^k\subset\mathbb{R}^{d+1}\). If 
\begin{equation}
\sum^{k}_{i=1}\alpha_i f(\mathbf{W}\mid \boldsymbol{\theta}_i) + \sum^k_{i=1}\boldsymbol{\beta}_i^\top \nabla_{\boldsymbol{\theta}} f(\mathbf{W}\mid \boldsymbol{\theta}_i)=0
\qquad\text{for almost all }\mathbf{W}\in\mathbb{S}_{++}^p,
\end{equation}
then \(\alpha_i=0\) and \(\boldsymbol{\beta}_i=\boldsymbol{0}\) for \(i=1,2,\dots,k\).
\end{lemma}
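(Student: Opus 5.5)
We prove the statement by writing every term explicitly and then separating scales and exponents along rays $\mathbf W=s\mathbf U$. In the precision parameterization,
\[
f(\mathbf W\mid\boldsymbol\theta)=c(\mathbf\Lambda,\nu)\,|\mathbf W|^{(\nu-p-1)/2}\exp\{-\tfrac12\tr(\mathbf\Lambda\mathbf W)\},
\qquad
c(\mathbf\Lambda,\nu)=\frac{|\mathbf\Lambda|^{\nu/2}}{2^{\nu p/2}\Gamma_p(\nu/2)} .
\]
Differentiating $\log f$ gives the following. The derivative in a $\vech$ coordinate $\eta_{ab}$ equals $\tfrac{\nu}{2}\,\partial_{\eta_{ab}}\log|\mathbf\Lambda|-\tfrac12\,\partial_{\eta_{ab}}\tr(\mathbf\Lambda\mathbf W)$, which is affine in the entries of $\mathbf W$. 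The derivative in $\nu$ equals $\partial_\nu\log c+\tfrac12\log|\mathbf W|$.

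Hence the hypothesis becomes
\[
\sum_{i=1}^k c_i\,|\mathbf W|^{(\nu_i-p-1)/2}e^{-\tr(\mathbf\Lambda_i\mathbf W)/2}\big(A_i+\tr(\mathbf B_i\mathbf W)+b_i\log|\mathbf W|\big)=0 .
\]
The coefficients are determined by the data of the lemma:
\begin{itemize}
\item $b_i=\beta_i^{\nu}/2$;
\item $\mathbf B_i$ is the symmetric matrix with $\tr(\mathbf B_i\mathbf W)=-\tfrac12\boldsymbol\beta_i^{\eta\top}\nabla_{\boldsymbol\eta}\tr(\mathbf\Lambda\mathbf W)$, a linear bijective image of $\boldsymbol\beta_i^{\eta}$;
\item $A_i$ equals $\alpha_i$ plus terms linear in $\boldsymbol\beta_i$.
\end{itemize}
Since the left side is continuous on $\mathbb S_{++}^p$, the identity holds for \emph{all} $\mathbf W$, not only almost all. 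It therefore suffices to show $A_i=b_i=0$ and $\mathbf B_i=\mathbf 0$ for all $i$. Indeed, $b_i=0$ gives $\beta_i^\nu=0$, then $\mathbf B_i=\mathbf 0$ gives $\boldsymbol\beta_i^\eta=\boldsymbol 0$, and then $A_i=\alpha_i=0$.

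\emph{Step 1: separate distinct precision matrices.} Restrict to rays $\mathbf W=s\mathbf U$ with $s>0$ and $|\mathbf U|=1$. Each term becomes
\[
c_i\,s^{p(\nu_i-p-1)/2}e^{-s\,\tr(\mathbf\Lambda_i\mathbf U)/2}\big(A_i+s\tr(\mathbf B_i\mathbf U)+p\,b_i\log s\big).
\]
For distinct $\mathbf\Lambda_i\neq\mathbf\Lambda_j$, the set of $\mathbf U$ with $\tr((\mathbf\Lambda_i-\mathbf\Lambda_j)\mathbf U)=0$ is a hyperplane section. So on an open dense subset of $\{|\mathbf U|=1\}$ the rates $\tr(\mathbf\Lambda_i\mathbf U)$ are distinct across distinct $\mathbf\Lambda$'s. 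Functions $s^{a}(\log s)^m e^{-\rho s}$ with distinct $\rho$ are linearly independent: letting $s\to\infty$, the slowest-decaying exponential dominates, so its group sum must vanish identically, and one inducts. Hence for each fixed value $\mathbf\Lambda$ the group sum over indices $i$ with $\mathbf\Lambda_i=\mathbf\Lambda$ vanishes, first for generic $\mathbf U$ and then by continuity for all $\mathbf U$. Within such a group the $\nu_i$ are distinct, because the $\boldsymbol\theta_i$ are distinct.

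\emph{Step 2: separate degrees of freedom within a group; this is the main obstacle.} After dividing out the common exponential, we must show that
\[
\sum_j c_j\, s^{p\nu_j/2}\big(A_j+s\,\tr(\mathbf B_j\mathbf U)+p\,b_j\log s\big)\equiv0
\]
forces all coefficients to vanish. The functions $s^{e}$ and $s^{e}\log s$ are linearly independent over distinct exponents $e$. The $\log s$ terms carry exponents $p\nu_j/2$, which are distinct, so $b_j=0$ for all $j$.

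The delicate point is that an exponent $p\nu_j/2+1$ may coincide with some $p\nu_l/2$ (when $\nu_l-\nu_j=2/p$). In that case we only learn $c_jA_j\cdot[\cdot]+c_l$-type cancellations of the form
\[
c_l A_l+c_j\tr(\mathbf B_j\mathbf U)=0\qquad\text{for all }\mathbf U\text{ with }|\mathbf U|=1 .
\]
Here we use $p\ge2$. The constraint set $\{|\mathbf U|=1\}$ contains $\mathbf U=\mathrm{diag}(t,t^{-1},1,\dots,1)$ rotated arbitrarily, so $\tr(\mathbf B\mathbf U)$ is constant on it only if $\mathbf B=\mathbf 0$. (For $p=1$ this fails, which explains the assumption.) Thus every $\mathbf B_j=\mathbf 0$. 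The remaining terms $c_jA_js^{p\nu_j/2}$ have distinct exponents, so every $A_j=0$.

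Unwinding the coefficient map as described in the opening paragraph yields $\alpha_i=0$ and $\boldsymbol\beta_i=\boldsymbol 0$ for all $i$.

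The main care needed is in Step 2: tracking the possible exponent coincidences and verifying the nondegeneracy claim on $\{|\mathbf U|=1\}$, together with the elementary asymptotic linear-independence lemma used in Step 1.
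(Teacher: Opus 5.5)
Your proof is correct and follows essentially the same route as the paper. Both extend the identity to all of $\mathbb{S}_{++}^p$ by continuity, separate distinct precision matrices by exponential rates along rays $\mathbf{W}=s\mathbf{U}$, and then, within a group sharing $\mathbf{\Lambda}$, use distinct powers of $s$ together with determinant-preserving variations of $\mathbf{U}$ (where $p\ge2$ enters) to kill the trace, logarithmic and constant coefficients. The difference is only bookkeeping: you normalize $|\mathbf{U}|=1$ and handle the possible exponent coincidences $\nu_l-\nu_j=2/p$ explicitly, which also recovers the paper's $p=1$ counterexample with $\nu_l=\nu_j+2$; the paper instead first removes the $a_i,b_i$ terms by subtracting the identity at two points $\mathbf{U},\mathbf{V}$ of equal determinant, so the powers $t^{ps_i+1}$ separate with no coincidence to track.
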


\begin{remark}
The condition \(p\ge2\) in Lemma~\ref{lem:first_order_identifiability} is not merely a technical artifact. When \(p=1\), the Wishart family reduces to a Gamma family, for which first-order identifiability can fail unless the parameter space is further (unreasonably) restricted; see Remark~\ref{rem:p1_gamma_failure} in the Appendix for a simple counterexample.
\end{remark}

\begin{lemma}[First-order uniform Lipschitz property]
\label{lem:first_order_uniform_lipschitz}
There exist constants \(\delta>0\) and \(C<\infty\), independent of \(\mathbf{W}\), \(\boldsymbol{\theta}_1\), and \(\boldsymbol{\theta}_2\), such that for any \(\boldsymbol{\theta}_1,\boldsymbol{\theta}_2\in\Theta^*\) and any \(\mathbf{W}\in\mathbb{S}_{++}^p\),
\begin{equation}
\|\nabla_{\boldsymbol{\theta}}f(\mathbf{W}\mid\boldsymbol{\theta}_1)
-
\nabla_{\boldsymbol{\theta}}f(\mathbf{W}\mid\boldsymbol{\theta}_2)\|_2
\le
C\|\boldsymbol{\theta}_1-\boldsymbol{\theta}_2\|_2^\delta.
\end{equation}
\end{lemma}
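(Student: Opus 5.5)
Since \(\Theta^*\) is compact and convex (in the \(\boldsymbol{\theta}\)-coordinates), I will prove the stronger statement with \(\delta=1\). The plan is to apply the mean value theorem along the segment \(\boldsymbol{\theta}_t=\boldsymbol{\theta}_2+t(\boldsymbol{\theta}_1-\boldsymbol{\theta}_2)\), which stays in \(\Theta^*\) because both the eigenvalue constraints \(\underline{\lambda}\mathbf{I}_p\preceq\mathbf{\Lambda}\preceq\overline{\lambda}\mathbf{I}_p\) and the interval \([\underline{\nu},\overline{\nu}]\) are convex. This gives
\[
\|\nabla_{\boldsymbol{\theta}}f(\mathbf{W}\mid\boldsymbol{\theta}_1)-\nabla_{\boldsymbol{\theta}}f(\mathbf{W}\mid\boldsymbol{\theta}_2)\|_2\le \sup_{\boldsymbol{\theta}\in\Theta^*}\|\nabla^2_{\boldsymbol{\theta}}f(\mathbf{W}\mid\boldsymbol{\theta})\|_{\mathrm{op}}\,\|\boldsymbol{\theta}_1-\boldsymbol{\theta}_2\|_2 .
\]
It therefore suffices to show that the Hessian of \(f\) in \(\boldsymbol{\theta}\) is bounded uniformly over \(\mathbf{W}\in\mathbb{S}_{++}^p\) and \(\boldsymbol{\theta}\in\Theta^*\).

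In the precision parameterization, \(f(\mathbf{W}\mid\boldsymbol{\theta})=c(\nu)|\mathbf{\Lambda}|^{\nu/2}|\mathbf{W}|^{(\nu-p-1)/2}\exp\{-\tfrac12\tr(\mathbf{\Lambda}\mathbf{W})\}\), where \(c(\nu)=1/(2^{\nu p/2}\Gamma_p(\nu/2))\). Writing \(f=\exp(\ell)\), the Hessian is \(f\,(\nabla\ell\nabla\ell^\top+\nabla^2\ell)\). Differentiating \(\ell\) gives the following pieces:
\begin{itemize}
\item \(\partial_{\boldsymbol{\eta}}\ell\) consists of entries of \(\tfrac{\nu}{2}\mathbf{\Lambda}^{-1}-\tfrac12\mathbf{W}\), up to factors of 2 from the \(\vech\) map;
\item \(\partial_\nu\ell=(\log c)'(\nu)+\tfrac12\log|\mathbf{\Lambda}|+\tfrac12\log|\mathbf{W}|\);
\item \(\partial^2_{\boldsymbol{\eta}}\ell\) involves \(\mathbf{\Lambda}^{-1}\otimes\mathbf{\Lambda}^{-1}\);
\item \(\partial_\nu\partial_{\boldsymbol{\eta}}\ell=\tfrac12\mathbf{\Lambda}^{-1}\);
\item \(\partial^2_\nu\ell=(\log c)''(\nu)\), which involves trigamma functions.
\end{itemize}
On \(\Theta^*\), all quantities depending only on \(\boldsymbol{\theta}\) are bounded: \(\|\mathbf{\Lambda}^{-1}\|\le1/\underline{\lambda}\), \(|\log|\mathbf{\Lambda}||\) is bounded, and the digamma and trigamma terms are continuous on \([\underline{\nu},\overline{\nu}]\) since \(\underline{\nu}>p-1\). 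Consequently every Hessian entry is bounded by
\[
C_0\,f(\mathbf{W}\mid\boldsymbol{\theta})\bigl(1+\|\mathbf{W}\|_F^2+(\log|\mathbf{W}|)^2\bigr).
\]

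The main step, and the one I expect to be the real obstacle, is showing
\[
\sup_{\mathbf{W}\in\mathbb{S}_{++}^p,\ \boldsymbol{\theta}\in\Theta^*} f(\mathbf{W}\mid\boldsymbol{\theta})\bigl(1+\|\mathbf{W}\|_F^2+(\log|\mathbf{W}|)^2\bigr)<\infty .
\]
I would work in the eigenvalues \(w_1,\dots,w_p\) of \(\mathbf{W}\). Using \(\mathbf{\Lambda}\succeq\underline{\lambda}\mathbf{I}_p\), we get \(\tr(\mathbf{\Lambda}\mathbf{W})\ge\underline{\lambda}\sum_i w_i\). Together with \(|\mathbf{\Lambda}|\le\overline{\lambda}^p\) and the bound on \(c(\nu)\), this gives
\[
f\le C_1\prod_{i=1}^p w_i^{(\nu-p-1)/2}e^{-\underline{\lambda}w_i/2}.
\]
The weight is controlled by \(1+\sum_i w_i^2+\bigl(\sum_i|\log w_i|\bigr)^2\le C\prod_i\bigl(1+w_i^2+(\log w_i)^2\bigr)\). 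The bound then factorizes into univariate functions
\[
g(w)=w^{a}e^{-\underline{\lambda}w/2}\bigl(1+w^2+(\log w)^2\bigr),\qquad a=(\nu-p-1)/2 .
\]
Each \(g\) is bounded on \((0,\infty)\) uniformly for \(a\in[(\underline{\nu}-p-1)/2,(\overline{\nu}-p-1)/2]\), because the exponential kills growth at infinity and \(a>0\), guaranteed by the assumption \(\underline{\nu}>p+1\), kills \((\log w)^2\) at zero. This is precisely why the strict condition \(\underline{\nu}>p+1\), rather than \(\underline{\nu}>p-1\), appears in \(\Theta^*\).

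Combining these bounds gives a uniform constant \(C\) with \(\delta=1\), independent of \(\mathbf{W},\boldsymbol{\theta}_1,\boldsymbol{\theta}_2\). No earlier result of the paper is needed beyond the form of the Wishart density \eqref{eq:wishart_density}.
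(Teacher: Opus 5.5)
Your proposal is correct and follows essentially the same route as the paper: bound the Hessian $f\,(\nabla\ell\nabla\ell^\top+\nabla^2\ell)$ uniformly over $\mathbf{W}\in\mathbb{S}_{++}^p$ and $\boldsymbol{\theta}\in\Theta^*$, then integrate along the segment in the convex set $\Theta^*$ to obtain $\delta=1$. The only difference is in how the supremum of $f\,(1+\|\mathbf{W}\|_F^2+(\log|\mathbf{W}|)^2)$ is controlled: you factorize over the eigenvalues of $\mathbf{W}$ into bounded univariate functions, whereas the paper splits $|\mathbf{W}|^{(\nu-p-1)/2}\le|\mathbf{W}|^{\underline{\alpha}}+|\mathbf{W}|^{\overline{\alpha}}$ and applies its AM--GM Lemma~\ref{eq:amgm}, stated in terms of $\tr(\mathbf{W})$ and $|\mathbf{W}|$; both arguments rely on $\underline{\nu}>p+1$, as you note.
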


We now state the assumptions used in Theorem~\ref{thm:contraction}.

\begin{assumption}[Data-generating distribution]
\label{ass:data_generating}
The observations \(\mathbf{W}_1,\dots,\mathbf{W}_n\in \mathbb{S}_{++}^p\) with \(p\ge 2\) are i.i.d. from the finite Wishart mixture distribution \(P_{G_0}\), where
\(
G_0=\sum_{j=1}^{k_0}\pi_j^0\delta_{\boldsymbol{\theta}_j^0}
\)
has exactly \(k_0\) distinct support points \(\boldsymbol{\theta}_1^0,\dots,\boldsymbol{\theta}_{k_0}^0\in\Theta^*\). The true mixing weights satisfy \(\pi_j^0>0\) for \(j=1,\dots,k_0\) and \(\sum_{j=1}^{k_0}\pi_j^0=1\).
\end{assumption}

\begin{assumption}[Prior on the number of components]
\label{ass:prior_K}
The prior on \(K\) assigns positive mass to every positive integer, that is, \(p_K(k)>0\) for every \(k\in\mathbb{N}\). For example, the shifted Poisson prior \(K-1\sim\mathrm{Poisson}(\lambda)\), with \(\lambda>0\), satisfies this condition.
\end{assumption}

\begin{assumption}[Base prior on component parameters]
\label{ass:base_prior}
The induced base prior on \(\boldsymbol{\theta}=(\boldsymbol{\eta}^\top,\nu)^\top\), after restriction to \(\Theta^*\), has a continuous density with respect to Lebesgue measure that is strictly positive on \(\Theta^*\). We note that the priors specified in model~\eqref{eq:mfm_wishart_model}, after reparameterization and restriction to \(\Theta^*\), satisfy this condition.
\end{assumption}

Let \(\Pi_n(\cdot\mid \mathbf{W}_1,\dots,\mathbf{W}_n)\) be the posterior distribution obtained from the restricted version of model~\eqref{eq:mfm_wishart_model}, after reparameterizing by \((\mathbf{\Lambda},\nu)\) and restricting the component parameter space to \(\Theta^*\). The following theorem follows by verifying the regularity conditions of \citet[Theorem 3.1]{guha21} for the Wishart kernel on the compact parameter space \(\Theta^*\).

\begin{theorem}[Posterior consistency and contraction]
\label{thm:contraction}
Suppose Assumptions~\ref{ass:data_generating}--\ref{ass:base_prior} hold. For the MFM--Wishart model~(\ref{eq:mfm_wishart_model}) restricted to \(\Theta^*\), as \(n\rightarrow \infty\), we have
\begin{itemize}
\item[(a)] \(\Pi_n(K=k_0\mid \mathbf{W}_1,\dots,\mathbf{W}_n)\rightarrow 1\) almost surely under \(P_{G_0}\);
\item[(b)] there exists a constant \(C>0\), independent of \(n\), such that
\begin{equation}
\Pi_n\!\left(
W_1(G,G_0)\le C\left(\frac{\log n}{n}\right)^{1/2}
\ \middle|\ 
\mathbf{W}_1,\dots,\mathbf{W}_n
\right)
\rightarrow 1
\end{equation}
in \(P_{G_0}\)-probability.
\end{itemize}
\end{theorem}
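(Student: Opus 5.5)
The plan is to obtain both (a) and (b) as consequences of \citet[Theorem 3.1]{guha21}, specialized to the restricted MFM--Wishart model, by checking its hypotheses one at a time for the Wishart kernel on \(\Theta^*\). These hypotheses are:
\begin{itemize}
\item[(i)] the true mixing measure \(G_0\) has finitely many distinct atoms in a compact parameter space;
\item[(ii)] the prior on \(K\) charges every positive integer, and the base measure has a continuous density bounded away from zero on the compact parameter space;
\item[(iii)] the kernel family is first-order identifiable and satisfies the first-order uniform Lipschitz condition;
\item[(iv)] technical regularity conditions on the kernel that yield the prior-mass (Kullback--Leibler) and entropy conditions; in \citet{guha21} these are phrased through bounds such as \(K(f_{\boldsymbol{\theta}},f_{\boldsymbol{\theta}'})\lesssim\|\boldsymbol{\theta}-\boldsymbol{\theta}'\|_2^2\) and Hellinger/\(h\)-Lipschitz bounds uniformly over \(\Theta^*\).
\end{itemize}
Items (i) and (ii) are exactly Assumptions~\ref{ass:data_generating}--\ref{ass:base_prior}. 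Item (iii) is supplied by Lemmas~\ref{lem:first_order_identifiability} and~\ref{lem:first_order_uniform_lipschitz}; this is also where \(p\ge2\) is used. The first-order condition suffices because the MFM analysis of \citet{guha21} only needs the strong-identifiability inequality \(h(p_G,p_{G_0})\gtrsim W_1(G,G_0)\) for \(G\) having at most \(k_0\) atoms, together with the analogous inequality restricted to the exact-fitted case. Both follow from first-order identifiability and uniform Lipschitz continuity via the standard compactness-and-contradiction argument of \citet{ho16}.

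For (iv), I will compute directly with the Wishart density in the precision parameterization:
\[
\log f(\mathbf{W}\mid\boldsymbol{\theta})=\tfrac{\nu}{2}\log|\mathbf{\Lambda}|+\tfrac{\nu-p-1}{2}\log|\mathbf{W}|-\tfrac12\tr(\mathbf{\Lambda}\mathbf{W})-\tfrac{\nu p}{2}\log 2-\log\Gamma_p(\nu/2).
\]
On \(\Theta^*\) the terms \(\log|\mathbf{\Lambda}|\) and \(\log\Gamma_p(\nu/2)\) are smooth with bounded derivatives, since \(\nu\ge\underline\nu>p+1\). The score is therefore bounded by \(C(1+\|\mathbf{W}\|+|\log|\mathbf{W}||)\). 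Every Wishart law with parameter in \(\Theta^*\) has finite moments of all orders of \(\|\mathbf{W}\|\) and of \(|\log|\mathbf{W}||\), uniformly over \(\Theta^*\); this is seen from the Bartlett decomposition, whose log chi-square factors have uniformly bounded moments. A mean-value argument then gives
\[
K(f_{\boldsymbol{\theta}},f_{\boldsymbol{\theta}'})\le C\|\boldsymbol{\theta}-\boldsymbol{\theta}'\|_2^2,\qquad h(f_{\boldsymbol{\theta}},f_{\boldsymbol{\theta}'})\le C\|\boldsymbol{\theta}-\boldsymbol{\theta}'\|_2,
\]
together with a bounded second moment of the log-likelihood ratio. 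For mixtures, convexity of \(K\) and \(h^2\) gives KL neighbourhoods of \(p_{G_0}\) that contain \(\{G: W_1(G,G_0)\lesssim\epsilon\}\)-type sets. By Assumptions~\ref{ass:prior_K}--\ref{ass:base_prior} these sets receive prior mass at least \(\exp(-c\log(1/\epsilon))\) when \(K=k_0\). The entropy of mixtures with at most \(k\) atoms in a compact \(\Theta^*\subset\mathbb R^{d+1}\) is \(O(k\log(1/\epsilon))\), again by the Lipschitz bound. These facts give the \((\log n/n)^{1/2}\) Hellinger rate for the density. Combined with the inverse inequality from (iii), this yields the \(W_1\) rate in (b).

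For (a), I will follow \citet{guha21} in the exact form of their result. Their argument goes through the \(W_1\) contraction and then shows that posterior mass on \(K>k_0\) vanishes, using the MFM prior structure and \(\pi_j^0>0\). Alternatively, one can invoke the consistency result of \citet{miller18} for mixtures of finite mixtures, which requires only identifiability of finite Wishart mixtures, a special case of Lemma~\ref{lem:first_order_identifiability} with \(\boldsymbol{\beta}_i=\boldsymbol{0}\), together with the KL support property established above.

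The main obstacle is step (iv): making the regularity bounds uniform in \(\boldsymbol{\theta}\in\Theta^*\) when \(\mathbf{W}\) ranges over the noncompact cone, with singular behaviour of \(\log|\mathbf{W}|\) near the boundary. I will first try to dominate everything by the envelope \(C(1+\tr\mathbf{W}+|\log|\mathbf{W}||)^2\,\sup_{\boldsymbol{\theta}\in\Theta^*}f(\mathbf{W}\mid\boldsymbol{\theta})\). This envelope is integrable: near \(|\mathbf{W}|\to0\) the exponent \((\underline\nu-p-1)/2>0\) gives integrability, and at infinity \(\underline\lambda\tr\mathbf{W}\) gives exponential decay. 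I expect this envelope to be the single technical device that makes all the required bounds uniform.
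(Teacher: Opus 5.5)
Your reduction matches the paper's. You cite \citet[Theorem~3.1]{guha21}; Lemmas~\ref{lem:first_order_identifiability} and~\ref{lem:first_order_uniform_lipschitz} plus compactness give (P.1), Assumption~\ref{ass:base_prior} gives (P.2), and Assumption~\ref{ass:prior_K} gives (P.4).

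\textbf{The gap is your item (iv).} The kernel-specific hypothesis (P.3) of that theorem is not a KL/Hellinger-Lipschitz bound. It is the $\chi^2$-type condition $\int p_{G_0}^2/p_G\,d\mu\le M(\varepsilon_0)$ for every $G$ with at most $k_0$ atoms and $W_1(G,G_0)\le\varepsilon_0$. Your bounds $K(f_{\boldsymbol{\theta}},f_{\boldsymbol{\theta}'})\lesssim\|\boldsymbol{\theta}-\boldsymbol{\theta}'\|_2^2$ and $h\lesssim\|\boldsymbol{\theta}-\boldsymbol{\theta}'\|_2$ do not imply it, since a $\chi^2$ bound controls KL but not conversely. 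Your envelope device does not reach it either, because the density now sits in the denominator: $f(\mathbf{W}\mid\boldsymbol{\theta}^0)^2/f(\mathbf{W}\mid\boldsymbol{\theta})\propto|\mathbf{W}|^{(2\nu_0-\nu-p-1)/2}\exp\{-\frac12\tr((2\mathbf{\Lambda}_0-\mathbf{\Lambda})\mathbf{W})\}$. This is integrable only when $2\mathbf{\Lambda}_0-\mathbf{\Lambda}\succ\mathbf{0}$ and $2\nu_0-\nu>p-1$. Both fail for many pairs in $\Theta^*$, e.g.\ $\mathbf{\Lambda}_0=\underline{\lambda}\mathbf{I}_p$ and $\mathbf{\Lambda}=\overline{\lambda}\mathbf{I}_p$ with $\overline{\lambda}\ge 2\underline{\lambda}$, so no bound uniform over $\Theta^*$ exists.

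The missing idea is the localization in Lemma~\ref{thm:P3}:
\begin{itemize}
\item Take $\omega=\min\{\Delta_0/4,\underline{\lambda}/(2\sqrt{2}),\underline{\nu}-p\}$, where $\Delta_0$ is the minimal separation of the true atoms, and $\varepsilon_0=\omega\min_j\pi_j^0/2$.
\item Lemma~\ref{thm:P3_lemma} then forces $G(B_j)\ge\pi_j^0/2$ on each of the disjoint balls $B_j=B(\boldsymbol{\theta}_j^0,\omega)$.
\item Hence a $G$ with at most $k_0$ atoms has exactly one atom in each ball, and that atom automatically satisfies both integrability conditions.
\item Cauchy--Schwarz gives $p_{G_0}^2/p_G\le\sum_j(\pi_j^0)^2\pi_j^{-1}f(\cdot\mid\boldsymbol{\theta}_j^0)^2/f(\cdot\mid\boldsymbol{\theta}_j)$.
\item Each term integrates to an explicit ratio of Wishart normalizing constants, which is continuous, hence bounded, on the closed balls.
\end{itemize}

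\textbf{Rerunning the argument yourself does not close the gap.} Suppose you instead rerun the prior-mass/entropy argument with your KL bounds replacing (P.3). Assumption~\ref{ass:prior_K} allows any tail for $p_K$. Even for the shifted Poisson, a sieve $\{K\le k_n\}$ with prior complement below $e^{-c\,n\epsilon_n^2}=n^{-c}$ needs $k_n\to\infty$. The global entropy $k_n\log(1/\epsilon_n)$ you propose then exceeds $n\epsilon_n^2=\log n$. So your sketch delivers the $(\log n/n)^{1/2}$ rate only after the posterior has been localized on $\{K=k_0\}$, that is, after part (a).

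For (a) you offer two routes, and neither works as stated. The first is the same black box, whose (P.3) you have not verified. The second is the consistency for $K$ recalled in \citet{miller18}. That is a Doob-type result holding only outside a prior-null set of true parameters, not for the fixed $G_0$ of Assumption~\ref{ass:data_generating}.

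A smaller point: \citet{guha21} use $\mathrm{Dirichlet}(\gamma/k,\ldots,\gamma/k)$ weights, so your (ii) is not literally Assumptions~\ref{ass:prior_K}--\ref{ass:base_prior}. You need the paper's remark that switching to $\mathrm{Dirichlet}(\gamma,\ldots,\gamma)$ changes only constants and exponents in the polynomial prior-mass lower bound.
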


Theorem~\ref{thm:contraction} establishes that, under the compact parameter space \(\Theta^*\) and the regularity conditions stated above, the posterior probability assigned to the true number of mixture components converges to one as the sample size grows. Moreover, the posterior distribution over mixing measures contracts around the true mixing measure \(G_0\) with respect to the order-\(1\) Wasserstein distance at rate \((\log n/n)^{1/2}\).

Theorem~\ref{thm:contraction}(a) concerns the number of mixture components \(K\). In clustering applications, however, one is often more interested in the number of clusters \(K_{+,n}\) in the observed data. By a slight abuse of notation, we continue to write \(\Pi_n(\cdot\mid \mathbf{W}_{1:n})\) for the corresponding posterior marginal, where \(\mathbf{W}_{1:n}:=(\mathbf{W}_1,\dots,\mathbf{W}_n)\). \citet[Theorem 5.2]{miller18} show that, under the MFM prior, for each fixed \(k\) with \(p_K(1),\ldots,p_K(k)>0\), the posterior masses assigned to the events \(\{K=k\}\) and \(\{K_{+,n}=k\}\) differ by \(o(1)\) as \(n\rightarrow\infty\), pointwise in the observed data. Combining this pointwise equivalence with Theorem~\ref{thm:contraction}(a) yields posterior consistency for the number of clusters.

\begin{corollary}[Posterior consistency of the number of clusters]
\label{cor:cluster_consistency}
Under the same assumptions as Theorem~\ref{thm:contraction}, as \(n\rightarrow \infty\), we have
\begin{equation}
\Pi_n(K_{+,n}=k_0\mid \mathbf{W}_{1:n})\rightarrow 1
\end{equation}
almost surely under \(P_{G_0}\).
\end{corollary}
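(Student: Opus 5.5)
The plan is to combine Theorem~\ref{thm:contraction}(a) with the pointwise equivalence of \citet[Theorem 5.2]{miller18}, working on a single probability-one event. Let $\Omega_0$ be the $P_{G_0}$-almost-sure event on which Theorem~\ref{thm:contraction}(a) holds, namely
\[
\Pi_n(K=k_0\mid\mathbf{W}_{1:n})\to1 .
\]
Fix a realization of the data sequence in $\Omega_0$. Everything below is then a deterministic statement about that sequence, so the almost-sure conclusion follows at once.

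First, I check the hypotheses of \citet[Theorem 5.2]{miller18}. That result needs $p_K(1),\dots,p_K(k_0)>0$, which holds by Assumption~\ref{ass:prior_K}. The model must also be an MFM with a symmetric Dirichlet on the weights and i.i.d.\ component parameters from a base prior. Restricting the base prior to $\Theta^*$ and renormalizing preserves this structure, so the MFM partition representation and the resulting statement apply unchanged. Since their result holds pointwise in the observed data, applying it along our fixed sequence with $k=k_0$ gives
\[
\bigl|\Pi_n(K=k_0\mid\mathbf{W}_{1:n})-\Pi_n(K_{+,n}=k_0\mid\mathbf{W}_{1:n})\bigr|\to0 .
\]

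Next, the triangle inequality gives
\[
\Pi_n(K_{+,n}=k_0\mid\mathbf{W}_{1:n})\ge \Pi_n(K=k_0\mid\mathbf{W}_{1:n})-o(1).
\]
The right-hand side tends to $1$, and the left-hand side is at most $1$, so it converges to $1$ on $\Omega_0$, which proves the corollary.

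The main obstacle is making sure the Miller--Harrison bound really holds for every data sequence (or at least almost every one), and is not an in-probability statement. Their argument bounds the posterior of $K$ given $K_{+,n}=t$ through ratios of the $V_n$ coefficients, which depend only on $n$, $t$, $\gamma$ and $p_K$, and which converge to the point mass at $t$ uniformly in the data. I would therefore spell out the following step. We have
\[
\Pi_n(K=k_0\mid\mathbf{W}_{1:n})=\sum_t \Pi_n(K=k_0\mid K_{+,n}=t)\,\Pi_n(K_{+,n}=t\mid\mathbf{W}_{1:n}),
\]
because $K$ is conditionally independent of the data given the partition. For $t\ne k_0$ the weights $\Pi_n(K=k_0\mid K_{+,n}=t)$ are either $0$ (when $t>k_0$) or tend to $0$ (when $t<k_0$). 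This confirms that the $o(1)$ term holds deterministically.
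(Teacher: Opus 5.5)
Your proposal is correct and follows the paper's proof essentially verbatim. Like the paper, you apply \citet[Theorem 5.2]{miller18} pathwise with $k=k_0$, intersect with the almost-sure event from Theorem~\ref{thm:contraction}(a), and conclude by the triangle inequality. Your added decomposition over $\{K_{+,n}=t\}$ only controls the data-free weights for $t\neq k_0$; the two-sided $o(1)$ claim would also need $\Pi_n(K=k_0\mid K_{+,n}=k_0)\to 1$. That gap does not matter here, because the one-sided bound $\Pi_n(K_{+,n}=k_0\mid\mathbf{W}_{1:n})\ge \Pi_n(K=k_0\mid\mathbf{W}_{1:n})-\sum_{t<k_0}\Pi_n(K=k_0\mid K_{+,n}=t)$ that it yields is all the corollary needs.
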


\section{Inference}\label{sec:inference}

In this section, we develop a Metropolis--Hastings-within-Gibbs algorithm for posterior inference under the shared-\(\nu\) specialization of MFM--Wishart introduced in Section~\ref{sec:MFM--Wishart}. This specialization is motivated by functional-connectivity applications in which subject-level covariance or correlation matrices are estimated from time series collected under a common protocol and with the same or comparable length. The application in Section~\ref{sec:app_data} closely matches this setting: all participants were observed in the same task-free sleep study, and the preprocessed time series were cropped to a common length of \(T=5{,}000\). We therefore develop posterior computation for the model with \(\nu_1=\cdots=\nu_K=\nu\), and use this specification in both the simulation studies and the real-data analysis. We then use Dahl's method \citep{dahl06} to post-process the MCMC samples and estimate the clustering assignments. Extensions allowing component-specific degrees-of-freedom parameters \(\nu_j\) are possible, for example by placing a finite-grid prior on each \(\nu_j\) or by adapting auxiliary-parameter methods for nonconjugate mixture models \citep[Algorithm~8]{neal00}.

\subsection{Metropolis--Hastings-within-Gibbs algorithm}

In each iteration, our Metropolis--Hastings-within-Gibbs algorithm consists of two stages: (i) Gibbs updates of the cluster labels for the \(n\) observations, and (ii) a random-walk Metropolis--Hastings step for updating the shared \(\nu\). During the MCMC procedure, we collapse over \(K\), the mixing weights \(\boldsymbol{\pi}\), and the cluster-specific scale matrices \(\mathbf{\Sigma}_k\), yielding an algorithm in the spirit of Algorithm~3 of \citet{neal00}, in which the cluster labels are updated using collapsed predictive distributions. Accordingly, step (i) requires the collapsed full conditional distribution of the cluster labels \(z_i\), \(i=1,\dots,n\).

\subsubsection{Updating \(z_i\)}

We first derive the collapsed full conditional distribution of \(z_i\) for \(i=1,\dots,n\). Recall that \(\boldsymbol{z}_{-i}=(z_1,\ldots,z_{i-1},z_{i+1},\ldots,z_n)^\top\) denotes the cluster labels excluding \(z_i\), and that \(K^*\) denotes the number of clusters in \(\boldsymbol{z}_{-i}\).

\begin{proposition}\label{thm:gibbs_z}
The collapsed full conditional distribution of the cluster label \(z_i\) given \(\boldsymbol{z}_{-i}\), \(\nu\), and \(\{\mathbf{W}_j\}_{j=1}^n\) is
\begin{equation}
\mathbb{P}(z_i=c\mid \boldsymbol{z}_{-i},\nu, \{\mathbf{W}_j\}_{j=1}^n) \propto \begin{cases}
(n_{c,-i}+\gamma)\,p(\mathbf{W}_i\mid c,\nu, \{\mathbf{W}_j\}_{j:z_j=c,\,j\neq i}), \, &\text{if \(c\) is an existing cluster},\\[4pt]
\gamma\,\dfrac{V_n\left(K^*+1\right)}{V_n\left(K^*\right)}\, m(\mathbf{W}_i\mid \nu),\quad &\text{if \(c\) is a new cluster},
\end{cases}
\end{equation}
where
\begin{equation}
p(\mathbf{W}_i\mid c,\nu, \{\mathbf{W}_j\}_{j:z_j=c,\,j\neq i}) = 
\frac{\Gamma_p\left(\frac{\kappa_0+(n_{c,-i}+1)\nu}{2}\right)}{\Gamma_p\left(\frac{\kappa_0+n_{c,-i}\nu}{2}\right)\Gamma_p\left(\frac{\nu}{2}\right)}\cdot
\frac{|\mathbf{W}_i|^{\frac{\nu-p-1}{2}}|\mathbf{\Psi}_0+\mathbf{S}_{c,-i}|^{\frac{\kappa_0+n_{c,-i}\nu}{2}}}{|\mathbf{\Psi}_0+\mathbf{S}_{c,-i}+\mathbf{W}_i|^{\frac{\kappa_0+(n_{c,-i}+1)\nu}{2}}}
\end{equation}
with \(n_{c,-i}=\sum_{j\neq i}\mathbbm{1}(z_j=c)\), where \(\mathbbm{1}(\cdot)\) denotes the indicator function, and \(\mathbf{S}_{c,-i}=\sum_{j:z_j=c,\,j\neq i}\mathbf{W}_j\). Moreover,
\begin{equation}
m(\mathbf{W}_i\mid \nu)
=
\frac{\Gamma_p\left(\frac{\nu+\kappa_0}{2}\right)}{\Gamma_p\left(\frac{\nu}{2}\right)\Gamma_p\left(\frac{\kappa_0}{2}\right)}
\cdot
\frac{|\mathbf{W}_i|^{\frac{\nu-p-1}{2}}|\mathbf{\Psi}_0|^{\frac{\kappa_0}{2}}}{|\mathbf{W}_i + \mathbf{\Psi}_0|^{\frac{\nu+\kappa_0}{2}}}.
\end{equation}
\end{proposition}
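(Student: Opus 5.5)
Our plan is to use Bayes' rule and factor the collapsed full conditional into a prior part and a likelihood part. Integrating out \(K\), \(\boldsymbol{\pi}\) and \(\{\mathbf{\Sigma}_j\}\) (with \(\nu\) shared and held fixed) gives
\[
\mathbb{P}(z_i=c\mid \boldsymbol{z}_{-i},\nu,\{\mathbf{W}_j\}_{j=1}^n)\propto p(z_i=c\mid \boldsymbol{z}_{-i})\,p(\mathbf{W}_i\mid z_i=c,\boldsymbol{z}_{-i},\nu,\{\mathbf{W}_j\}_{j\neq i}).
\]
This factorization holds because, given the partition, the scale matrices of distinct clusters are a priori i.i.d.\ \(\mathcal{IW}_p(\mathbf{\Psi}_0,\kappa_0)\) and independent of the partition. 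The marginal likelihood of all the data therefore factorizes over blocks of the partition. Taking the ratio of the joint marginal with \(z_i=c\) to the one with \(\mathbf{W}_i\) removed, every block not involving \(\mathbf{W}_i\) cancels. What remains is the posterior predictive of \(\mathbf{W}_i\) given the other members of cluster \(c\), or the prior predictive \(m(\mathbf{W}_i\mid\nu)\) if \(c\) is new. For the prior factor we invoke the MFM full conditional \eqref{eq:prior_mfm} from \citet{miller18}: it gives weight \(n_{c,-i}+\gamma\) to an existing cluster and \(\gamma V_n(K^*+1)/V_n(K^*)\) to a new one. This settles the combinatorial part.

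The core computation is a single Wishart--inverse-Wishart marginal. For a block \(A\) of size \(m\) with sum \(\mathbf{S}_A=\sum_{j\in A}\mathbf{W}_j\), we write
\[
\prod_{j\in A} f(\mathbf{W}_j\mid\mathbf{\Sigma},\nu)\, p(\mathbf{\Sigma}\mid\mathbf{\Psi}_0,\kappa_0).
\]
As a function of \(\mathbf{\Sigma}\), this is \(|\mathbf{\Sigma}|^{-(\kappa_0+m\nu+p+1)/2}\exp\{-\tfrac12\tr((\mathbf{\Psi}_0+\mathbf{S}_A)\mathbf{\Sigma}^{-1})\}\), which is the kernel of \(\mathcal{IW}_p(\mathbf{\Psi}_0+\mathbf{S}_A,\kappa_0+m\nu)\). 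Integrating against the inverse-Wishart normalizing constant gives
\[
\frac{\prod_{j\in A}|\mathbf{W}_j|^{(\nu-p-1)/2}}{2^{m\nu p/2}\Gamma_p(\nu/2)^m}\cdot\frac{|\mathbf{\Psi}_0|^{\kappa_0/2}\,2^{(\kappa_0+m\nu)p/2}\Gamma_p\!\left(\frac{\kappa_0+m\nu}{2}\right)}{2^{\kappa_0p/2}\Gamma_p(\kappa_0/2)\,|\mathbf{\Psi}_0+\mathbf{S}_A|^{(\kappa_0+m\nu)/2}}.
\]
The powers of \(2\) cancel exactly.

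With this block marginal \(M(A)\) in hand, we form the ratio \(M(A\cup\{i\})/M(A)\) with \(m=n_{c,-i}\) and \(\mathbf{S}_A=\mathbf{S}_{c,-i}\). The factors \(|\mathbf{\Psi}_0|^{\kappa_0/2}\), \(\Gamma_p(\kappa_0/2)\) and \(|\mathbf{W}_j|\) for \(j\neq i\) all cancel. One factor \(\Gamma_p(\nu/2)\) and the term \(|\mathbf{W}_i|^{(\nu-p-1)/2}\) survive. The ratio of the \(\Gamma_p\) terms and the ratio of the determinants reproduce the displayed predictive. Setting \(A=\emptyset\) (so \(m=0\) and \(\mathbf{S}_A=\mathbf{0}\)) gives \(m(\mathbf{W}_i\mid\nu)\) directly, which also serves as a consistency check.

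The main obstacle is not the algebra but justifying the collapsing over \(K\). One must check that, after marginalizing \(K\) and \(\boldsymbol{\pi}\), the joint prior on the partition and the cluster parameters is the exchangeable partition distribution times i.i.d.\ base draws for the occupied clusters. This is the structural result of \citet{miller18} for MFMs, which we will cite rather than reprove, noting that the base measure here is the product of \(\mathcal{IW}_p\) with a point mass at the shared \(\nu\). The remaining bookkeeping is routine: labelling clusters by occupied blocks, and handling the case where removing \(i\) empties its own cluster so that \(K^*\) drops.
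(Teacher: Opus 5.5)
Your proposal is correct and takes essentially the same route as the paper. The paper also multiplies the MFM prior full conditional \eqref{eq:prior_mfm} by Wishart--inverse-Wishart collapsed predictives, following Algorithm~3 of \citet{neal00}: it computes the within-cluster collapsed marginal, takes the ratio with and without $\mathbf{W}_i$ for an existing cluster, and uses the prior predictive $m(\mathbf{W}_i\mid\nu)$ for a new cluster. The one difference is that you appeal explicitly to the partition representation of \citet{miller18} to justify collapsing over $K$, a point the paper leaves implicit in its citation of \citet{neal00}.
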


Thus, \(z_i\) can be updated by a collapsed Gibbs step. For each \(i\), we compute the unnormalized probabilities above for all existing clusters in \(\boldsymbol z_{-i}\) and for one new-cluster option, normalize them to sum to one, and sample \(z_i\) from the resulting categorical distribution.

\subsubsection{Updating \(\nu\)}

We note that there is no simple conjugate prior for the Wishart degrees-of-freedom \(\nu\). In our setting, we assign a uniform prior on \([\nu_L,\nu_U]\), so a Gibbs update is not available. Let \(\mathcal{C}=\{c:\exists\, i \text{ such that } z_i=c\}\) denote the set of clusters induced by the current labels, and for each \(c\in\mathcal{C}\), let \(n_c=\sum_{i=1}^n \mathbbm{1}(z_i=c)\) and \(\mathbf{S}_c=\sum_{i:z_i=c}\mathbf{W}_i\). The full conditional distribution of \(\nu\) is given in Proposition~\ref{prop:nu_conditional}.

\begin{proposition}\label{prop:nu_conditional}
The collapsed full conditional posterior distribution of \(\nu\) given the cluster assignment \(\boldsymbol{z}=(z_1,\dots,z_n)^\top\) is
\begin{equation}
    p(\nu\mid \boldsymbol{z},\{\mathbf{W}_i\}_{i=1}^n) \propto \mathbbm{1}(\nu\in[\nu_L,\nu_U])\,\frac{\prod_{c\in\mathcal{C}}\Gamma_p\left(\frac{\kappa_0+n_c\nu}{2}\right)}{\Gamma_p\left(\frac{\nu}{2}\right)^n}\,\exp{\left\{\frac{\nu}{2}\left[\sum^n_{i=1}\log{|\mathbf{W}_i|}-\sum_{c\in\mathcal{C}}n_c\log{|\mathbf{\Psi}_0+\mathbf{S}_c|}\right]\right\}}.
\end{equation}
\end{proposition}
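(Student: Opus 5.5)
We first write the joint density of \(\nu\), the labels \(\boldsymbol z\) and the data with \(K\), \(\boldsymbol\pi\) and the scale matrices \(\{\mathbf{\Sigma}_c\}\) integrated out. Under the MFM prior, the marginal prior on the partition induced by \(\boldsymbol z\) depends only on \(\boldsymbol z\) (it is \(V_n(|\mathcal C|)\prod_{c}\gamma^{(n_c)}\)). It therefore does not involve \(\nu\). Since \(\nu\) is a priori independent of \(\boldsymbol z\) with a Uniform\((\nu_L,\nu_U)\) prior, we obtain
\[
p(\nu\mid \boldsymbol z,\{\mathbf W_i\})\propto \mathbbm{1}(\nu\in[\nu_L,\nu_U])\prod_{c\in\mathcal C}\int \prod_{i:z_i=c} f(\mathbf W_i\mid \mathbf{\Sigma}_c,\nu)\,p(\mathbf{\Sigma}_c\mid \mathbf{\Psi}_0,\kappa_0)\,d\mathbf{\Sigma}_c .
\]
The clusters factorize because the \(\mathbf{\Sigma}_c\) are i.i.d.\ under the prior and each observation depends only on its own cluster's scale matrix.

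Next we evaluate each cluster integral by Wishart--inverse-Wishart conjugacy. Multiplying the \(n_c\) Wishart densities (\ref{eq:wishart_density}) by the inverse-Wishart density gives an integrand in \(\mathbf{\Sigma}_c\) proportional to
\[
|\mathbf{\Sigma}_c|^{-(\kappa_0+n_c\nu+p+1)/2}\exp\{-\tfrac12\tr((\mathbf{\Psi}_0+\mathbf S_c)\mathbf{\Sigma}_c^{-1})\}.
\]
This is the kernel of \(\mathcal{IW}_p(\mathbf{\Psi}_0+\mathbf S_c,\kappa_0+n_c\nu)\). Its normalizing constant yields
\[
\frac{\Gamma_p\bigl(\frac{\kappa_0+n_c\nu}{2}\bigr)\,|\mathbf{\Psi}_0|^{\kappa_0/2}\,\prod_{i:z_i=c}|\mathbf W_i|^{(\nu-p-1)/2}}{\Gamma_p\bigl(\frac{\kappa_0}{2}\bigr)\,\Gamma_p\bigl(\frac{\nu}{2}\bigr)^{n_c}\,|\mathbf{\Psi}_0+\mathbf S_c|^{(\kappa_0+n_c\nu)/2}} .
\]
The powers of \(2\) cancel exactly: \(2^{-n_c\nu p/2}\cdot 2^{-\kappa_0 p/2}\cdot 2^{(\kappa_0+n_c\nu)p/2}=1\). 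This is the one bookkeeping point that needs checking, because otherwise a \(\nu\)-dependent factor would remain.

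Finally we take the product over \(c\in\mathcal C\) and drop every factor that does not depend on \(\nu\). These are \(|\mathbf{\Psi}_0|^{\kappa_0/2}\), \(\Gamma_p(\kappa_0/2)\), \(\prod_i|\mathbf W_i|^{-(p+1)/2}\), and \(\prod_c|\mathbf{\Psi}_0+\mathbf S_c|^{-\kappa_0/2}\). Since \(\sum_c n_c=n\), the gamma factors combine into \(\Gamma_p(\nu/2)^{-n}\). The remaining \(\nu\)-dependent determinant powers are \(\prod_i|\mathbf W_i|^{\nu/2}\prod_c|\mathbf{\Psi}_0+\mathbf S_c|^{-n_c\nu/2}\), which we rewrite as
\[
\exp\Bigl\{\tfrac{\nu}{2}\Bigl[\textstyle\sum_i\log|\mathbf W_i|-\sum_c n_c\log|\mathbf{\Psi}_0+\mathbf S_c|\Bigr]\Bigr\}.
\]
This gives exactly the expression in Proposition~\ref{prop:nu_conditional}.

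The main obstacle is conceptual rather than computational. We must justify that collapsing over \(K\) and \(\boldsymbol\pi\) leaves a factor depending only on \(\boldsymbol z\), which follows from the partition representation in \citet{miller18}. We must also keep track of the constants' dependence on \(\nu\) correctly, in particular the powers of \(2\) above.
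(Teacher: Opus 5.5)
Your proposal is correct and follows essentially the same route as the paper. The paper also factorizes the joint posterior of $(\nu,\{\mathbf{\Sigma}_c\})$ given $\boldsymbol z$ over clusters, integrates out each $\mathbf{\Sigma}_c$ by Wishart--inverse-Wishart conjugacy to obtain the cluster-level collapsed likelihood $m(\{\mathbf W_i\}_{i\in\mathcal I_c}\mid\nu)$, and then drops the $\nu$-free factors; your checks that the powers of $2$ cancel and that the MFM partition prior does not involve $\nu$ just spell out what the paper covers by appeal to ``the conditional independence of our model.''
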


Although no Gibbs update is available, \(\nu\) is univariate and can therefore be sampled efficiently by other MCMC methods. In our case, we update \(\nu\) using a standard random-walk Metropolis--Hastings step targeting the collapsed full conditional distribution \(p(\nu\mid \boldsymbol{z},\{\mathbf{W}_i\}_{i=1}^n)\) in Proposition~\ref{prop:nu_conditional}.

Based on the results above, we summarize our MCMC algorithm in Algorithm~\ref{alg:mfm_wishart_collapsed_short}. We note that, if posterior samples of \(\mathbf{\Sigma}_c\) are needed, they can be drawn after each MCMC iteration from the inverse-Wishart full conditional given \(\boldsymbol{z}^{(l)}\), \(\nu^{(l)}\), and the observations in each cluster.

\begin{algorithm}[h]
\caption{Collapsed Random--Walk Metropolis--Hastings-within-Gibbs Sampler}
\label{alg:mfm_wishart_collapsed_short}
\begin{algorithmic}[1]
\State \textbf{Input:} observations $\{\mathbf W_i\}_{i=1}^n$; prior hyperparameters $(\gamma,\lambda, \mathbf\Psi_0,\kappa_0, \nu_L,\nu_U)$; number of iterations $L$; proposal standard deviation\ $\sigma_\nu$.
\State \textbf{Output:} draws $\{(\boldsymbol z^{(l)},\nu^{(l)})\}_{l=1}^L$.

\State Precompute $V_n(t)$ for $t=1,\ldots,n+1$.

\State Initialize $\boldsymbol z^{(0)}$ and $\nu^{(0)}\in[\nu_L,\nu_U]$.

\For{$l=1,\ldots,L$}
  \For{$i=1,\ldots,n$}
    \State Let $K^*$ be the number of clusters in current $\boldsymbol z_{-i}$.
    \State For each existing cluster $c$ in current $\boldsymbol z_{-i}$, set
    \[
      w_c \propto (n_{c,-i}+\gamma)\,p(\mathbf W_i\mid c,\nu^{(l-1)},\{\mathbf W_j\}_{j:z_j=c,\,j\neq i}).
    \]
    \State For a new cluster, set
    \[
      w_{\mathrm{new}} \propto \gamma\,\frac{V_n(K^*+1)}{V_n(K^*)}\,m(\mathbf W_i\mid \nu^{(l-1)}).
    \]
    \State Sample $z_i^{(l)}$ from $\{w_c:c\in\mathcal{C}\}\cup\{w_{\mathrm{new}}\}$ after normalization (if the new cluster is selected, assign observation \(i\) to a fresh cluster label), and update it in \(\boldsymbol{z}\).
  \EndFor

  \State Propose $\nu^\star = \nu^{(l-1)} + \varepsilon$, $\varepsilon\sim\mathcal N(0,\sigma_\nu^2)$.
  \State Compute the acceptance probability
  \[
    \alpha = \min\Bigg\{1,\,
    \frac{p(\nu^\star\mid \boldsymbol z^{(l)},\{\mathbf W_i\}_{i=1}^n)}
         {p(\nu^{(l-1)}\mid \boldsymbol z^{(l)},\{\mathbf W_i\}_{i=1}^n)}
    \Bigg\},
  \]
  where $p(\nu\mid \boldsymbol z^{(l)},\{\mathbf W_i\}_{i=1}^n)$ is given in Proposition~\ref{prop:nu_conditional}.
  \State Set $\nu^{(l)}=\nu^\star$ if accepted; otherwise set $\nu^{(l)}=\nu^{(l-1)}$.
\EndFor
\end{algorithmic}
\end{algorithm}

\subsection{Post-processing of the MCMC samples}

We use Dahl's method to post-process the MCMC samples and infer clustering assignments. Dahl's method was originally proposed in the context of the DPM, and has recently been widely used in MFM-based clustering \citep{yin23,pan24,zhu25}.

For each retained MCMC draw \(l=1,\ldots,L\) after burn-in, let \(\boldsymbol{z}^{(l)}=(z_1^{(l)},\ldots,z_n^{(l)})\) denote the sampled cluster labels and form the corresponding membership matrix \(\mathbf{A}^{(l)}\), where \((\mathbf{A}^{(l)})_{ij}=\mathbbm{1}\!\big(z_i^{(l)}=z_j^{(l)}\big)\). We then compute the posterior mean membership matrix \(\bar{\mathbf{A}}=\frac{1}{L}\sum_{l=1}^L \mathbf{A}^{(l)}\). Dahl's estimator selects the most representative partition by choosing the draw whose membership matrix is closest to \(\bar{\mathbf{A}}\) with respect to the squared Frobenius norm, namely,
\begin{equation}
l^\star=\arg\min_{1\le l\le L}\ \sum_{i=1}^n\sum_{j=1}^n\big((\mathbf{A}^{(l)})_{ij}-(\bar{\mathbf{A}})_{ij}\big)^2.
\end{equation}
The posterior estimate of the clustering is then taken as \(\boldsymbol{z}^{(l^\star)}\), and cluster-specific summaries can then be reported based on, or aligned with, this representative draw. We emphasize that Dahl's method only provides a point estimate of the number of clusters and the clustering assignments. If one wants the posterior distribution of the number of clusters \(K_{+,n}\), one can compute \(K_{+,n}\) for each retained MCMC draw and thereby approximate its posterior distribution.

\section{Simulation}\label{sec:simulation}

We conduct simulation studies to evaluate the clustering performance of our MFM--Wishart model under two data-generating mechanisms. In the well-specified settings, observations are generated from finite mixtures of Wishart distributions. In the misspecified settings, observations are lag-0 covariance matrices computed from temporally dependent multivariate time series within each cluster, so the resulting within-cluster distributions are not exactly Wishart. The well-specified settings are described in detail below, while the misspecified setting is briefly discussed in Section~\ref{sec:mis} and detailed in Appendix~\ref{app:misspecified}.

\subsection{Simulation settings}\label{sec:sim_setting}

We consider three matrix-size settings: (1) small matrices of size \(3\times 3\), (2) medium matrices of size \(6\times 6\), and (3) large matrices of size \(12\times 12\). We note that the small matrix size is of practical importance, since in diffusion tensor imaging, a diffusion tensor is represented by a \(3\times3\) SPD matrix estimated at each voxel in the brain \citep{lan21}. The medium- and large-matrix settings are motivated by ROI-based and subnetwork-level functional connectivity studies; see \citet{bul20} and \citet{dil24}.

For the small- and medium-matrix settings, we consider two cases for the true number of clusters: (1) \(k_0=3\) and (2) \(k_0=5\). We also consider two cluster-size configurations: (1) a balanced setting, in which the observations are distributed as evenly as possible across clusters, and (2) an unbalanced setting, in which the cluster sizes are fixed to approximate the proportions \((0.2,0.4,0.4)\) for the \(k_0=3\) scenario and \((0.1,0.1,0.2,0.3,0.3)\) for the \(k_0=5\) scenario. Data are generated from finite Wishart mixtures with the corresponding number of components and cluster-size configuration, and we consider three total sample sizes: \(n=50,100,200\). For the shared degrees-of-freedom parameter \(\nu\), we set \(\nu=10\) for \(k_0=3\) and \(\nu=30\) for \(k_0=5\). For the cluster-specific scale matrices \(\mathbf{\Sigma}_k\), we set all diagonal elements equal to \(1\) and vary the off-diagonal elements according to different patterns to create distinct clusters. The cluster-specific scale matrices \(\mathbf{\Sigma}_k\) are visualized in Figure~\ref{fig:small_sigma} for the small-matrix setting and in Figure~\ref{fig:medium_sigma} in the Appendix for the medium-matrix setting.

\begin{figure}[h!]
\centering

\begin{subfigure}{0.95\textwidth}
  \centering
  \includegraphics[width=\textwidth]{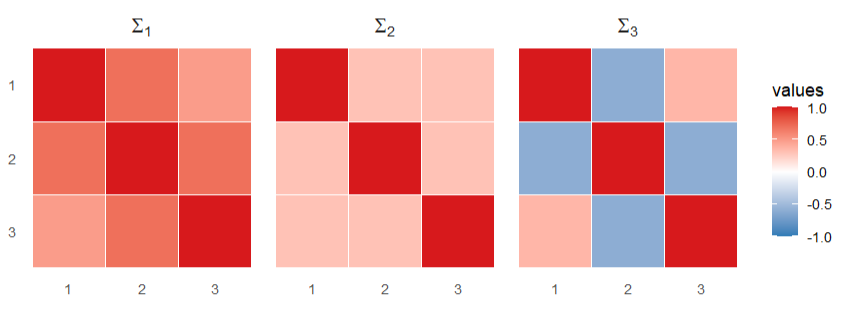}
  \caption{\(k_0=3\): cluster-specific \(\mathbf{\Sigma}_k\).}
\end{subfigure}

\begin{subfigure}{0.95\textwidth}
  \centering
  \includegraphics[width=\textwidth]{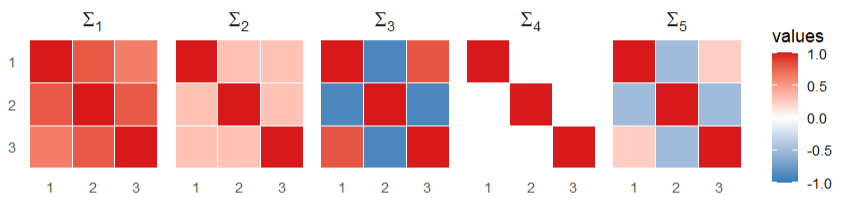}
  \caption{\(k_0=5\): cluster-specific \(\mathbf{\Sigma}_k\).}
\end{subfigure}

\caption{Cluster-specific scale matrix \(\mathbf{\Sigma}_k\) in the small-matrix (\(p=3\)) simulations.}
\label{fig:small_sigma}
\end{figure}

For the large-matrix setting, we consider \(k_0=3\) and a balanced setting. In this setting, we follow the general spirit of \citet{capp25} and impose structured sparsity on the cluster-specific scale matrices. Specifically, \(\mathbf{\Sigma}_1\) and \(\mathbf{\Sigma}_2\) are constructed to have block-sparse correlation structures, as shown in Figure~\ref{fig:large_S}. The third scale matrix, \(\mathbf{\Sigma}_3\), is obtained by standardizing a sample from \(\mathcal{W}_{12}(\mathbf{I}_{12},24)\) into a correlation matrix. Thus, \(\mathbf{\Sigma}_3\) varies across replicates. For each replicate, we then draw the \(12\times12\) observations independently from the corresponding cluster-specific Wishart components with shared degrees-of-freedom \(\nu=15\). We consider three total sample sizes: \(n=50\), \(100\), and \(200\).

For our MFM--Wishart model, as described in Section~\ref{sec:method}, we set \(\gamma=1\) and use a shifted Poisson prior with \(\lambda=1\) on the number of mixture components \(K\), that is, \(K-1\sim\mathrm{Poisson}(1)\). For the inverse-Wishart prior on the cluster-specific scale matrices, we set \(\mathbf{\Psi}_0=\mathbf{I}_p\) and \(\kappa_0=p+2\), so that the prior mean of \(\mathbf{\Sigma}_k\) is \(\mathbf{I}_p\). For the uniform prior on the shared degrees-of-freedom parameter \(\nu\), we use \(\nu_L=p+2\) and \(\nu_U=50\) in all settings.

For each setting, we generate \(100\) replicated datasets. We run a single MCMC chain for \(10{,}000\) iterations, discard the first \(4{,}000\) iterations as burn-in, initialize the chain with \(n\) singleton clusters, and use a Gaussian random-walk proposal with standard deviation \(1.0\) to update \(\nu\). The final clustering estimate for MFM--Wishart is obtained by applying Dahl's method to the retained MCMC samples.

Following \citet{yin23}, we compare our MFM--Wishart model with a DPM of Wishart kernels, hereafter referred to as DPM--Wishart, by replacing the coefficient \(|c|+\gamma\) with \(|c|\) and \(\gamma\,\dfrac{V_n\left(K^*+1\right)}{V_n\left(K^*\right)}\) with \(\gamma\) in Proposition~\ref{thm:gibbs_z}. Algorithm~\ref{alg:mfm_wishart_collapsed_short} can then be modified accordingly to perform posterior inference under the DPM model, and we use the same inverse-Wishart prior for \(\mathbf{\Sigma}_k\), the same uniform prior for \(\nu\), and the same MCMC settings for DPM--Wishart. We also consider frequentist model-based approaches, including the finite mixture of Wishart distributions \citep{hidot10} and the penalized finite mixture of Wishart distributions \citep{capp25}, hereafter referred to as FMM and penalized FMM, respectively. For these FMMs, we fit models with different values of \(K\) and use the Bayesian information criterion (BIC) to select the final model, as implemented in the R package \texttt{sparsemixwishart} \citep{capp25}. We note that \texttt{sparsemixwishart} does not assume a shared degrees-of-freedom parameter \(\nu\) across clusters, so we modified their code to align with our simulation settings for a fair comparison.  In addition, we implement two distance-based clustering methods, hierarchical clustering (HC) and partitioning around medoids (PAM), using the Riemannian distance for the similarity measure, as considered in \citet{dryden09}. Because HC and PAM do not infer the number of clusters automatically, for each replicated dataset, the number of clusters for the distance-based methods is set equal to the number of clusters estimated by MFM--Wishart for fair comparison. More details on the baseline model settings are provided in Appendix~\ref{app:baseline}.

\subsection{Evaluation metrics}

Since posterior consistency for the number of clusters is a central motivation for our MFM--Wishart and is not generally guaranteed for DPM--Wishart, we assess the ability of both methods to recover the true number of clusters. Specifically, for each replicate, we obtain a representative clustering configuration using Dahl's method and take the resulting number of clusters as the estimate of \(K_{+,n}\). We then report the proportion (out of \(100\) replicates) for which this estimate equals the true number of clusters \(k_0\). This metric summarizes the accuracy of posterior point estimation for the number of clusters.

To evaluate clustering performance across all methods, we report the adjusted Rand index (ARI), which measures agreement between the estimated partition and the true partition after adjusting for chance. The ARI is based on pairwise agreement between the true and estimated sample allocations. A higher ARI indicates better clustering performance. An ARI of \(1\) indicates perfect recovery of the true clustering, an ARI close to \(0\) indicates chance-level agreement, and a negative ARI indicates performance worse than chance.

\subsection{Results of small- and medium-matrix settings}

\subsubsection{Estimating the number of clusters} \label{sec:est_cluster_num}

Figure~\ref{fig:prior_acc_balanced} compares the empirical performance of MFM--Wishart and DPM--Wishart in recovering the true number of clusters under the balanced cluster-size configuration. First, Figure~\ref{fig:prior_prob} shows the induced prior probabilities of the event \(\{K_{+,n}=k_0\}\) under the MFM and DPM priors. When \(k_0=3\), the MFM prior assigns a probability of about \(0.18\) to the event \(K_{+,n}=3\) across all sample sizes, whereas the corresponding DPM probability decreases substantially as \(n\) increases. In contrast, when \(k_0=5\), the DPM prior places much larger mass on \(K_{+,n}=5\), while the MFM prior probability remains quite small.

\begin{figure}[!htbp]
\centering

\begin{subfigure}{0.9\textwidth}
    \centering
    \includegraphics[width=\textwidth]{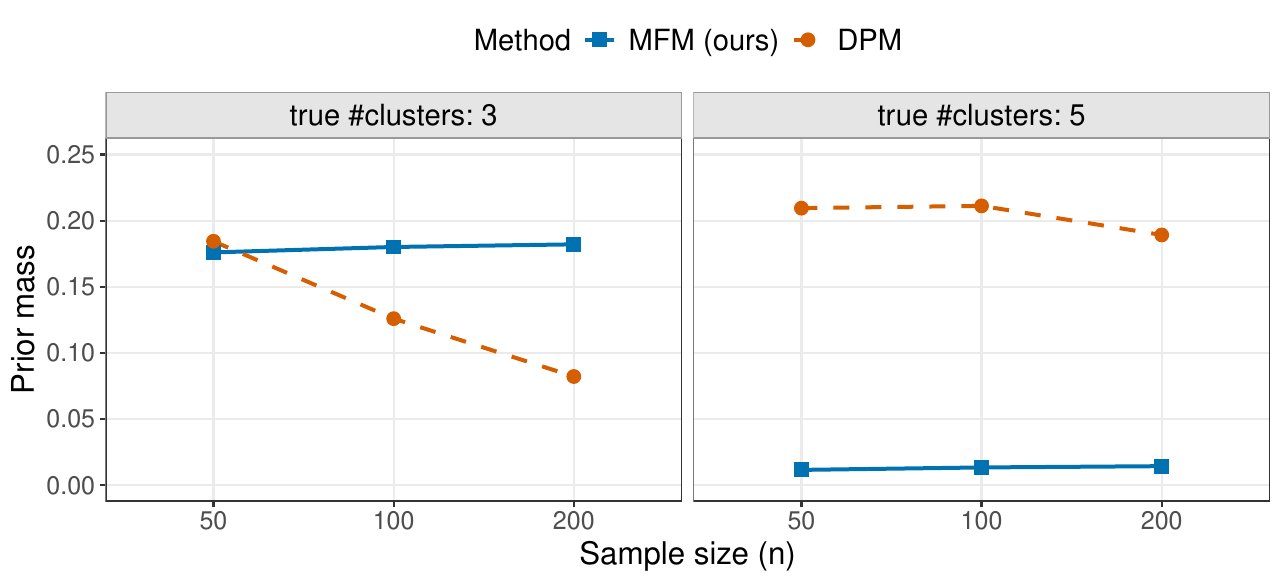}
    \caption{Induced prior probability \(\mathbb{P}(K_{+,n}=k_0)\) under the MFM and DPM priors across different sample sizes \(n\).}
    \label{fig:prior_prob}
\end{subfigure}

\vspace{0.5em}

\begin{subfigure}{0.95\textwidth}
    \centering
    \includegraphics[width=\textwidth]{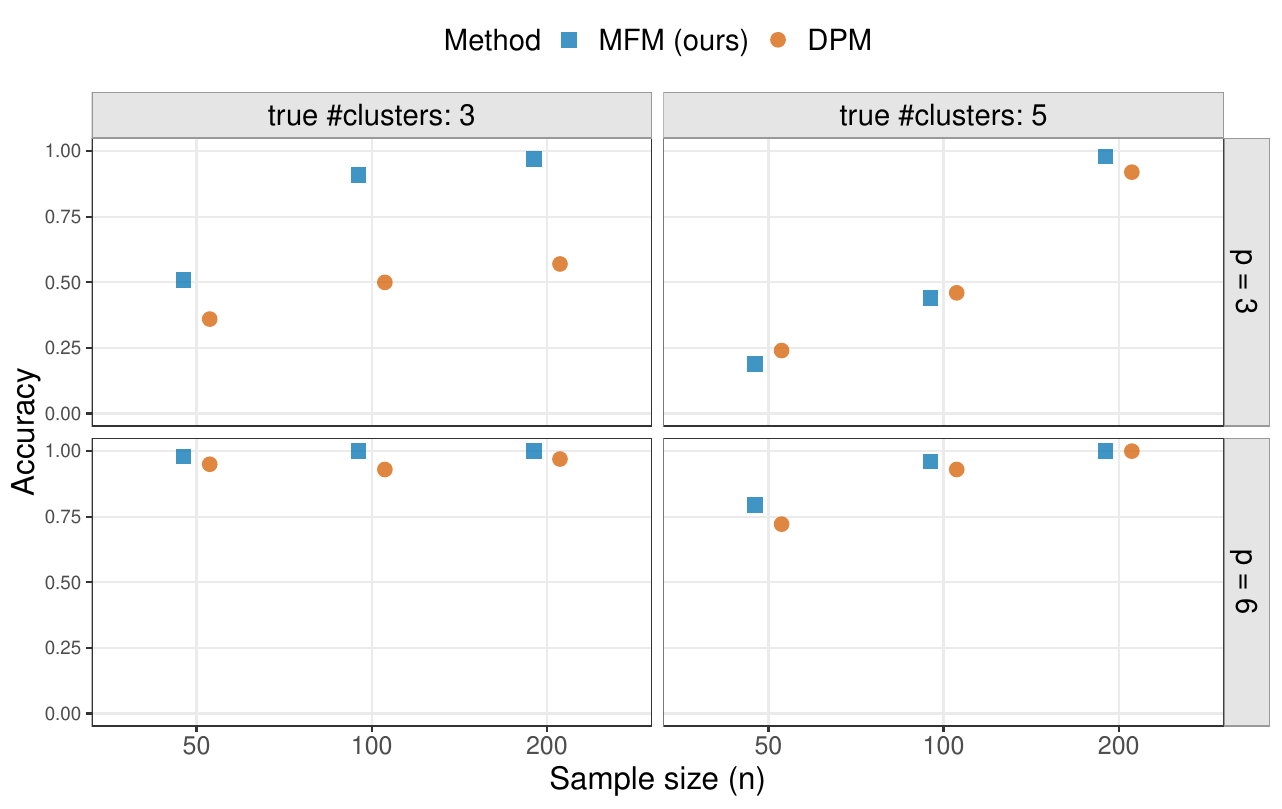}
    \caption{Accuracy of recovering the true number of clusters under the balanced cluster-size configuration for the small-matrix (\(p=3\)) and medium-matrix (\(p=6\)) simulations. For each replicate, accuracy is defined by whether the Dahl-based estimate of \(K_{+,n}\) equals the true number of clusters \(k_0\). The two columns correspond to \(k_0=3\) and \(k_0=5\), respectively, and the x-axis shows the sample size \(n\).}
    \label{fig:acc_balanced_sub}
\end{subfigure}

\caption{Comparison of the prior behavior and empirical performance of MFM--Wishart and DPM--Wishart.}
\label{fig:prior_acc_balanced}
\end{figure}

These prior differences are reflected, but not exactly mirrored, in the empirical results shown in Figure~\ref{fig:acc_balanced_sub}. When \(k_0=3\), MFM--Wishart recovers the correct number of clusters more frequently than DPM--Wishart across all conditions, and the gap is especially pronounced in the more difficult small-matrix setting with \(p=3\). In particular, when \(n=50\) and \(k_0=3\), the induced prior probability \(\mathbb{P}(K_{+,n}=k_0)\) under MFM is slightly smaller than its DPM counterpart. This suggests that the strong performance of MFM cannot be trivially explained by the prior mass assigned to the true number of clusters, but is instead related to the broader structural properties of the MFM prior.

When \(k_0=5\) and \(p=3\), DPM--Wishart performs slightly better in the small-matrix settings at \(n=50\) and \(n=100\), which is consistent with its much larger prior mass on \(K_{+,n}=5\). However, as the sample size increases, MFM--Wishart catches up and slightly outperforms DPM--Wishart at \(n=200\), even though its prior mass on \(K_{+,n}=5\) remains much smaller. This finite-sample pattern is in line with Theorem~\ref{thm:contraction} and Corollary~\ref{cor:cluster_consistency}, which establish posterior consistency for the number of mixture components and clusters under MFM--Wishart, a property that does not generally hold for DPM--Wishart.

In the medium-matrix setting, both methods recover the true number of clusters well once \(n\) is moderate, but MFM--Wishart still maintains a small overall advantage. We also present the accuracy results under the unbalanced cluster-size configuration in Figure~\ref{fig:acc_unbalanced} in Appendix~\ref{app:add_sim}, and the results are similar to those under the balanced setting. Overall, these findings are consistent with the more favorable behavior of MFM for estimating the number of clusters when the data are generated from a finite mixture, and they show that larger prior mass under DPM at a given value of \(k_0\) does not automatically translate into better recovery.

\subsubsection{Clustering performance}

Figure~\ref{fig:ari-small-medium} reports the ARI results for the small- and medium-matrix simulations under both balanced and unbalanced cluster-size configurations. Across both cluster-size configurations, MFM--Wishart and DPM--Wishart show very similar ARI values in most scenarios. This similarity is not surprising because the two methods use the same Wishart kernels and have closely related posterior sampling mechanisms. Overall, MFM--Wishart is consistently competitive with the baseline methods and is often among the best-performing methods.

\begin{figure}[!htbp]
    \centering

    \begin{subfigure}{0.75\textwidth}
        \centering
        \includegraphics[width=\textwidth]{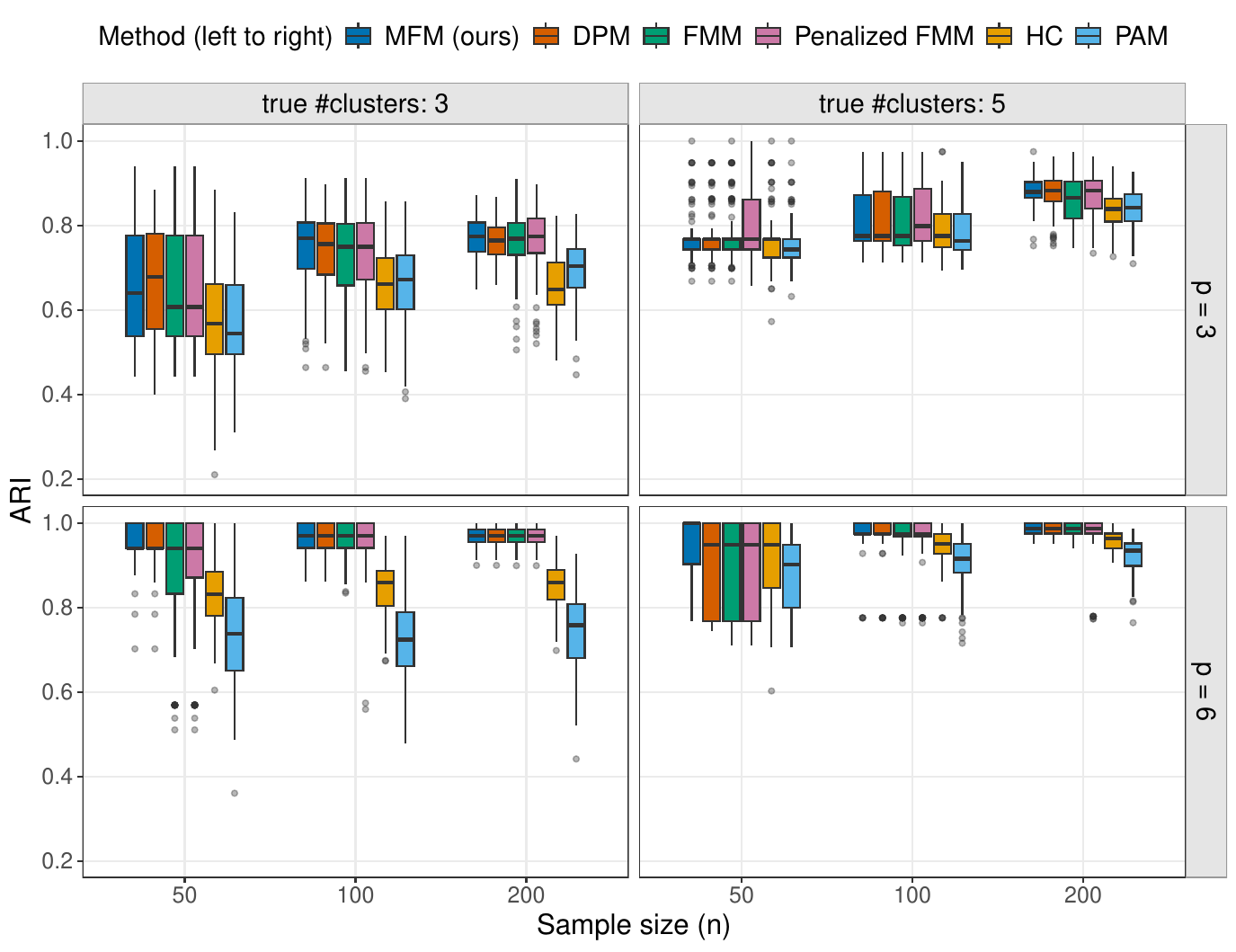}
        \caption{Balanced cluster-size configuration.}
        \label{fig:ari-small-medium-balanced}
    \end{subfigure}

    \begin{subfigure}{0.75\textwidth}
        \centering
        \includegraphics[width=\textwidth]{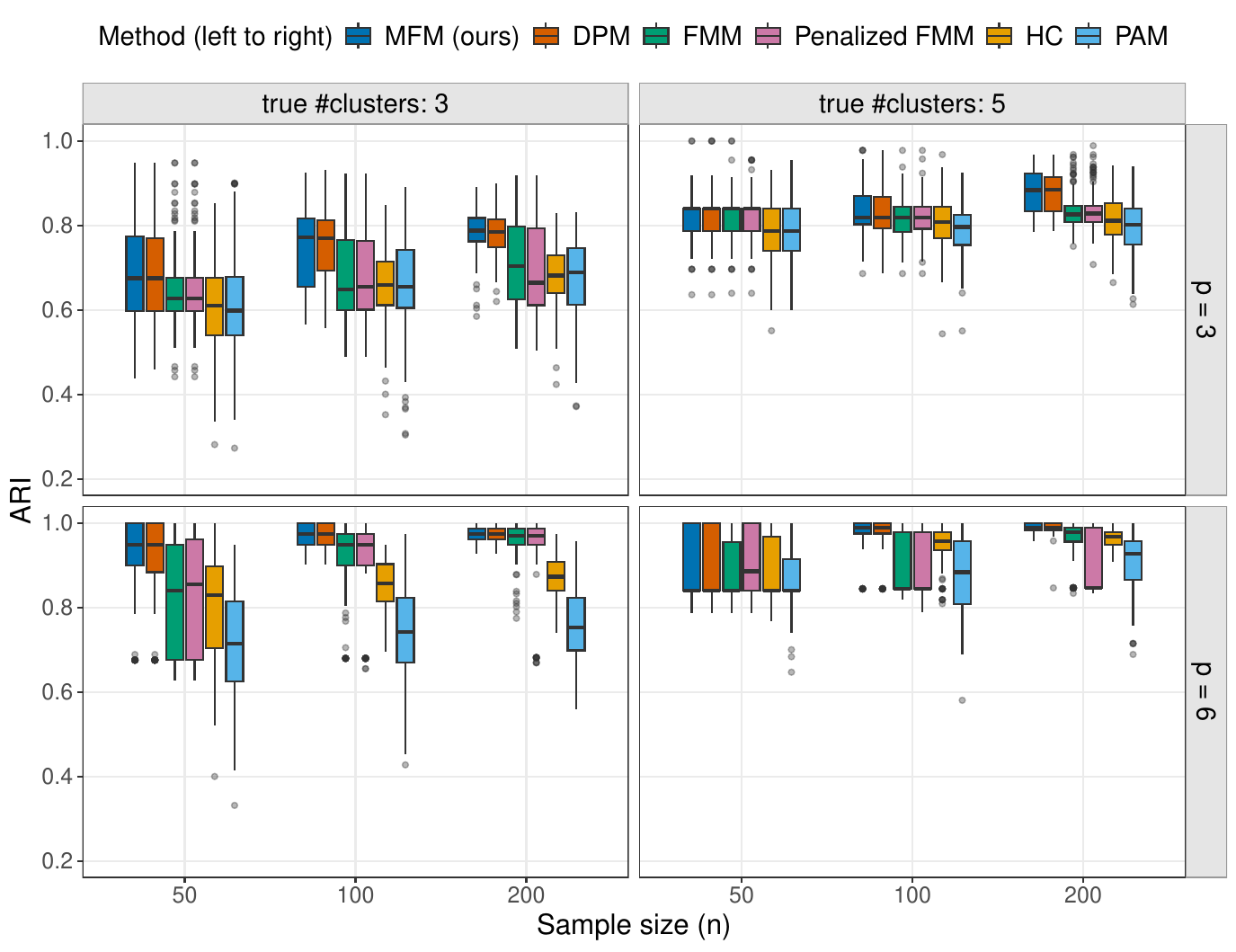}
        \caption{Unbalanced cluster-size configuration.}
        \label{fig:ari-small-medium-unbalanced}
    \end{subfigure}

    \caption{
    ARI under the balanced and unbalanced settings
    for the small-matrix \((p=3)\) and medium-matrix \((p=6)\) simulations. Panel~(\subref{fig:ari-small-medium-balanced}) reports the balanced setting, and Panel~(\subref{fig:ari-small-medium-unbalanced}) reports the unbalanced setting. In each panel, the two columns correspond to \(k_0=3\) and \(k_0=5\), respectively, and the rows correspond to \(p=3\) and \(p=6\). The x-axis shows the sample size \(n\). Boxplots summarize 100 replicated datasets. 
    }
    \label{fig:ari-small-medium}
\end{figure}

In the small-matrix setting with \(p=3\) and \(k_0=3\), the clustering problem is more difficult, and the ARI values are lower and more variable. Under the unbalanced setting, MFM--Wishart and DPM--Wishart tend to achieve noticeably higher ARI values than the finite-mixture baselines and the distance-based methods. Under the balanced setting, the performance gap between the Bayesian Wishart mixture methods and the frequentist finite-mixture baselines, FMM and Penalized FMM, tends to be smaller. 

When \(k_0=5\), or when the matrix dimension increases to \(p=6\), the model-based Wishart methods generally achieve high ARI values, especially as the sample size increases. In these settings, the differences among MFM--Wishart, DPM--Wishart, FMM, and Penalized FMM become smaller, particularly under the balanced setting and for larger \(n\). The distance-based methods, HC and PAM, are less stable in several settings and tend to have lower ARI values
than the model-based Wishart methods, especially when the sample size is small.

\subsection{Results for the large-matrix setting}

Table~\ref{tab:large_result} reports the average ARI across \(100\) replicates for the large-matrix setting. We omit the unpenalized Wishart FMM from the comparison to avoid numerical instability.

\vspace{0.25cm}

\begin{table}[!htbp]
\centering
\caption{Clustering performance in the large-matrix setting measured by ARI. For each method and each sample-size setting, the table reports the mean and standard deviation of the ARI over 100 replicated experiments. The ``Average'' columns report the mean and standard deviation after pooling results across the three sample sizes. Within each sample-size setting and in the overall average, the best mean ARI is highlighted in \textbf{bold}, and the second-best mean ARI is marked by \underline{underlining}.}
\label{tab:large_result}
\small
\begin{tabular}{l|cc|cc|cc|cc}
\hline
& \multicolumn{2}{c|}{\(n=50\)} & \multicolumn{2}{c|}{\(n=100\)} & \multicolumn{2}{c|}{\(n=200\)} & \multicolumn{2}{c}{Average} \\
\cline{2-9}
Method & Mean & SD & Mean & SD & Mean & SD & Mean & SD \\
\hline
MFM (ours)    & \underline{0.922} & 0.152 & 0.968 & 0.109 & \textbf{0.998} & 0.006 & \underline{0.963} & 0.112 \\
DPM           & 0.921            & 0.152 & \underline{0.972} & 0.101 & 0.993 & 0.043 & 0.962 & 0.112 \\
Penalized FMM & \textbf{0.937}   & 0.139 & \textbf{0.997} & 0.010 & \underline{0.997} & 0.007 & \textbf{0.977} & 0.085 \\
HC            & 0.845            & 0.141 & 0.886 & 0.099 & 0.921 & 0.035 & 0.884 & 0.106 \\
PAM           & 0.697            & 0.134 & 0.721 & 0.116 & 0.744 & 0.104 & 0.721 & 0.120 \\
\hline
\end{tabular}
\end{table}

\vspace{0.5cm}

Compared with DPM--Wishart, MFM--Wishart yields very similar ARI values. Penalized FMM attains the highest overall average ARI, and the advantage is more pronounced at \(n=50\) and \(n=100\). This is expected because Penalized FMM explicitly models sparsity in the cluster-specific scale matrices, and its regularization can lead to more stable estimation in higher-dimensional settings.

As for the posterior estimation of the number of clusters, Table~\ref{tab:large-matrix-k-accuracy} in Appendix shows that both MFM--Wishart and DPM--Wishart recover the true number of clusters with high accuracy in the large-matrix setting. The two methods are nearly indistinguishable across sample sizes, with accuracy increasing toward one as \(n\) grows.

It is worth noting that MFM--Wishart does not explicitly model sparsity in the cluster-specific scale matrices. Even so, it still shows competitive performance in this large-matrix setting, especially when the sample size is moderate or large. Incorporating explicit sparsity regularization into MFM--Wishart may be a promising direction for future work.

\subsection{Computational efficiency}
\label{sec:computational-efficiency}

We also evaluate the MCMC computation time of MFM--Wishart and DPM--Wishart. The two samplers use the same Wishart likelihood, the same collapsed predictive densities, the same update for the shared degrees-of-freedom parameter \(\nu\), and the same MCMC settings. The MFM-specific term \(V_n(K^*+1)/V_n(K^*)\) is precomputed, so the additional cost of using the MFM prior is only a scalar lookup when updating cluster labels.

In our simulations, the running time is mainly affected by the sample size \(n\) and, to a lesser extent, by the number of clusters. Larger \(n\) requires more cluster-label updates per MCMC iteration, while more clusters require more predictive likelihood evaluations for each label update. By contrast, within the small- and medium-matrix settings considered here, increasing the matrix dimension from \(p=3\) to \(p=6\) has little visible effect on computation time.

The results in Appendix~\ref{app:add_sim} support this interpretation. Figures~\ref{fig:mcmc-time-balanced-small-medium} and~\ref{fig:mcmc-time-unbalanced-small-medium} show that MFM--Wishart and DPM--Wishart have highly overlapping computation-time distributions under both balanced and unbalanced cluster-size configurations. The large-matrix results in Table~\ref{tab:large-matrix-mcmc-time} show the same pattern. MFM--Wishart is only marginally slower than DPM--Wishart on average, with differences of about 1--2\% across sample sizes. Thus, in the simulation settings considered here, the improved or comparable recovery of the number of clusters by MFM--Wishart is achieved without a practically meaningful increase in MCMC computation time.

\subsection{Misspecified setting with temporal dependence}\label{sec:mis}

As an additional robustness check, we consider a misspecified setting in which the observed covariance-type matrices are computed from temporally dependent multivariate time series, so that the within-cluster distributions are not exactly Wishart. The detailed simulation design is given in Appendix~\ref{app:misspecified}. The resulting performance is broadly similar to that in the well-specified setting. MFM--Wishart remains generally competitive in ARI, while showing more reliable recovery of the true number of clusters than DPM--Wishart.

\section{Application to Infant fNIRS Functional Connectivity}
\label{sec:application}

\subsection{Data and construction of functional connectivity matrices}
\label{sec:app_data}

We analyze a publicly available task-free fNIRS dataset \citep{blanco22} of hemodynamic activity in 4-month-old infants during natural sleep. fNIRS is a noninvasive optical neuroimaging technique that uses near-infrared light to monitor cortical hemodynamics as a proxy for neural activity. The measurements allow estimation of relative concentration changes in oxyhemoglobin (HbO) and deoxyhemoglobin (HbR).

The original dataset contains task-free fNIRS recordings from 104 healthy, full-term infants. Each recording session lasted between 9 and 25 minutes and began after clear signs of natural sleep were observed. Five infants were excluded from the original analysis because of insufficient data quality, resulting in a final sample of 99 infants (female: 51; male: 48) from three language backgrounds: 30 Spanish monolingual, 33 Basque monolingual, and 36 Basque--Spanish bilingual infants. Data were collected using a NIRx NIRScout system with 16 sources and 24 detectors, yielding 52 channels for each hemoglobin signal. The system used two wavelengths, 760 and 850 nm, and sampled at 8.93 Hz. Optodes were placed on an Easycap according to the international 10--20 system, with source--detector separations of approximately 20--45 mm, covering bilateral frontal, temporal, parietal, and occipital regions.

We use the preprocessed HbO data for the final sample of 99 infants from \citet{blanco22}. The data are openly accessible at \url{https://osf.io/7fzkm/overview}. The preprocessed signals are provided in the folder \texttt{mat}, the participant information is provided in \texttt{BCBL\_RS4\_participant-info.csv}, and the data-format description is provided in \texttt{BCBL\_RS4\_osf-data-format.rtf}. Interested readers can refer to \citet{blanco22} for details of the preprocessing pipeline. For each infant, the multivariate HbO time series across all channels was manually cropped to a common length of \(T=5{,}000\) samples, corresponding to approximately 560 seconds, so that all infants contributed equally to downstream analyses.

Previous studies using this dataset have examined group differences and subnetwork-level patterns in infant brain activity. For example, \citet{wang23} reported sex differences in brain activity, with stronger differences in the frontoparietal, somatomotor, visual, and dorsal networks. In their functional-connectivity analysis of the same infant sleep fNIRS dataset, \citet{wang23} computed pairwise Pearson correlations between channel time courses and displayed group-level correlation adjacency matrices. These matrices descriptively suggested weaker overall connectivity in male infants than in female infants, with the difference appearing most pronounced around channels 40--46. This observation was primarily based on visual inspection of the correlation adjacency matrices rather than a formal subnetwork-level test.

Motivated by this, we conduct a targeted secondary analysis of the subject-specific connectivity matrices for channels 40--46. We use Pearson correlation matrices rather than covariance matrices for two reasons. First, this choice maintains direct comparability with the correlation-based functional-connectivity analysis of \citet{wang23}. Second, correlations standardize out channel-specific and subject-specific marginal scale differences in the HbO signals, allowing the clustering analysis to focus on scale-free co-fluctuation patterns among channels rather than absolute signal variance. Our goal is not to directly replicate the group-level sex-difference analysis of \citet{wang23}. Instead, we ask whether subject-specific connectivity patterns in this region form latent heterogeneous clusters, and then evaluate whether the resulting clusters are associated with recorded sex as an external post hoc covariate.

For infant \(i=1,\ldots,99\), let \(\boldsymbol{x}_i(t)\in\mathbb{R}^{p}\), \(t=1,\ldots,T\), denote the vector of HbO signals at the selected channels at time \(t\), where \(p=7\) for channels 40--46 and \(T=5{,}000\). Let
\(
\bar{\boldsymbol{x}}_i
=
\frac{1}{T}\sum_{t=1}^{T}\boldsymbol{x}_i(t)
\)
denote the sample mean vector. The lag-0 sample covariance matrix is computed as
\(
\widehat{\mathbf{\Sigma}}_i
=
\frac{1}{T-1}
\sum_{t=1}^{T}
\bigl(\boldsymbol{x}_i(t)-\bar{\boldsymbol{x}}_i\bigr)
\bigl(\boldsymbol{x}_i(t)-\bar{\boldsymbol{x}}_i\bigr)^{\top}.
\)
The corresponding sample correlation matrix is
\(
\widehat{\mathbf{R}}_i
=
\mathbf{D}_i^{-1/2}
\widehat{\mathbf{\Sigma}}_i
\mathbf{D}_i^{-1/2}
\)
, where
\(\mathbf{D}_i
=
\operatorname{diag}\!\left\{
(\widehat{\mathbf{\Sigma}}_i)_{11},\ldots,
(\widehat{\mathbf{\Sigma}}_i)_{pp}
\right\}.
\)
We take
\(
\mathbf{W}_i=\widehat{\mathbf{R}}_i
\)
, 
\(
i=1,\ldots,99,
\)
as the matrix-valued observations in the MFM--Wishart model. Although Pearson correlation matrices are not exactly Wishart-distributed because of their unit-diagonal constraint, using a Wishart likelihood as a working model for correlation-based functional-connectivity matrices has precedent in prior Wishart-mixture analyses \citep{tokuda21,tokuda21psy,capp25}.

\subsection{MFM--Wishart analysis}
\label{sec:app_analysis}

We use the same MFM prior settings as in the simulation studies, namely \(\gamma=1\) and \(K-1\sim\mathrm{Poisson}(1)\). For the shared degrees-of-freedom parameter \(\nu\), we assign a prior \(\nu \sim \mathrm{Uniform}(10,100)\). The prior mean of \(\nu\) is \(\nu_0=55\), and the matrix dimension is \(p=7\). For the inverse-Wishart prior on the cluster-specific scale matrices, we set \(\kappa_0=12\) and
\(
\mathbf{\Psi}_0=(\kappa_0-p-1)\frac{1}{\nu_0}\mathbf{I}_p=\frac{4}{55}\mathbf{I}_7.
\)
Under the prior \(\mathbf{\Sigma}_k\sim \mathcal{IW}_p(\mathbf{\Psi}_0,\kappa_0)\), the prior mean
\(
\mathbb{E}(\mathbf{\Sigma}_k)
=
\frac{\mathbf{\Psi}_0}{\kappa_0-p-1}
=
\frac{1}{\nu_0}\mathbf{I}_p.
\)
Hence, when \(\nu\) is near \(\nu_0\), the implied prior center of the Wishart mean \(\nu\mathbf{\Sigma}_k\) is close to \(\mathbf{I}_p\), which is weakly informative for correlation matrices. This centers the Wishart mean at unit marginal scale and encodes a neutral zero-correlation baseline rather than favoring any specific nonzero off-diagonal pattern. The choice \(\kappa_0-p-1=4\) yields only mild shrinkage toward this baseline.

Posterior computation follows the MCMC algorithm in Section~\ref{sec:inference} together with Dahl's method for post-processing. We run a single MCMC chain for 20{,}000 iterations, discard the first 8{,}000 iterations as burn-in, and use a Gaussian random-walk proposal with standard deviation 3.0 for updating \(\nu\).

\begin{figure}[h!]
\centering
\includegraphics[width=0.98\textwidth]{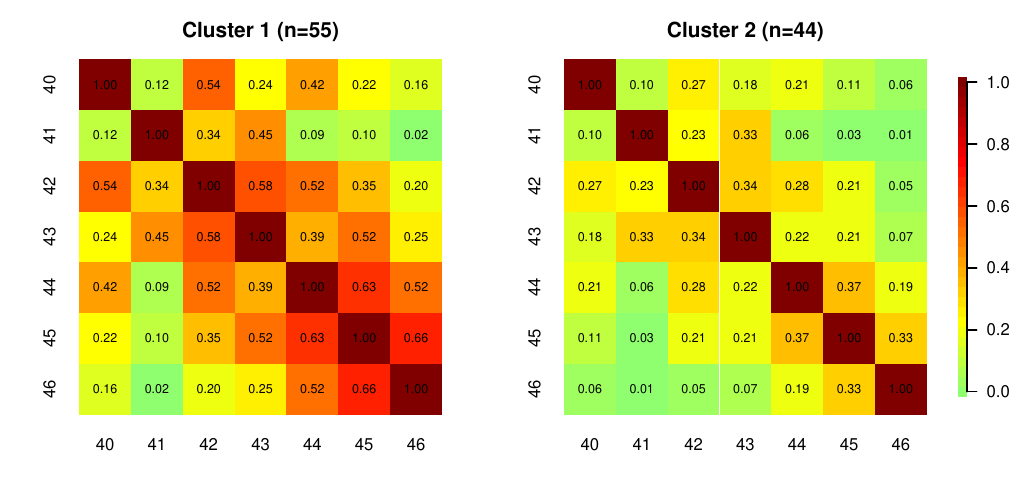}
\caption{Cluster-wise mean correlation matrices for the Dahl partition obtained by applying MFM--Wishart to the \(7\times 7\) correlation submatrices for channels 40--46. Cluster 1 exhibits stronger positive connectivity throughout this region.}
\label{fig:app_channels4046_mean}
\end{figure}

Applying our MFM--Wishart to the \(7\times 7\) correlation submatrices yields a Dahl representative partition with two clusters. The inferred cluster sizes are 55 and 44. Figure~\ref{fig:app_channels4046_mean} displays the cluster-wise mean correlation matrices, obtained by averaging the observed correlation matrices within each inferred cluster. Both clusters exhibit positive connectivity throughout the region, but Cluster 1 shows systematically stronger correlations among channels 40--46. Thus, the main latent heterogeneity identified by the model in this region is a difference in overall connectivity strength.

To examine the MCMC behavior in the application, we report trace plots of \(\nu\) and \(K_{+,n}\) in Appendix~\ref{app:application_trace}. The trace of \(\nu\) stabilizes quickly, and the effective sample size of the retained posterior draws of \(\nu\) is \(1{,}998.36\). The estimated posterior mean of \(\nu\) is \(26.53\), with a 95\% credible interval of \((25.26,27.74)\). The trace of \(K_{+,n}\) reaches \(2\) after 145 iterations and remains at \(2\) thereafter, suggesting strong posterior support for two clusters in this application. To assess the sensitivity of the MCMC output to initialization and to examine whether the reported clustering was driven by a particular random seed or starting configuration, we repeated the application analysis 100 times using different random seeds. In each repeated run, the initial number of clusters was randomly selected from \(\{1,\ldots,99\}\). These 100 random initializations yielded consistent clustering results, providing empirical support for the stability of the reported partition.

As a post hoc comparison with an external covariate, we next compare the inferred clusters with sex. Table~\ref{tab:app_channels4046_gender} reports the contingency table between inferred cluster labels and sex. Cluster 1 contains 26 female and 29 male infants, whereas Cluster 2 contains 25 female and 19 male infants. A Fisher's exact test yields a \(p\)-value of \(0.420\), indicating no statistically significant evidence of association between the inferred clusters and sex. In particular, the more strongly connected cluster is not enriched for female infants.

\begin{table}[h!]
\centering
\caption{Sex composition of the inferred clusters in channels 40--46.}
\label{tab:app_channels4046_gender}
\begin{tabular}{lcc}
\hline
 & Female & Male \\
\hline
Cluster 1 (\(n=55\)) & 26 & 29 \\
Cluster 2 (\(n=44\)) & 25 & 19 \\
\hline
\end{tabular}
\end{table}

Overall, this targeted analysis provides a complementary perspective on the earlier qualitative observation. The channels 40--46 region does exhibit interpretable between-subject heterogeneity, and one inferred cluster is characterized by overall stronger connectivity within this subnetwork. However, when the full \(7\times 7\) connectivity pattern is modeled jointly, the dominant latent grouping detected by MFM--Wishart does not appear to be sex-driven.

\section{Conclusion}\label{sec:conclusion}

In this paper, we proposed MFM--Wishart, a Bayesian model-based clustering approach for covariance and correlation matrices. By combining Wishart mixture components with a MFM prior, the proposed model enables joint posterior inference on the clustering assignments and the number of clusters. Although DPM--Wishart can also be used for the joint inference, it does not in general provide posterior consistency for the number of clusters when the data are generated from a finite mixture. This motivates the MFM formulation in our setting. On the theoretical side, we studied Wishart kernels in the context of mixture models and established posterior consistency for the number of components together with posterior contraction for the mixing measure under standard regularity conditions. We also developed an efficient MCMC algorithm for posterior computation.

From a high-level modeling perspective, the MFM prior assumes a finite but unknown number of mixture components, whereas the DPM prior allows a countably infinite collection of potential components. As argued in \citet{miller18}, when the scientific question suggests a finite collection of meaningful latent subtypes, the MFM formulation may be a more natural choice and offers more direct prior control over the number of clusters. By contrast, under a DPM prior, the induced behavior of the number of clusters is controlled only indirectly through the concentration parameter and its interaction with the sample size.

The simulation studies showed that MFM--Wishart provides competitive clustering performance across a range of settings, while yielding more reliable recovery of the true number of clusters than DPM--Wishart when the data are generated from a finite mixture. The misspecified-case simulations further suggest that the method remains effective for clustering when the Wishart likelihood is interpreted as a working model rather than an exact distributional assumption. In the application to infant fNIRS functional connectivity, MFM--Wishart identified interpretable heterogeneity in subject-specific connectivity patterns, illustrating the practical value of the model for matrix-valued data analysis. In practice, MFM--Wishart may also be useful as an exploratory tool for suggesting a plausible number of clusters or an initial partition to be refined by other procedures.

For the current MFM--Wishart model, several directions merit further study. One is to extend MFM--Wishart to incorporate explicit sparsity regularization in the cluster-specific scale matrices, which may be beneficial in higher-dimensional settings and for better interpretability. Another is to incorporate spatial dependence into the model, as considered in \citet{lan21} and \citet{zhu25}, which may be particularly relevant for applications such as clustering voxels in diffusion tensor imaging. Beyond neuroimaging, the proposed framework may also be applied to other covariance-type data, such as covariance matrices derived from multivariate intensive longitudinal data.

Extensions beyond the Wishart kernel are also worth exploring. One natural extension is an MFM model with inverse-Wishart kernels for clustering precision-type matrices \citep{nydick12}. For correlation matrices, it would also be of interest to replace the Wishart kernel, which serves as a working likelihood in our study, with a distribution supported directly on the space of SPD matrices with unit diagonals, such as an LKJ-type construction \citep{lew09}, although developing a sufficiently flexible model and an efficient MCMC algorithm for such nonconjugate mixtures remains challenging.

\vspace{0.5cm}
\paragraph*{Code availability}

The code used to implement the MFM--Wishart model, reproduce the simulation studies, and carry out the infant fNIRS functional-connectivity application is available at \url{https://github.com/Zongyu-Li/MFM-Wishart}.

\newpage
\bibliographystyle{plainnat}
\bibliography{references}

\newpage

\begin{appendices}

\renewcommand{\thefigure}{\thesection.\arabic{figure}}
\renewcommand{\thetable}{\thesection.\arabic{table}}
\counterwithin{figure}{section}
\counterwithin{table}{section}

\section{Derivation of the Hessian of the Wishart Distribution}\label{sec:Hessian}

\subsection{Notation and facts}

We consider the following Wishart density parameterized by the precision matrix \(\mathbf{\Lambda}\) and the degrees-of-freedom \(\nu\):

\[
f(\mathbf{W}\mid \mathbf{\Lambda},\nu)=\frac{|\mathbf{\Lambda}|^{\nu/2}}{2^{\nu p/2}\Gamma_p(\nu/2)}|\mathbf{W}|^{(\nu-p-1)/2}\exp\left\{-\frac{1}{2}\tr(\mathbf{\Lambda}\mathbf{W})\right\},\;\mathbf{W}\in\mathbb{S}_{++}^p
\]
where the parameters \(\mathbf{\Lambda}\in\mathbb{S}_{++}^p\) and \(\nu>p-1\).

Next, we consider the reparameterization stated in Section~\ref{sec:theory}, where \(\boldsymbol{\eta}:=\vech{(\mathbf{\Lambda})}\in \mathbb{R}^{d}\) with \(d=p(p+1)/2\). We let \(D_p\in\mathbb{R}^{p^2\times d}\) be the duplication matrix such that \(\text{vec}(\mathbf{S})=D_p\vech{(\mathbf{S})}\) for all \(\mathbf{S}\in\mathbb{S}_{++}^p\), where \(\text{vec}(\mathbf{S})\) stacks \(\mathbf{S}\) by columns into a vector. We denote by \(\mathbb{S}^p\) the space of all \(p\times p\) symmetric matrices.

So, the reparameterized parameter space is
\[
\Theta := \left\{(\boldsymbol{\eta},\nu):\mathbf{\Lambda}(\boldsymbol{\eta})\in\mathbb{S}_{++}^p,\nu>p-1\right\}
\]
where we write \(\mathbf{\Lambda}(\boldsymbol{\eta})\) as the unique matrix \(\mathbf{\Lambda}\) such that \(\vech{(\mathbf{\Lambda})}=\boldsymbol{\eta}\).

We now state some facts from matrix calculus and matrix algebra that will be used. For conformable matrices, the following identities will be used:
\begin{align}
& d\,\text{vec}(\mathbf{\Lambda}) = D_p\, d \boldsymbol{\eta}, \label{fact:vech}\\
&d\log|\mathbf{A}|=\tr(\mathbf{A}^{-1}\,d\mathbf{A}), \label{fact:logdet}\\
&d\,\tr(\mathbf{A}\mathbf{X})=\tr(\mathbf{X}\,d\mathbf{A})\quad(\text{with }\mathbf{X}\ \text{fixed}), \label{fact:trace}\\
&\tr(\mathbf{M}^\top \mathbf{N})=\operatorname{vec}(\mathbf{M})^\top \operatorname{vec}(\mathbf{N}), \label{fact:vec-tr}\\
&\operatorname{vec}(\mathbf{A}\mathbf{X}\mathbf{B})=(\mathbf{B}^\top\otimes \mathbf{A})\,\operatorname{vec}(\mathbf{X}), \label{fact:vec-AXB}\\
&d(\mathbf{A}^{-1})=-\mathbf{A}^{-1}(d\mathbf{A})\mathbf{A}^{-1}, \label{fact:inv},
\end{align}
where \(\otimes\) denotes the Kronecker product. Furthermore, by (\ref{fact:vech}), (\ref{fact:vec-AXB}) and (\ref{fact:inv}), it is easy to show that, for \(\mathbf{\Lambda}\in \mathbb{S}_{++}^p\),
\begin{align}
    d\,\text{vec}(\mathbf{\Lambda}^{-1})=-(\mathbf{\Lambda}^{-1}\otimes \mathbf{\Lambda}^{-1}) D_p d\boldsymbol{\eta} \label{fact:vec-inv}
\end{align}

Throughout the Appendix, for a real symmetric matrix
\(\mathbf B\in\mathbb S^q\), we write \(\lambda_1(\mathbf B),\ldots,\lambda_q(\mathbf B)\) for its real eigenvalues and define \(\lambda_{\max}(\mathbf B):=\max_{1\le j\le q}\lambda_j(\mathbf B)\). For a matrix \(\mathbf A\), we use \(\|\mathbf A\|_{\mathrm{op}}\) and \(\|\mathbf A\|_F\) to denote its operator norm and Frobenius norm, respectively:
\[
\|\mathbf A\|_{\mathrm{op}}
:=
\sup_{\|\boldsymbol x\|_2=1}\|\mathbf A\boldsymbol x\|_2
=
\sqrt{\lambda_{\max}(\mathbf A^\top\mathbf A)},
\qquad
\|\mathbf A\|_F
:=
\{\tr(\mathbf A^\top\mathbf A)\}^{1/2}
=
\left(\sum_{i,j}(\mathbf A)_{ij}^2\right)^{1/2}.
\]
In particular, if \(\mathbf A\in\mathbb S^p\) is symmetric, then \(\|\mathbf A\|_{\mathrm{op}}=\max_{1\le j\le p}|\lambda_j(\mathbf A)|\) and \(\|\mathbf A\|_{\mathrm{op}}\le \|\mathbf A\|_F\). Moreover, for any symmetric matrix \(\mathbf A\), \(\|\mathbf A\|_F\le \sqrt{2}\|\vech(\mathbf A)\|_2\). If \(\mathbf A\in\mathbb S_{++}^p\), then \(\|\mathbf A\|_F\le \tr(\mathbf A).\)

\subsection{Gradients}\label{sec:grad}

We first consider the log-density of the Wishart kernel, denoted by
\[
l(\boldsymbol{\eta},\nu):= \frac{\nu}{2}\log|\mathbf{\Lambda}| - \frac{\nu p}{2}\log2 -\log \Gamma_{p}\left(\frac{\nu}{2}\right) +\frac{\nu-p-1}{2}\log|\mathbf{W}| - \frac{1}{2}\tr(\mathbf{\Lambda W}).
\]
where we write \(\mathbf{\Lambda}\) for \(\mathbf{\Lambda}(\boldsymbol{\eta})\) for notational convenience.

\subsubsection{\(\nabla_{\nu}\,l\)}

The gradient with respect to \(\nu\) is
\[
\nabla_\nu\,l=\frac{1}{2}\log|\mathbf{\Lambda}| - \frac{p}{2}\log 2 - \frac{1}{2}\psi_p\left(\frac{\nu}{2}\right) + \frac{1}{2}\log|\mathbf{W}|
\]
where \(\psi_p(x):=\frac{d}{dx}\log\Gamma_p(x)\) and we let \(\psi'_p(x)\) denote its derivative.

\subsubsection{\(\nabla_{\boldsymbol{\eta}}\,l\)}

By (\ref{fact:logdet}) and (\ref{fact:trace}), we have
\begin{align*}
    d_{\mathbf{\Lambda}}l =\frac{\nu}{2}d\,\log|\mathbf{\Lambda}| - \frac{1}{2}d\tr(\mathbf{\Lambda W}) = \frac{\nu}{2}\tr(\mathbf{\Lambda}^{-1}d\mathbf{\Lambda})-\frac{1}{2}\tr(\mathbf{W}d\mathbf{\Lambda})=\frac{1}{2}\tr\Big((\nu\mathbf{\Lambda}^{-1}-\mathbf{W})d\mathbf{\Lambda}\Big)
\end{align*}

Note that \(\nu\mathbf{\Lambda}^{-1}-\mathbf{W}\) is symmetric. Then, by (\ref{fact:vec-tr}), 
\[
d_{\mathbf{\Lambda}}l = \frac{1}{2}\text{vec}(\nu\mathbf{\Lambda}^{-1}-\mathbf{W})^\top d\,\text{vec}(\mathbf{\Lambda}) = \boldsymbol{g}^\top\,d\,\text{vec}(\mathbf{\Lambda}),
\]
where we define \(\boldsymbol{g}:=\frac{1}{2}\text{vec}(\nu\mathbf{\Lambda}^{-1}-\mathbf{W})\). Then, using the \(\boldsymbol{\eta}\)-parameterization,
\[
\nabla_{\boldsymbol{\eta}}l = D_p^\top\boldsymbol{g}
\]

\subsection{Hessian}\label{sec:hessian}

\subsubsection{\(\nabla_{\nu\nu}^2l\)}

The Hessian block \(\nabla_{\nu\nu}^2l\) is easily derived as
\[
\nabla_{\nu\nu}^2l = -\frac{1}{4}\psi'_p\left(\frac{\nu}{2}\right)
\]

\subsubsection{\(\nabla_{\boldsymbol{\eta}\nu}^2l\) and \(\nabla_{\nu\boldsymbol{\eta}}^2l\)}

We have
\[
\nabla_{\boldsymbol{\eta}\nu}^2l = \frac{\partial(\nabla_{\boldsymbol{\eta}}l)}{\partial \nu} = \frac{1}{2}D_p^\top\text{vec}(\mathbf{\Lambda}^{-1})
\]
and the transposed block is 
\[
\nabla_{\nu\boldsymbol{\eta}}^2l = \frac{1}{2}\text{vec}(\mathbf{\Lambda}^{-1})^\top D_p
\]

\subsubsection{\(\nabla_{\boldsymbol{\eta}\boldsymbol{\eta}}^2l\)}

By (\ref{fact:inv}), for \(\boldsymbol{g}:=\frac{1}{2}\text{vec}(\nu\mathbf{\Lambda}^{-1}-\mathbf{W})\), we have
\begin{align*}
    &d_{\mathbf{\Lambda}}\boldsymbol{g}=\frac{\nu}{2}\text{vec}\big(d(\mathbf{\Lambda}^{-1})\big)=-\frac{\nu}{2}\left(\mathbf{\Lambda}^{-1}\otimes\mathbf{\Lambda}^{-1}\right)d\,\text{vec}(\mathbf{\Lambda})\\
    \Rightarrow  \quad &d_{\boldsymbol{\eta}}\boldsymbol{g}=-\frac{\nu}{2}\left(\mathbf{\Lambda}^{-1}\otimes\mathbf{\Lambda}^{-1}\right)D_p\,d\boldsymbol{\eta}\\
     \Rightarrow  \quad & \nabla_{\boldsymbol{\eta}\boldsymbol{\eta}}^2l = -\frac{\nu}{2} D_p^\top\left(\mathbf{\Lambda}^{-1}\otimes\mathbf{\Lambda}^{-1}\right)D_p
\end{align*}

\subsubsection{Hessian of \(f\)}

Write \(f:= f(\mathbf{W}\mid \mathbf{\Lambda},\nu)\) for notational convenience. We use the fact that \(\nabla^2f=f\cdot\left(\nabla^2l+\nabla l \nabla l^\top\right)\). Applying this identity blockwise gives the Hessian of \(f\). We organize the Hessian as follows:
\[
\nabla^2f=\begin{bmatrix}
    \nabla_{\boldsymbol{\eta}\boldsymbol{\eta}}^2f & \nabla_{\boldsymbol{\eta}\nu}^2f\\
     \nabla_{\nu\boldsymbol{\eta}}^2f & \nabla_{\nu\nu}^2f
\end{bmatrix}
\]
with
\begin{align*}
    & \nabla_{\boldsymbol{\eta}\boldsymbol{\eta}}^2f = f\cdot D_p^\top \left(\boldsymbol{g}\boldsymbol{g}^\top - \frac{\nu}{2}\left(\mathbf{\Lambda}^{-1}\otimes \mathbf{\Lambda}^{-1}\right)\right)D_p, \\
    & \nabla_{\boldsymbol{\eta}\nu}^2f = f\cdot D_p^\top \left((\nabla_\nu l)\boldsymbol{g} + \frac{1}{2}\text{vec}(\mathbf{\Lambda}^{-1})\right),\quad \nabla_{\nu\boldsymbol{\eta}}^2f = \left(\nabla_{\boldsymbol{\eta}\nu}^2f\right)^\top, \\
    & \nabla_{\nu\nu}^2f = f\cdot\left[(\nabla_\nu l)^2-\frac{1}{4}\psi'_p\left(\frac{\nu}{2}\right)\right]
\end{align*}

\section{Proof of Theoretical Results}

Throughout this section, we always assume that Assumptions~\ref{ass:data_generating}--\ref{ass:base_prior} presented in Section~\ref{sec:theory} hold, and we focus on the MFM--Wishart model~(\ref{eq:mfm_wishart_model}) restricted to \(\Theta^*\), defined in Section~\ref{sec:theory}.

\subsection{Auxiliary lemmas}\label{sec:lemmas}


We first prove a technical lemma that will be used to show that the Wishart kernel satisfies the first-order uniform Lipschitz property when verifying (P.1).

\begin{lemma}\label{eq:amgm}
Let \(p\) be a positive integer. For any scalars \(a,b>0\), \(m\ge 0\), \(r>0\), \(c\ge 0\), and \(\alpha>0\), there exists a constant
\(M<\infty\) depending only on \((a,b,c,m,r,\alpha,p)\) such that
\[
\sup_{\mathbf{W}\in \mathbb{S}_{++}^p}
|\mathbf{W}|^\alpha \exp\!\bigl\{-a\,\tr(\mathbf{W})\bigr\}\,
\tr(\mathbf{W})^m\,
\bigl[c+b|\log|\mathbf{W}||\bigr]^r
\le M .
\]
\end{lemma}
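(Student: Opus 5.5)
The plan is to reduce everything to the eigenvalues of \(\mathbf{W}\). Write \(\lambda_1,\dots,\lambda_p>0\) for the eigenvalues of \(\mathbf{W}\in\mathbb{S}_{++}^p\), and set \(t:=\tr(\mathbf{W})=\sum_j\lambda_j\) and \(D:=|\mathbf{W}|=\prod_j\lambda_j\). The AM--GM inequality gives \(D\le (t/p)^p\). The only delicate factor is \(\bigl[c+b|\log D|\bigr]^r\), because \(|\log D|\) blows up both as \(D\to0\) and as \(D\to\infty\). The other factors already supply the needed decay, so the strategy is to split according to whether \(D\le 1\) or \(D>1\).

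\emph{Case \(D\le 1\).} I bound \(\log\)-growth by a small power. For any \(\beta\in(0,\alpha)\), the function \(D^{\beta}|\log D|^r\) is bounded on \((0,1]\) by a constant depending on \((\beta,r)\): substituting \(D=e^{-s}\) with \(s\ge0\), it becomes \(e^{-\beta s}s^r\le (r/(e\beta))^r\). I take \(\beta=\alpha/2\). Using \((x+y)^r\le 2^{\max(r-1,0)}(x^r+y^r)\), I get
\[
D^{\alpha}\bigl[c+b|\log D|\bigr]^r\le C_1 D^{\alpha/2}\le C_1.
\]
The remaining factor \(t^m e^{-at}\) is bounded on \([0,\infty)\) by \((m/(ae))^m\), with the convention that it equals \(1\) when \(m=0\). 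So the product is bounded in this case.

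\emph{Case \(D>1\).} Here \(|\log D|=\log D\le p\log(t/p)\le p\,t/p=t\). The whole expression is then at most
\[
(t/p)^{p\alpha}\,t^m\,(c+bt)^r e^{-at},
\]
which is a polynomial in \(t\) times \(e^{-at}\), and hence is bounded over \(t>0\) by a constant depending only on \((a,b,c,m,r,\alpha,p)\).

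Combining the two cases gives the finite constant \(M\). The main point requiring care is the small-determinant regime. There \(t\) need not be small, since one eigenvalue can tend to \(0\) while the others stay large, so \(D\to0\) with \(t\) fixed. This is exactly why the logarithm must be controlled by the factor \(|\mathbf{W}|^\alpha\) itself rather than by the trace. The splitting above does this, and it uses \(\alpha>0\) in an essential way.
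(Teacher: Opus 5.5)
Your proof is correct and follows essentially the same route as the paper: AM--GM gives \(|\mathbf{W}|\le(\tr(\mathbf{W})/p)^p\), and the paper also splits into \(|\mathbf{W}|\le 1\), where \(|\mathbf{W}|^\alpha\) absorbs the logarithm and \(t^m e^{-at}\) is bounded separately, and \(|\mathbf{W}|>1\), where the logarithm is controlled through the trace and \(e^{-at}\) dominates. The only differences are cosmetic: you use the cruder bound \(\log|\mathbf{W}|\le t\) where the paper uses \(p\log t-p\log p\), and you give explicit constants where the paper appeals to continuity and limits.
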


\noindent\textbf{Proof:}
Let \(t:=\tr(\mathbf{W})>0\) and \(x:=|\mathbf{W}|>0\).
By the AM--GM inequality applied to the eigenvalues of \(\mathbf{W}\),
\[
x=|\mathbf{W}| \le \left(\frac{\tr(\mathbf{W})}{p}\right)^p=\left(\frac{t}{p}\right)^p,
\qquad
\log x\le p\log t-p\log p.
\]

\emph{Case 1: \(x=|\mathbf{W}|\ge 1\).}
Then \(1\le x\le (t/p)^p\) implies \(t\ge p\), hence \(\log t\ge 0\).
Moreover, \(\log x\ge 0\) so \(|\log x|=\log x\).
Thus
\[
x^\alpha \le (t/p)^{\alpha p},
\qquad
c+b|\log x| = c+b\log x \le c+b(p\log t-p\log p)\le C_1(1+\log t),
\]
for a constant \(C_1\) depending only on \((b,c,p)\).
Therefore,
\[
x^\alpha\,[c+b|\log x|]^r \le C_2\, t^{\alpha p}(1+\log t)^r,
\]
for a constant \(C_2\) depending only on \((b,c,\alpha,r,p)\), and hence
\[
x^\alpha e^{-a t} t^m [c+b|\log x|]^r
\le
C_2\, t^{m+\alpha p}(1+\log t)^r e^{-a t}.
\]
Since the exponential term dominates any polynomial and logarithmic growth as \(t\to\infty\),
we have
\[
\sup_{t\ge p} t^{m+\alpha p}(1+\log t)^r e^{-a t}<\infty.
\]

\emph{Case 2: \(0<x=|\mathbf{W}|<1\).}
Define \(\phi(x):=x^\alpha[c+b|\log x|]^r\) on \((0,1)\).
For \(x\in(0,1)\), let \(x=e^{-y}\) with \(y>0\). Then
\[
\phi(e^{-y})=e^{-\alpha y}(c+by)^r\to 0
\qquad \text{as } y\to\infty.
\]
Since \(\phi\) is continuous on \((0,1)\), \(\phi(x)\to 0\) as \(x\downarrow 0\), and
\(\phi(x)\to c^r\) as \(x\uparrow 1\), it follows that
\[
\sup_{0<x<1}\phi(x)=:C_0<\infty
\]
for a constant \(C_0\) depending only on \((b,c,\alpha,r)\).
Hence, for \(0<x<1\),
\[
x^\alpha[c+b|\log x|]^r \le C_0,
\]
and therefore
\[
x^\alpha e^{-a t} t^m [c+b|\log x|]^r
\le C_0\, t^m e^{-a t}.
\]
Finally,
\[
\sup_{t>0} t^m e^{-a t}<\infty.
\]

Combining the two cases proves the claim.
\qed\newline


We next prove an auxiliary lemma for Lemma~\ref{thm:P3}, which will later be used in the proof of Theorem~\ref{thm:contraction}. 

\begin{lemma}\label{thm:P3_lemma}
    We use the same notation as in Section~\ref{sec:theory}. For any \(j\in\{1,\dots,k_0\}\), given any radius \(\omega>0\), we define an open ball 
    \[
    B_j:=B(\boldsymbol{\theta}_j^0,\omega)=\left\{\boldsymbol{\theta}:\|\boldsymbol{\theta}-\boldsymbol{\theta}_j^0\|_2<\omega\right\},
    \]
    and we denote the mass of \(B_j\) under the measure \(G\) by \(G(B_j)=\sum_{i:\boldsymbol{\theta}_i\in B_j}\pi_i\). We have the following inequality:
    \[
    W_r(G,G_0)^r\ge \omega^r(\pi_j^0-G(B_j))_+
    \]
    where \(x_+\) is the positive part of \(x\in\mathbb{R}\).
\end{lemma}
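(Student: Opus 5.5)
The argument works with an arbitrary coupling and bounds its cost from below. Fix any $\mathbf{Q}\in\mathcal{Q}(\boldsymbol{\pi},\boldsymbol{\pi}^0)$ and split the transport cost according to whether the atom $\boldsymbol{\theta}_i$ lies in $B_j$. Keeping only the terms with the fixed index $j$ and $\boldsymbol{\theta}_i\notin B_j$, and using $\|\boldsymbol{\theta}_i-\boldsymbol{\theta}_j^0\|_2\ge\omega$ for such $i$, we get
\[
\sum_{i=1}^{k}\sum_{l=1}^{k_0}q_{il}\|\boldsymbol{\theta}_i-\boldsymbol{\theta}_l^0\|_2^r
\;\ge\;\sum_{i:\boldsymbol{\theta}_i\notin B_j}q_{ij}\|\boldsymbol{\theta}_i-\boldsymbol{\theta}_j^0\|_2^r
\;\ge\;\omega^r\sum_{i:\boldsymbol{\theta}_i\notin B_j}q_{ij}.
\]
All dropped terms are nonnegative, so the first inequality holds.

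Next, bound the mass $\sum_{i:\boldsymbol{\theta}_i\notin B_j}q_{ij}$ from below using the marginal constraints. Since $\sum_i q_{ij}=\pi_j^0$, we have
\[
\sum_{i:\boldsymbol{\theta}_i\notin B_j}q_{ij}=\pi_j^0-\sum_{i:\boldsymbol{\theta}_i\in B_j}q_{ij}.
\]
Because $q_{ij}\le\sum_{l}q_{il}=\pi_i$, the subtracted sum is at most $\sum_{i:\boldsymbol{\theta}_i\in B_j}\pi_i=G(B_j)$. Hence the mass is at least $\pi_j^0-G(B_j)$. It is also trivially at least $0$, so it is at least $(\pi_j^0-G(B_j))_+$.

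Combining the two steps, every coupling has cost at least $\omega^r(\pi_j^0-G(B_j))_+$. Taking the infimum over $\mathbf{Q}$ in \eqref{eq:Wasserstein_def}, after raising to the power $r$, gives the claim. No step is a real obstacle. The only point needing care is the bound $q_{ij}\le\pi_i$, which comes from the row-marginal constraint. If atoms of $G$ are repeated, $G(B_j)$ is still defined as the sum over indices, so the argument is unchanged.
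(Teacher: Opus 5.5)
Your proposal is correct and follows the paper's proof essentially step for step. The paper also drops all terms except column $j$ with atoms outside $B_j$. It then bounds the in-ball column mass by $G(B_j)$ via $q_{ij}\le\pi_i$, and takes the infimum over couplings.
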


\noindent\textbf{Proof:} Fix any \(j\in\{1,\dots,k_0\}\). We will show that, for any coupling matrix \(\mathbf{Q}\in\mathcal{Q}(\boldsymbol{\pi},\boldsymbol{\pi}^0)\),
\[
\sum^k_{i=1}\sum^{k_0}_{l=1}q_{il}\|\boldsymbol{\theta}_i-\boldsymbol{\theta}_l^0\|_2^r\ge \omega^r\left(\pi_j^0-G(B_j)\right)_+
\]

For notational convenience, we define \(S_{\text{in}}:=\{i:\boldsymbol{\theta}_i\in B_j\}\) and \(S_{\text{out}}:=\{i:\boldsymbol{\theta}_i\notin B_j\}\). Then, we have
\begin{align}
    \sum_{i\in S_{\text{in}}}q_{ij} + \sum_{i\in S_{\text{out}}}q_{ij} = \pi_j^0. \label{eq:q_decompose} 
\end{align}

Recall that \(\sum^{k_0}_{l=1}q_{il}=\pi_i\), so for any \(i\), we have \(q_{ij}\le\pi_i\). Thus,
\begin{align}
    \sum_{i\in S_{\text{in}}}q_{ij} \le \sum_{i\in S_{\text{in}}}\pi_i = G(B_j). \label{eq:GB_control}
\end{align}

From (\ref{eq:q_decompose}) and (\ref{eq:GB_control}), we have
\[
\sum_{i\in S_{\text{out}}}q_{ij} = \pi_j^0 - \sum_{i\in S_{\text{in}}}q_{ij} \ge \pi_j^0-G(B_j)\quad \Rightarrow \quad \sum_{i\in S_{\text{out}}}q_{ij} \ge \left(\pi_j^0-G(B_j)\right)_+
\]

Then,
\[
\sum^k_{i=1}\sum^{k_0}_{l=1}q_{il}\|\boldsymbol{\theta}_i-\boldsymbol{\theta}_l^0\|_2^r \ge \sum^k_{i=1}q_{ij}\|\boldsymbol{\theta}_i-\boldsymbol{\theta}_j^0\|_2^r \ge \sum_{i\in S_{\text{out}}}q_{ij}\|\boldsymbol{\theta}_i-\boldsymbol{\theta}_j^0\|_2^r \ge \omega^r\sum_{i\in S_{\text{out}}}q_{ij}\ge \omega^r\left(\pi_j^0-G(B_j)\right)_+
\]

Thus,
\[
W_r(G,G_0)^r = \inf_{\mathbf{Q}\in \mathcal{Q}(\boldsymbol{\pi},\boldsymbol{\pi}^0)}
\sum_{i=1}^{k}\sum_{l=1}^{k_0}q_{il}\|\boldsymbol{\theta}_i-\boldsymbol{\theta}_l^0\|_2^r \ge \omega^r\left(\pi_j^0-G(B_j)\right)_+
\]
\qed

\begin{remark}
    Lemma~\ref{thm:P3_lemma} states an obvious fact: if \(W_r(G,G_0)\) is small, then \(G(B_j)\) will not be too small. More formally, let \(\pi_{\min}^0:=\min_{j}\pi_j^0\), if one wants \(G(B_j)\ge \frac{\pi_j^0}{2}\) for all \(j\), one can choose \(W_r(G,G_0)\le \varepsilon_0\) with \(\varepsilon_0^r\le \omega^r\frac{\pi_{\min}^0}{2}\).
\end{remark}

Next, we prove a lemma used in the proof of Theorem~\ref{thm:contraction}.

\begin{lemma}\label{thm:P3}
    Let \(\mathcal{O}_{k_0}:=\mathcal{O}_{k_0}(\Theta^*)\) denote the space of all mixing measures with at most \(k_0\) support points, all in \(\Theta^*\), and let \(\mu\) be the Lebesgue measure. Then, there exists \(\varepsilon_0>0\) such that \(\int \frac{(p_{G_0}(\mathbf{W}))^2}{p_G(\mathbf{W})}d\mu(\mathbf{W})\le M(\varepsilon_0)\) as long as \(W_1(G,G_0)\le\varepsilon_0\), for any \(G\in\mathcal{O}_{k_0}\), where \(M(\varepsilon_0)\) only depends on \(\varepsilon_0\), \(G_0\) and \(\Theta^*\). \label{eq:P3}
\end{lemma}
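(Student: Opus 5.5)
The plan is to reduce the $\chi^2$-type integral to a finite sum of pairwise integrals of Wishart kernels, and then bound each of these uniformly over a neighbourhood of the true atoms.

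\emph{Step 1: localize the mass of $G$.} Let $\omega>0$ be smaller than half the minimal separation of $\boldsymbol{\theta}^0_1,\dots,\boldsymbol{\theta}^0_{k_0}$; a second, smaller upper bound on $\omega$ will be fixed in Step 3. By Lemma~\ref{thm:P3_lemma} with $r=1$ and the subsequent remark, choosing $\varepsilon_0=\omega\,\pi^0_{\min}/2$ guarantees that $W_1(G,G_0)\le\varepsilon_0$ implies $G(B_j)\ge\pi_j^0/2$ for every $j$. Since the balls $B_j$ are disjoint and $G$ has at most $k_0$ atoms, each $B_j$ contains at least one atom. For each $j$, pick an atom $\boldsymbol{\theta}_{i(j)}\in B_j$ with weight $\pi_{i(j)}\ge \pi_j^0/(2k_0)$; such an atom exists because there are at most $k_0$ atoms in $B_j$.

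\emph{Step 2: reduce to pairwise integrals.} By Jensen's inequality (convexity of $x\mapsto x^2$), $p_{G_0}^2\le \sum_j \pi_j^0 f(\cdot\mid\boldsymbol{\theta}_j^0)^2$. We also have $p_G\ge \pi_{i(j)}f(\cdot\mid\boldsymbol{\theta}_{i(j)})$ for every $j$. Combining these,
\[
\int\frac{p_{G_0}^2}{p_G}\,d\mu\le \sum_{j=1}^{k_0}\frac{2k_0}{\pi^0_{\min}}\int\frac{f(\mathbf{W}\mid\boldsymbol{\theta}_j^0)^2}{f(\mathbf{W}\mid\boldsymbol{\theta}_{i(j)})}\,d\mu(\mathbf{W}).
\]
So it suffices to bound $I(\boldsymbol{\theta}^0,\boldsymbol{\theta}):=\int f(\cdot\mid\boldsymbol{\theta}^0)^2/f(\cdot\mid\boldsymbol{\theta})\,d\mu$ uniformly over $\boldsymbol{\theta}\in\Theta^*$ with $\|\boldsymbol{\theta}-\boldsymbol{\theta}^0\|_2<\omega$.

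\emph{Step 3: compute $I$ in closed form.} Write $\boldsymbol{\theta}^0=(\mathbf{\Lambda}_0,\nu_0)$ and $\boldsymbol{\theta}=(\mathbf{\Lambda},\nu)$. The integrand equals a constant times
\[
|\mathbf{W}|^{(2\nu_0-\nu-p-1)/2}\exp\bigl\{-\tfrac12\tr\bigl((2\mathbf{\Lambda}_0-\mathbf{\Lambda})\mathbf{W}\bigr)\bigr\}.
\]
This is an unnormalized Wishart density with precision $2\mathbf{\Lambda}_0-\mathbf{\Lambda}$ and degrees of freedom $2\nu_0-\nu$. It is integrable, with the standard normalizing constant, provided $2\mathbf{\Lambda}_0-\mathbf{\Lambda}\succ 0$ and $2\nu_0-\nu>p-1$. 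Both conditions hold once $\omega$ is small:
\[
\|\mathbf{\Lambda}-\mathbf{\Lambda}_0\|_{\mathrm{op}}\le\sqrt2\,\omega\le \underline{\lambda}/2
\quad\text{gives}\quad 2\mathbf{\Lambda}_0-\mathbf{\Lambda}\succeq(\underline{\lambda}/2)\mathbf{I}_p,
\]
and $|\nu-\nu_0|<\omega$ gives $2\nu_0-\nu>\nu_0-\omega>p+1-\omega\ge p$ for $\omega\le1$. This step is the crux; the compactness of $\Theta^*$ (in particular $\underline{\nu}>p+1$ and $\underline{\lambda}>0$) is exactly what makes it work. The resulting value is
\[
I=\frac{|\mathbf{\Lambda}_0|^{\nu_0}\,\Gamma_p(\nu/2)\,2^{\nu p/2}}{|\mathbf{\Lambda}|^{\nu/2}\,\Gamma_p(\nu_0/2)^2\,2^{\nu_0 p}}\cdot 2^{(2\nu_0-\nu)p/2}\,\Gamma_p\!\left(\tfrac{2\nu_0-\nu}{2}\right)|2\mathbf{\Lambda}_0-\mathbf{\Lambda}|^{-(2\nu_0-\nu)/2}.
\]

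\emph{Step 4: uniform bound.} The expression in Step 3 is a continuous function of $(\boldsymbol{\theta}^0,\boldsymbol{\theta})$ on the compact set $\{\boldsymbol{\theta}^0\in\{\boldsymbol{\theta}^0_j\},\ \boldsymbol{\theta}\in\Theta^*,\ \|\boldsymbol{\theta}-\boldsymbol{\theta}^0\|_2\le\omega\}$. The relevant arguments of $\Gamma_p$ stay above $(p-1)/2$, and the determinants are bounded away from $0$ and $\infty$. It is therefore bounded by some $M'$, and we obtain $M(\varepsilon_0)=2k_0^2M'/\pi^0_{\min}$, which depends only on $\varepsilon_0$ (through $\omega$), $G_0$, and $\Theta^*$.
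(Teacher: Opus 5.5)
Your proof is correct and follows the paper's own argument. You localize the atoms of $G$ in disjoint balls via Lemma~\ref{thm:P3_lemma}, reduce to the pairwise integrals $I(\boldsymbol{\theta}^0,\boldsymbol{\theta})$, evaluate these as a Wishart normalizing constant with precision $2\mathbf{\Lambda}_0-\mathbf{\Lambda}$ and degrees of freedom $2\nu_0-\nu$, and bound the result by continuity on a compact neighbourhood. The only differences are minor: you use Jensen plus a pigeonhole choice of one heavy atom per ball, which gives the looser factor $2k_0/\pi^0_{\min}$, whereas the paper notes that each ball contains exactly one atom of $G$ and applies Cauchy--Schwarz to get the factor $2$.
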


\noindent\textbf{Proof:} Recall the notations \(G=\sum^k_{i=1}\pi_i\delta_{\boldsymbol{\theta}_i}\) with \(\pi_i\ge0\) for all \(i\) and \(\sum^k_{i=1}\pi_i=1\), and \(G_0=\sum^{k_0}_{j=1}\pi^0_j\delta_{\boldsymbol{\theta}^0_j}\) with \(\pi^0_j>0\) for all \(j\) and \(\sum^{k_0}_{j=1}\pi^0_j=1\). 

We use the following setup to prove this Lemma. We will show the case where \(k_0\ge2\), but the \(k_0=1\) can be handled similarly. For \(j=1,\dots, k_0\), we define an open ball \(B_j:=\left\{\boldsymbol{\theta}\in\Theta^*:\|\boldsymbol{\theta}-\boldsymbol{\theta}_j^0\|_2<\omega\right\}\) for a radius \(\omega>0\). Denote \(\pi_{\min}^0:=\min_{1\le j\le k_0}\pi_j^0>0\) and \(\Delta_0:=\min_{i\neq j}\|\boldsymbol{\theta}_i^0-\boldsymbol{\theta}^0_j\|_2>0\). We note that if we take a radius \(\omega < \frac{\Delta_0}{4}\), the balls \(B_1,\dots,B_{k_0}\) are disjoint.

We note that for any symmetric matrix \(\mathbf{A}\), we have \(\|\mathbf{A}\|_{\text{op}}\le \|\mathbf{A}\|_F\le \sqrt{2}\|\vech{(\mathbf{A})}\|_2\). Thus, for any \(i\), if \(\boldsymbol{\theta}\in B_i\), we have \(\|\mathbf{\Lambda}-\mathbf{\Lambda}_i^0\|_{\text{op}}\le \sqrt{2}\|\boldsymbol{\eta}-\boldsymbol{\eta}_i^0\|_2\le \sqrt{2}\|\boldsymbol{\theta}-\boldsymbol{\theta}_i^0\|_2\le \sqrt{2}\omega\). Also, recall that we work in the restricted parameter space \(\Theta^*\). If we want \(\|\mathbf{\Lambda}-\mathbf{\Lambda}_i^0\|_{\text{op}}\le \frac{\underline{\lambda}}{2}\) for all \(i\), we can take \(\omega \le \frac{\underline{\lambda}}{2\sqrt{2}}\). In addition, if \(|\nu-\nu_i^0|\le \underline{\nu}-p\), we have \(\nu\le \nu_i^0+\underline{\nu}-p\le 2\nu_i^0-p\). Thus, if we want \(2\nu^0_i-\nu\ge p\) for all \(i\), we can take \(\omega \le \underline{\nu}-p\).

From now on, we take \(\omega=\min{\left\{\frac{\Delta_0}{4}, \frac{\underline{\lambda}}{2\sqrt{2}}, \underline{\nu}-p\right\}}\) and \(\varepsilon_0=\frac{\omega\pi_{\min}^0}{2}\). By Lemma~\ref{thm:P3_lemma}, if \(W_1(G,G_0)\le \varepsilon_0\), we have 
\[
G(B_j)\ge\pi_j^0-\frac{W_1(G,G_0)}{\omega}\ge \pi_j^0-\frac{\varepsilon_0}{\omega}\ge \pi_j^0-\frac{\pi_{\min}^0}{2}\ge \frac{\pi^0_j}{2}
\]

Note that \(G\in\mathcal{O}_{k_0}\) and \(G(B_j)\ge\frac{\pi_j^0}{2}>0\), so each \(B_j\) contains at least one atom of \(G\). Also, since \(\{B_j\}_{j=1}^{k_0}\) are disjoint, \(G\) has at least \(k_0\) atoms. However, \(G\in\mathcal{O}_{k_0}\), which means that \(G\) can have at most \(k_0\) atoms. So, \(G\) must have exactly \(k_0\) atoms (after discarding zero-mass atoms), each in one distinct \(B_j\), \(j=1,\dots,k_0\).

Thus, we can rewrite \(G\) (after relabeling the atoms of \(G\), we may assume that \(\boldsymbol{\theta}_i\in B_i\)) as 
\[
G=\sum^{k_0}_{i=1}\pi_i\delta_{\boldsymbol{\theta}_i}, \;\text{ with } \boldsymbol{\theta}_i\in B_i \text{ and } \pi_i= G(B_i)\ge \frac{\pi_i^0}{2},\,\forall\,i.
\]

Next, we focus on the Wishart kernel \(f(\mathbf{W}\mid \boldsymbol{\theta})\) and the following mixtures,
\[
p_G(\mathbf{W})=\sum^{k_0}_{i=1}\pi_if(\mathbf{W}\mid \boldsymbol{\theta}_i),\quad p_{G_0}(\mathbf{W})=\sum^{k_0}_{i=1}\pi_i^0f(\mathbf{W}\mid \boldsymbol{\theta}_i^0).
\]

By the Cauchy-Schwarz inequality,
\[
\frac{\left(p_{G_0}(\mathbf{W})\right)^2}{p_G(\mathbf{W})}\le \sum^{k_0}_{i=1}\frac{\left(\pi_i^0\right)^2}{\pi_i}\cdot \frac{\left(f(\mathbf{W}\mid\boldsymbol{\theta}_i^0)\right)^2}{f(\mathbf{W}\mid \boldsymbol{\theta}_i)}.
\]

Let \(\mu\) be the Lebesgue measure on the \(d=\frac{p(p+1)}{2}\)-dimensional linear space of symmetric matrices, and we have
\begin{align*}
    \int \frac{\left(p_{G_0}(\mathbf{W})\right)^2}{p_G(\mathbf{W})}d\mu(\mathbf{W}) &\le  \sum^{k_0}_{i=1}\frac{\left(\pi_i^0\right)^2}{\pi_i} \int \frac{\left(f(\mathbf{W}\mid\boldsymbol{\theta}_i^0)\right)^2}{f(\mathbf{W}\mid \boldsymbol{\theta}_i)} d\mu(\mathbf{W})\\
    &\le 2 \sum^{k_0}_{i=1}\pi_i^0 \int \frac{\left(f(\mathbf{W}\mid\boldsymbol{\theta}_i^0)\right)^2}{f(\mathbf{W}\mid \boldsymbol{\theta}_i)} d\mu(\mathbf{W}),
\end{align*}
where the last inequality holds since \(\pi_i\ge \frac{\pi_i^0}{2}\), \(\forall\,i\). 

Hence, what we need to show is that, when \(\boldsymbol{\theta}_i\) is in a small-enough neighborhood of \(\boldsymbol{\theta}_i^0\), for all \(i\),
\begin{align}
    I(\boldsymbol{\theta}_i^0,\boldsymbol{\theta}_i):=\int \frac{\left(f(\mathbf{W}\mid\boldsymbol{\theta}_i^0)\right)^2}{f(\mathbf{W}\mid \boldsymbol{\theta}_i)} d\mu(\mathbf{W})\le C,\quad \forall\, i \text{ and for some constant \(C\)}. \label{eq:P3_target}
\end{align}

From now on we drop the index \(i\) since the following arguments hold for all \(i\). Consider the Wishart kernel
\[
f(\mathbf{W}\mid \mathbf{\Lambda}, \nu)=\kappa(\mathbf{\Lambda},\nu)|\mathbf{W}|^{\frac{\nu-p-1}{2}}\exp{\left[-\frac{1}{2}\tr(\mathbf{\Lambda W})\right]},\quad \text{where }\kappa(\mathbf{\Lambda},\nu):=\frac{|\mathbf{\Lambda}|^{\nu/2}}{2^{\nu p/2}\Gamma_p(\nu/2)},
\]
and we compute
\begin{align}
  \frac{(f(\mathbf{W}\mid \mathbf{\Lambda}_0,\nu_0))^2}{f(\mathbf{W}\mid \mathbf{\Lambda},\nu)} = \frac{\kappa(\mathbf{\Lambda}_0,\nu_0)^2}{\kappa(\mathbf{\Lambda},\nu)}|\mathbf{W}|^{\frac{m-p-1}{2}}\exp{\left[-\frac{1}{2}\tr(\mathbf{BW})\right]},  \label{eq:turn_wishart}
\end{align}
with \(\mathbf{B}:=2\mathbf{\Lambda}_0-\mathbf{\Lambda}\) and \(m:=2\nu_0-\nu\). Recall that we took \(\omega=\min{\left\{\frac{\Delta_0}{4}, \frac{\underline{\lambda}}{2\sqrt{2}}, \underline{\nu}-p\right\}}\) and \(\varepsilon_0=\frac{\omega\pi_{\min}^0}{2}\). This implies that \(\mathbf{B}\succ\mathbf{0}\) and \(m>p-1\). Hence, the \(|\mathbf{W}|^{\frac{m-p-1}{2}}\exp{\left[-\frac{1}{2}\tr(\mathbf{BW})\right]}\) in Equation~(\ref{eq:turn_wishart}) is a Wishart kernel with the precision matrix \(\mathbf{B}\) and degrees-of-freedom \(m\).

Thus, we have
\[
\int |\mathbf{W}|^{\frac{m-p-1}{2}}\exp{\left[-\frac{1}{2}\tr(\mathbf{BW})\right]}d\mu(\mathbf{W})=2^{\frac{mp}{2}}\Gamma_p\left(\frac{m}{2}\right)|\mathbf{B}|^{-\frac{m}{2}}
\]

Hence, it suffices to evaluate \(I(\boldsymbol{\theta}_i^0,\boldsymbol{\theta})\):
\begin{align}
    I(\boldsymbol{\theta}_i^0,\boldsymbol{\theta})&=\frac{\kappa(\mathbf{\Lambda}_0,\nu_0)^2}{\kappa(\mathbf{\Lambda},\nu)}\cdot2^{\frac{mp}{2}}\Gamma_p\left(\frac{m}{2}\right)|\mathbf{B}|^{-\frac{m}{2}}\\
    &=\frac{|\mathbf{\Lambda}_0|^{\nu_0}}{2^{\nu_0p}\Gamma_p\left(\frac{\nu_0}{2}\right)^2}\cdot \frac{2^{\frac{\nu p}{2}}\Gamma_p\left(\frac{\nu}{2}\right)}{|\mathbf{\Lambda}|^{\frac{\nu}{2}}}\cdot 2^{\frac{mp}{2}}\Gamma_p\left(\frac{m}{2}\right)|\mathbf{B}|^{-\frac{m}{2}}\\
    &=\frac{\Gamma_p\left(\frac{\nu}{2}\right)\Gamma_p\left(\frac{2\nu_0-\nu}{2}\right)}{\Gamma_p\left(\frac{\nu_0}{2}\right)^2}\cdot \frac{|\mathbf{\Lambda}_0|^{\nu_0}}{|\mathbf{\Lambda}|^{\frac{\nu}{2}}|2\mathbf{\Lambda}_0-\mathbf{\Lambda}|^{\frac{2\nu_0-\nu}{2}}}
\end{align}

For each \(i\), define the closed neighborhood \(K_i:=\left\{\boldsymbol{\theta}\in\Theta^*:\|\boldsymbol{\theta}-\boldsymbol{\theta}_i^0\|\le\omega\right\}\). Note that \(\boldsymbol{\theta}\mapsto I(\boldsymbol{\theta}_i^0,\boldsymbol{\theta})\) is continuous on \(K_i\). Since \(K_i\) is compact, we have \(C_i:=\sup_{\boldsymbol{\theta}\in K_i}I(\boldsymbol{\theta}_i^0,\boldsymbol{\theta})<\infty\). Thus, \(C:=\max_{1\le i \le k_0}C_i<\infty\). Hence,
\[
\int \frac{\left(p_{G_0}(\mathbf{W})\right)^2}{p_G(\mathbf{W})}d\mu(\mathbf{W})\le 2 \sum^{k_0}_{i=1}\pi_i^0 I(\boldsymbol{\theta}_i^0,\boldsymbol{\theta}_i)\le 2 \sum^{k_0}_{i=1}\pi_i^0 C = 2C
\]

We can take \(M(\varepsilon_0)=2C\), and this completes the proof. \qed
\subsection{Proof of Lemma 1}

The left-hand side is continuous on the open cone \(\mathbb S_{++}^p\). If it were nonzero at some point, it would be nonzero on a neighborhood of positive Lebesgue measure, contradicting the almost-everywhere identity. Hence the identity holds for all \(\mathbf W\in\mathbb S_{++}^p\).

Note that, for \(\boldsymbol{\theta}_i=(\boldsymbol{\eta}_i^\top,\nu_i)^\top\), we can decompose the coefficients \(\boldsymbol{\beta}_i\) in the identity by
\begin{align*}
    \boldsymbol{\beta}_i=\begin{bmatrix}
        \boldsymbol{u}_i \\
        v_i
    \end{bmatrix},\quad \boldsymbol{u}_i\in\mathbb{R}^d,\,v_i\in\mathbb{R},
\end{align*}
and we define \(\mathbf{H}_i:=\vech^{-1}(\boldsymbol{u}_i)\in\mathbb{S}^p\).

For notational convenience, we define
\[
\kappa_i:=\frac{|\mathbf{\Lambda}_i|^{\nu_i/2}}{2^{\nu_ip/2}\Gamma_p(\nu_i/2)}, \quad s_i:=\frac{\nu_i-p-1}{2}.
\]

Using the derivative formulas for the Wishart density from Section~\ref{sec:grad}, we have
\[
\boldsymbol{u}_i^\top\nabla_{\boldsymbol{\eta}}f(\mathbf{W}\mid \boldsymbol{\theta}_i) = \frac{1}{2}f(\mathbf{W}\mid\boldsymbol{\theta}_i)\tr(\mathbf{H}_i(\nu_i\mathbf{\Lambda}_i^{-1}-\mathbf{W})),
\]
and,
\[
v_i\nabla_{\nu}f(\mathbf{W}\mid \boldsymbol{\theta}_i)=v_if(\mathbf{W}\mid \boldsymbol{\theta}_i)\left[\frac{1}{2}\log|\mathbf{\Lambda}_i|-\frac{p}{2}\log2 -\frac{1}{2}\psi_p\left(\frac{\nu_i}{2}\right)+\frac{1}{2}\log|\mathbf{W}|\right]
\]

Hence, there exist coefficients \(a_i,b_i\in\mathbb{R}\) and \(\mathbf{C}_i\in\mathbb{S}^p\) such that
\begin{align}
\sum^k_{i=1}|\mathbf{W}|^{s_i}\exp{\left\{-\frac{1}{2}\tr(\mathbf{\Lambda}_i\mathbf{W})\right\}}\Big(a_i+b_i\log{|\mathbf{W}|}-\tr(\mathbf{C}_i\mathbf{W})\Big)=0,\quad \forall\,\mathbf{W}\in\mathbb{S}_{++}^p \label{eq:identity}
\end{align}
where one may take
\begin{align*}
    &a_i = \kappa_i\left[\alpha_i + \frac{\nu_i}{2}\tr(\mathbf{H}_i\mathbf{\Lambda}_i^{-1})+v_i\left(\frac{1}{2}\log|\mathbf{\Lambda}_i|-\frac{p}{2}\log2 -\frac{1}{2}\psi_p\left(\frac{\nu_i}{2}\right)\right)\right],\\
    &\qquad \qquad \qquad \qquad \qquad b_i=\frac{\kappa_iv_i}{2}, \quad \mathbf{C}_i=\frac{\kappa_i}{2}\mathbf{H}_i.
\end{align*}

We now group the indices by the distinct precision matrices. Let \(\mathbf{\Omega}_1,\dots,\mathbf{\Omega}_M\) be the distinct matrices among \(\mathbf{\Lambda}_1,\dots,\mathbf{\Lambda}_k\), and define
\[
J_g:=\{i:\mathbf{\Lambda}_i=\mathbf{\Omega}_g\},\quad g=1,\dots,M
\]

For each \(g\), define
\[
G_g(\mathbf{W}):=\sum_{i\in J_g}|\mathbf{W}|^{s_i}\Big(a_i+b_i\log{|\mathbf{W}|}-\tr(\mathbf{C}_i\mathbf{W})\Big).
\]
Then, (\ref{eq:identity}) becomes:
\begin{align}
    \sum^M_{g=1}\exp{\left\{-\frac{1}{2}\tr(\mathbf{\Omega}_g\mathbf{W})\right\}}G_g(\mathbf{W})=0,\quad \forall\, \mathbf{W}\in\mathbb{S}_{++}^p \label{eq:identity_G}
\end{align}

Note that, within each fixed group \(J_g\), the exponents \(\{s_i:i\in J_g\}\) are pairwise distinct, since \(\boldsymbol{\theta}_1,\dots, \boldsymbol{\theta}_k\) are pairwise distinct. 

We first assume \(M\ge2\) and eliminate one group at a time. Choose \(\mathbf{U}_0\in\mathbb{S}_{++}^p\) such that
\[
\tr(\mathbf{\Omega}_g\mathbf{U}_0) \neq \tr(\mathbf{\Omega}_{g'}\mathbf{U}_0),\quad \text{for all }g\neq g'.
\]
We note that such a choice is always possible when \(p\ge2\), since, for any \(g\neq g'\), the set \(\{\mathbf{U}\in\mathbb{S}_{++}^p:\tr\left[(\mathbf{\Omega}_g-\mathbf{\Omega}_{g'})\mathbf{U}]=0\right\}\) is a proper hyperplane, and finitely many such hyperplanes cannot cover the whole \(\mathbb{S}_{++}^p\). Then, let \(g_*\) be the unique index minimizing \(\tr(\mathbf{\Omega}_g\mathbf{U}_0)\). By continuity, there exists a nonempty open neighborhood \(\mathcal{O}\subset \mathbb{S}_{++}^p\) of \(\mathbf{U}_0\) such that
\[
\tr(\mathbf{\Omega}_{g_*}\mathbf{U})<\tr(\mathbf{\Omega}_g\mathbf{U}),\quad \forall\,\mathbf{U}\in\mathcal{O},\;\forall\,g\neq g_*
\]

Fix \(\mathbf{U}\in\mathcal{O}\) and set \(\mathbf{W}=t\mathbf{U}\) with \(t>0\). From (\ref{eq:identity_G}), 
\begin{align}
    \exp{\left\{-\frac{t}{2}\tr(\mathbf{\Omega}_{g_*}\mathbf{U})\right\}}G_{g_*}(t\mathbf{U}) + \sum_{g\neq g_*}\exp{\left\{-\frac{t}{2}\tr(\mathbf{\Omega}_{g}\mathbf{U})\right\}}G_{g}(t\mathbf{U}) = 0 
\end{align}
and therefore,
\begin{align}
    G_{g_*}(t\mathbf{U}) = - \sum_{g\neq g_*}\exp\left\{-\frac{t}{2}\left[\tr(\mathbf{\Omega}_g\mathbf{U})-\tr(\mathbf{\Omega}_{g_*}\mathbf{U})\right]\right\}G_{g}(t\mathbf{U}) \label{eq:G_g*}
\end{align}

For each fixed \(\mathbf{U}\in \mathcal{O}\), every \(G_g(t\mathbf{U})\) is a finite sum of terms of the form \(t^{ps_i}\), \(t^{ps_i+1}\) and \(t^{ps_i}\log{t}\), so it grows at most polynomially in \(t\) times \((1+\log{t})\). Since all exponential gaps in (\ref{eq:G_g*}) are strictly positive, there exists \(c(\mathbf{U})>0\) and \(N<\infty\) such that
\begin{align}
G_{g_*}(t\mathbf{U})=O\left(e^{-c(\mathbf{U})t}t^N(1+\log{t})\right),\quad t\rightarrow \infty \label{eq:G_asymptotic}
\end{align}

We now show that all coefficients indexed by \(J_{ g_*}\) vanish. First, let \(\mathbf{U},\mathbf{V}\in\mathcal{O}\) satisfy \(|\mathbf{U}|=|\mathbf{V}|\). Subtracting (\ref{eq:G_asymptotic}) for \(\mathbf{U}\) and \(\mathbf{V}\), we obtain 
\begin{align}
    \sum_{i\in J_{g_*}} |\mathbf{U}|^{s_i}\Big(\tr(\mathbf{C}_i\mathbf{U})-\tr(\mathbf{C}_i\mathbf{V})\Big)t^{ps_i+1} = O\left(e^{-ct}t^N(1+\log{t})\right),\quad t\rightarrow \infty, \label{eq:U-V}
\end{align}
for some constant \(c>0\). Order the exponents \(\{s_i:i\in J_{g_*}\}\) and denote \(s_{\max}\) to be the largest one. Dividing (\ref{eq:U-V}) by \(t^{ps_{\max}+1}\) and letting \(t\rightarrow \infty\), the smaller-order terms on the left-hand side vanish, while the right-hand side goes to \(0\). Hence, the coefficients of the leading power must be \(0\). Repeating this argument successively over the remaining exponents shows that
\begin{align}
    \tr(\mathbf{C}_i\mathbf{U})=\tr(\mathbf{C}_i\mathbf{V}),\quad \forall\,i\in J_{g_*},\;\forall\, \mathbf{U},\mathbf{V}\in\mathcal{O} \text{ with }|\mathbf{U}|=|\mathbf{V}|, \label{eq:tr_equal}
\end{align}

Now, fix an index \(i\in J_{g_*}\) and a \(\mathbf{U}_{\star}\in\mathcal{O}\). Define \(\mathbf{D}_i:=\mathbf{U}_{\star}^{\frac{1}{2}}\mathbf{C}_i\mathbf{U}_{\star}^{\frac{1}{2}}\), and we note that \(\mathbf{D}_i\in\mathbb{S}^p\). Since \(\mathcal{O}\) is open, we can choose a sufficiently small \(\varepsilon >0\) such that, for any \(r\neq s\), we have 
\[
\mathbf{U}_{+}^{(rs)}:= \mathbf{U}_{\star}^{\frac{1}{2}}\big(\mathbf{I}_p + \varepsilon(\mathbf{E}_{rs}+\mathbf{E}_{sr})\big)\mathbf{U}_{\star}^{\frac{1}{2}}\in\mathcal{O};\quad \mathbf{U}_{-}^{(rs)}:= \mathbf{U}_{\star}^{\frac{1}{2}}\big(\mathbf{I}_p - \varepsilon(\mathbf{E}_{rs}+\mathbf{E}_{sr})\big)\mathbf{U}_{\star}^{\frac{1}{2}}\in\mathcal{O}
\]
where \(\mathbf{E}_{rs}\) is the \((r,s)\)-th standard basis matrix, that is, a suitable square matrix with the \((r,s)\)-th entry to be \(1\), and \(0\) otherwise. 

We note that \(|\mathbf{U}_{+}^{(rs)}|=|\mathbf{U}_{-}^{(rs)}|=|\mathbf{U}_{\star}|(1-\varepsilon^2)\). Thus, by (\ref{eq:tr_equal}), we have \(\tr(\mathbf{C}_i\mathbf{U}_{+}^{(rs)})=\tr(\mathbf{C}_i\mathbf{U}_{-}^{(rs)})\). So, 
\[
0=\tr\big(\mathbf{C}_i(\mathbf{U}_{+}^{(rs)}-\mathbf{U}_{-}^{(rs)})\big)=2\varepsilon\tr\big(\mathbf{D}_i(\mathbf{E}_{rs}+\mathbf{E}_{sr})\big)
\]

Since \(\mathbf{D}_i\) is symmetric, we have 
\[
\tr\big(\mathbf{D}_i(\mathbf{E}_{rs}+\mathbf{E}_{sr})\big)=2(\mathbf{D}_i)_{rs}=2(\mathbf{D}_i)_{sr}
\]

So, \((\mathbf{D}_i)_{rs}=0\) for any \(r\neq s\), and thus \(\mathbf{D}_i\) must be a diagonal matrix. 

Next, for any \(r\neq s\), we can pick an \(x\neq 1\) but sufficiently close to \(1\) so that
\(
\mathbf{V}_{x}^{(rs)}:=\mathbf{U}_{\star}^{\frac{1}{2}}\mathbf{M}_x\mathbf{U}_{\star}^{\frac{1}{2}}\in\mathcal{O}
\)
where \(\mathbf{M}_x=\text{diag}(1,\dots,x,\dots,x^{-1},\dots,1)\) is the diagonal matrix with the \(r\)-th diagonal element to be \(x\), \(s\)-th diagonal element to be \(x^{-1}\), and remaining diagonal elements to be \(1\).

We note that \(|\mathbf{V}_x^{(rs)}|=|\mathbf{U}_{\star}|\). So, by (\ref{eq:tr_equal}), we have \(\tr(\mathbf{C}_i\mathbf{V}_x^{(rs)})=\tr(\mathbf{C}_i\mathbf{U}_{\star})\). Then, we have \(\tr(\mathbf{D}_i\mathbf{M}_x)=\tr(\mathbf{D}_i)\). If we write \(\mathbf{D}_i=\text{diag}(d_1,\dots,d_p)\), then we can obtain \(d_rx-d_s=0\). This holds for all \(x\) in a sufficiently small neighborhood of \(1\), so it can only be the case that \(d_r=d_s=0\). Since \(r\neq s\) are chosen arbitrarily, we conclude that all diagonal elements of \(\mathbf{D}_i\) are \(0\). So, \(\mathbf{D}_i=\mathbf{0}\), and thus \(\mathbf{C}_i=\mathbf{0}\) for all \(i\).

With \(\mathbf{C}_i=\mathbf{0}\) for all \(i\in J_{g_*}\), equation (\ref{eq:G_asymptotic}) simplifies to
\begin{align}
    \sum_{i\in J_{g_*}}|\mathbf{U}|^{s_i}t^{ps_i}\big(a_i+b_i(\log{|\mathbf{U}|}+p\log{t})\big)=O\left(e^{-c(\mathbf{U})t}t^N(1+\log{t})\right),\quad t\rightarrow \infty \label{eq:G_simple}
\end{align}

Again order the exponents \(\{s_i:i\in J_{g_*}\}\) increasingly. Dividing (\ref{eq:G_simple}) first by \(t^{ps_{\max}}\log{t}\) and letting \(t\rightarrow \infty\), we obtain \(b_i=0\) for index corresponding to the largest exponent \(s_{\max}\). Then, dividing by \(t^{ps_{\max}}\) and letting \(t\rightarrow\infty\) gives \(a_i=0\) for the same index. Repeating this procedure over the remaining exponents gives \(a_i=b_i=0\) for all \(i \in J_{g_*}\).

Thus, every coefficient in the group \(J_{g_*}\) vanishes. Then, we remove this group from equation (\ref{eq:identity_G}) and repeat the same argument for the remaining groups. If \(M=1\), then \eqref{eq:identity_G} gives \(G_1(\mathbf W)\equiv 0\) on \(\mathbb S_{++}^p\). In this case, we may take \(g_*=1\) and choose any nonempty open neighborhood \(\mathcal O\subset\mathbb S_{++}^p\). The argument
above also applies when the right-hand side of (\ref{eq:G_asymptotic}) is replaced by an exact zero. Thus, we conclude that
\(
a_i=b_i=0,\;\mathbf{C}_i=\mathbf{0},\;i=1,\dots,k.
\)

Finally, since \(b_i=\frac{\kappa_iv_i}{2}\) and \(\mathbf{C}_i=\frac{\kappa_i}{2}\mathbf{H}_i\) (recall that \(\kappa_i>0\)), we obtain that \(v_i=0\) and \(\mathbf{H}_i=\mathbf{0}\) for \(i=1,\dots,k\). Hence, \(\boldsymbol{u}_i=\boldsymbol{0}\) and therefore \(\boldsymbol{\beta}_i=\boldsymbol{0}\) for all \(i\). Returning to the definition of \(a_i\), we now have \(a_i=\kappa_i\alpha_i\), so \(a_i=0\) implies \(\alpha_i=0\). This completes the proof.
\qed

\begin{remark}[Why the condition \(p\ge2\) is imposed]\label{rem:p1_gamma_failure}
The restriction \(p\ge2\) is not merely a technical artifact of the proof. When \(p=1\), the Wishart family reduces to the Gamma family
\[
f(w\mid \lambda,\nu)
=
\frac{\lambda^{\nu/2}}{2^{\nu/2}\Gamma(\nu/2)}
w^{\nu/2-1}\exp\{-\lambda w/2\},
\qquad w>0,
\]
where \(\lambda>0\) is the scalar precision parameter. Suppose that both \((\lambda,\nu)\) and \((\lambda,\nu+2)\) belong to the parameter space. Then
\[
w f(w\mid \lambda,\nu)=\frac{\nu}{\lambda} f(w\mid \lambda,\nu+2).
\]
Hence
\[
\frac{\partial}{\partial\lambda}f(w\mid \lambda,\nu)=\left(\frac{\nu}{2\lambda}-\frac{w}{2}\right)f(w\mid \lambda,\nu)=\frac{\nu}{2\lambda}f(w\mid \lambda,\nu)-\frac{\nu}{2\lambda}f(w\mid \lambda,\nu+2).
\]
Equivalently,
\[
-\frac{\nu}{2\lambda}f(w\mid \lambda,\nu)+\frac{\nu}{2\lambda}f(w\mid \lambda,\nu+2)+\frac{\partial}{\partial\lambda}f(w\mid \lambda,\nu)=0, \qquad \forall\,w>0.
\]
This is a nontrivial first-order linear relation involving two distinct parameter points and the derivative at one of them. Therefore, the univariate Wishart, or equivalently Gamma, family is not first-order identifiable on any parameter space containing such a pair. This explains why Lemma~\ref{lem:first_order_identifiability} is stated for \(p\ge2\).
\end{remark}

\subsection{Proof of Lemma 2}

We first bound all the terms in the Hessian that depend only on the parameters. By the spectral bounds defining the compact parameter space \(\Theta^*\), we have
\[
\|\mathbf{\Lambda}^{-1}\|_{\text{op}}\le \frac{1}{\underline{\lambda}},
\qquad
\|\mathbf{\Lambda}^{-1}\|_{F}\le \frac{\sqrt{p}}{\underline{\lambda}},
\qquad
|\log|\mathbf{\Lambda}||\le p\max\{|\log \underline{\lambda}|,\ |\log \overline{\lambda}|\}.
\]
Moreover,
\[
|\mathbf{\Lambda}|^{\nu/2}
=
\exp\!\left(\frac{\nu}{2}\log|\mathbf{\Lambda}|\right)
\le
\exp\!\left(
\frac{\overline{\nu} p}{2}\max\{|\log \underline{\lambda}|,\ |\log \overline{\lambda}|\}
\right).
\]
In addition, there exist positive constants \(C_\psi\) and \(C_{\psi'}\) such that
\(
\sup_{\nu\in[\underline{\nu},\overline{\nu}]}\left|\psi_p\left(\frac{\nu}{2}\right)\right|\le C_\psi
\)
and 
\(
\sup_{\nu\in[\underline{\nu},\overline{\nu}]}\left|\psi_p'\left(\frac{\nu}{2}\right)\right|\le C_{\psi'},
\)
since \(\psi_p\left(\frac{\nu}{2}\right)\) and \(\psi_p'\left(\frac{\nu}{2}\right)\) are both continuous on the interval \([\underline{\nu},\overline{\nu}]\).
The function
\(
c(\nu):=\frac{1}{2^{\frac{\nu p}{2}}\Gamma_p\left(\frac{\nu}{2}\right)}
\)
is positive and continuous on \([\underline{\nu},\overline{\nu}]\), so \(0<c_{\min}\le c(\nu)\le c_{\max}<\infty\).
Finally, there exists a constant \(C_0\) such that for all \(\mathbf{W}\in\mathbb{S}_{++}^p\) and all \(\boldsymbol{\theta}\in\Theta^*\),
\begin{equation}\label{eq:f_bound}
f(\mathbf{W}\mid \boldsymbol{\theta})
\le
C_0\,|\mathbf{W}|^{\frac{\nu-p-1}{2}}
\exp\!\left[-\frac{\underline{\lambda}}{2}\tr(\mathbf{W})\right].
\end{equation}

Using the Hessian formulas derived in Section~\ref{sec:hessian},
\begin{equation}
\nabla^2_{\boldsymbol{\theta}} f=
\begin{bmatrix}
\nabla^2_{\boldsymbol{\eta}\boldsymbol{\eta}}f
&
\nabla^2_{\boldsymbol{\eta}\nu}f
\\[6pt]
(\nabla^2_{\boldsymbol{\eta}\nu}f)^\top
&
\nabla^2_{\nu\nu}f
\end{bmatrix}.
\label{eq:Hess-f}
\end{equation}

\paragraph{(1) \(\nabla^2_{\nu\nu}f\):}
We first have
\[
\nabla^2_{\nu\nu}f
=
f\cdot\left[l_\nu^2-\frac{1}{4}\psi_p'\left(\frac{\nu}{2}\right)\right]
\qquad \Rightarrow \qquad
|\nabla^2_{\nu\nu}f|
\le
f\cdot \left(l_\nu^2 + \frac{1}{4}C_{\psi'}\right),
\]
where
\[
l_\nu:=\frac{\partial l}{\partial \nu}
=
\frac12\log|\mathbf{\Lambda}|
-\frac{p}{2}\log 2
-\frac12\psi_p(\nu/2)
+\frac12\log|\mathbf{W}|.
\]
Hence there is a constant \(C_\nu<\infty\), depending only on \(\Theta^*\), such that
\[
|l_\nu|
\le
C_\nu+\frac12|\log|\mathbf{W}||
\qquad\Rightarrow\qquad
l_\nu^2
\le
2C_\nu^2+\frac12|\log|\mathbf{W}||^2.
\]
Thus
\[
|\nabla^2_{\nu\nu}f|
\le
f\cdot\left[C_1+C_2|\log|\mathbf{W}||^2\right]
\le
C_0|\mathbf{W}|^{\frac{\nu-p-1}{2}}
\exp\!\left[-\frac{\underline{\lambda}}{2}\tr(\mathbf{W})\right]
\left[C_1+C_2|\log|\mathbf{W}||^2\right]
\]
for some positive constants \(C_1\) and \(C_2\) depending only on \(\Theta^*\).

Let 
\(
\underline{\alpha}:=\frac{\underline{\nu}-p-1}{2}\), 
\(
\overline{\alpha}:=\frac{\overline{\nu}-p-1}{2}.
\)
We note that \(\overline{\alpha}>\underline{\alpha}>0\). Then, for every \(\nu\in[\underline{\nu},\overline{\nu}]\), 
\(
|\mathbf{W}|^{\frac{\nu-p-1}{2}}
\le
|\mathbf{W}|^{\underline{\alpha}}+|\mathbf{W}|^{\overline{\alpha}}.
\)
 Moreover,
\(
C_1+C_2|\log|\mathbf{W}||^2
\le
\bigl(\sqrt{C_1}+\sqrt{C_2}\,|\log|\mathbf{W}||\bigr)^2.
\)
 Therefore,
\[
|\nabla^2_{\nu\nu}f|
\le
C_0\sum_{\alpha\in\{\underline{\alpha},\overline{\alpha}\}}
|\mathbf{W}|^{\alpha}
\exp\!\left[-\frac{\underline{\lambda}}{2}\tr(\mathbf{W})\right]
\bigl(\sqrt{C_1}+\sqrt{C_2}\,|\log|\mathbf{W}||\bigr)^2.
\]
Using Lemma~\ref{eq:amgm} with \(a=\underline{\lambda}/2\), \(m=0\), \(r=2\),
\(c=\sqrt{C_1}\), and \(b=\sqrt{C_2}\), we have
\begin{equation}\label{eq:nunu_bound}
    \sup_{\mathbf{W}\in\mathbb{S}_{++}^p,\ \boldsymbol{\theta}\in\Theta^*}|\nabla^2_{\nu\nu}f(\mathbf{W}\mid\boldsymbol{\theta})|<\infty.
\end{equation}

\paragraph{(2) \(\nabla^2_{\boldsymbol{\eta}\nu}f\):}
We have 
\[
\nabla^2_{\boldsymbol{\eta}\nu}f
=
D_p^\top \nabla^2_{\operatorname{vec}(\mathbf{\Lambda}),\nu}f
=
D_p^\top f\left(l_\nu\,\boldsymbol{g}+\frac12\,\operatorname{vec}(\mathbf{\Lambda}^{-1})\right),
\]
and hence
\[
\|\nabla^2_{\boldsymbol{\eta}\nu}f\|_2
\le
\|D_p\|_{\text{op}}\,f\cdot\left(|l_\nu|\|\boldsymbol{g}\|_2+\frac{1}{2}\|\mathbf{\Lambda}^{-1}\|_F\right).
\]

The terms \(\|D_p\|_{\text{op}}\) and \(\|\mathbf{\Lambda}^{-1}\|_F\) are uniformly bounded over \(\Theta^*\). Thus, it remains to control \(f\,|l_\nu|\|\boldsymbol{g}\|_2\). Note that
\[
\|\boldsymbol{g}\|_2
=
\frac{1}{2}\|\nu\mathbf{\Lambda}^{-1}-\mathbf{W}\|_F
\le
\frac{1}{2}\nu\|\mathbf{\Lambda}^{-1}\|_F+\frac{1}{2}\|\mathbf{W}\|_F
\le
C_{\boldsymbol{g}}+\frac{1}{2}\|\mathbf{W}\|_F
\le
C_{\boldsymbol{g}}+\frac{1}{2}\tr(\mathbf{W}),
\]
for some constant \(C_{\boldsymbol{g}}<\infty\) depending only on \(\Theta^*\), where we used that
\(\|\mathbf{W}\|_F\le \tr(\mathbf{W})\) for \(\mathbf{W}\in\mathbb{S}_{++}^p\).

As shown above, there exists a constant \(C_\nu<\infty\), depending only on \(\Theta^*\), such that 
\(
|l_\nu|
\le
C_\nu+\frac{1}{2}|\log|\mathbf{W}||.
\)
 Therefore,
\[
|l_\nu|\|\boldsymbol{g}\|_2
\le
\left(C_\nu+\frac{1}{2}|\log|\mathbf{W}||\right)
\left(C_{\boldsymbol{g}}+\frac{1}{2}\tr(\mathbf{W})\right).
\]
Using the elementary inequality \(xy\le \frac{1}{2}(x^2+y^2)\), it follows that there exists a constant \(C<\infty\) such that
\[
|l_\nu|\|\boldsymbol{g}\|_2
\le
C\left(1+|\log|\mathbf{W}||^2+\tr(\mathbf{W})^2\right).
\]
Hence,
\[
\|\nabla^2_{\boldsymbol{\eta}\nu}f\|_2
\le
C\,f(\mathbf{W}\mid\boldsymbol{\theta})
\left(1+|\log|\mathbf{W}||^2+\tr(\mathbf{W})^2\right).
\]

Using the bound (\ref{eq:f_bound}) from above, together with
\(
|\mathbf{W}|^{\frac{\nu-p-1}{2}}
\le
|\mathbf{W}|^{\underline{\alpha}}+|\mathbf{W}|^{\overline{\alpha}},
\)
we obtain
\[
\|\nabla^2_{\boldsymbol{\eta}\nu}f\|_2
\le
C\sum_{\alpha\in\{\underline{\alpha},\overline{\alpha}\}}
|\mathbf{W}|^\alpha
\exp\!\left[-\frac{\underline{\lambda}}{2}\tr(\mathbf{W})\right]
\left(1+|\log|\mathbf{W}||^2+\tr(\mathbf{W})^2\right).
\]
Finally, since
\(
1+|\log|\mathbf{W}||^2+\tr(\mathbf{W})^2
\le
(1+|\log|\mathbf{W}||)^2
+
\tr(\mathbf{W})^2(1+|\log|\mathbf{W}||)^2,
\)
Lemma~\ref{eq:amgm} applied separately with \(m=0\) and \(m=2\), and with \(a=\underline{\lambda}/2\), \(r=2\), \(c=b=1\), yields
\begin{equation}\label{eq:etanu_bound}
\sup_{\mathbf{W}\in\mathbb{S}_{++}^p,\ \boldsymbol{\theta}\in\Theta^*}\|\nabla^2_{\boldsymbol{\eta}\nu}f(\mathbf{W}\mid\boldsymbol{\theta})\|_2<\infty.   
\end{equation}

\paragraph{(3) \(\nabla^2_{\boldsymbol{\eta}\boldsymbol{\eta}}f\):}

First, we have
\[
\nabla^2_{\boldsymbol{\eta}\boldsymbol{\eta}}f
=
D_p^\top f\left(\boldsymbol{g}\,\boldsymbol{g}^\top-\frac{\nu}{2}\,(\mathbf{\Lambda}^{-1}\otimes\mathbf{\Lambda}^{-1})\right)D_p,
\]
and hence
\[
\|\nabla^2_{\boldsymbol{\eta}\boldsymbol{\eta}}f\|_{\text{op}}
\le
\|D_p\|_{\text{op}}^2
f\left(
\|\boldsymbol{g}\boldsymbol{g}^\top\|_{\text{op}}
+
\frac{\nu}{2}\|\mathbf{\Lambda}^{-1}\otimes\mathbf{\Lambda}^{-1}\|_{\text{op}}
\right).
\]

Note that
\[
\|\boldsymbol{g}\boldsymbol{g}^\top\|_{\text{op}}
=
\|\boldsymbol{g}\|_2^2,
\qquad
\|\mathbf{\Lambda}^{-1}\otimes\mathbf{\Lambda}^{-1}\|_{\text{op}}
=
\|\mathbf{\Lambda}^{-1}\|_{\text{op}}^2
\le
\frac{1}{\underline{\lambda}^2}.
\]
Moreover,
\[
\|\boldsymbol{g}\|_2
=
\frac{1}{2}\|\nu\mathbf{\Lambda}^{-1}-\mathbf{W}\|_F
\le
\frac{1}{2}\nu\|\mathbf{\Lambda}^{-1}\|_F+\frac{1}{2}\|\mathbf{W}\|_F
\le
C_{\boldsymbol{g}}+\frac{1}{2}\|\mathbf{W}\|_F
\le
C_{\boldsymbol{g}}+\frac{1}{2}\tr(\mathbf{W}),
\]
for some constant \(C_{\boldsymbol{g}}<\infty\) depending only on \(\Theta^*\), where we used that
\(\|\mathbf{W}\|_F\le \tr(\mathbf{W})\) for \(\mathbf{W}\in\mathbb{S}_{++}^p\).
Therefore, there exists a constant \(C<\infty\) such that
\[
\|\boldsymbol{g}\|_2^2
\le
C\left(1+\tr(\mathbf{W})^2\right).
\]

Since \(\nu\le \overline{\nu}\), it follows that
\[
\|\nabla^2_{\boldsymbol{\eta}\boldsymbol{\eta}}f\|_{\text{op}}
\le
C\,f(\mathbf{W}\mid\boldsymbol{\theta})
\left(1+\tr(\mathbf{W})^2\right)
\]
for some constant \(C<\infty\) depending only on \(\Theta^*\).
Using the bound (\ref{eq:f_bound}), together with
\(
|\mathbf{W}|^{\frac{\nu-p-1}{2}}
\le
|\mathbf{W}|^{\underline{\alpha}}+|\mathbf{W}|^{\overline{\alpha}},
\)
we obtain
\[
\|\nabla^2_{\boldsymbol{\eta}\boldsymbol{\eta}}f\|_{\text{op}}
\le
C\sum_{\alpha\in\{\underline{\alpha},\overline{\alpha}\}}
|\mathbf{W}|^\alpha
\exp\!\left[-\frac{\underline{\lambda}}{2}\tr(\mathbf{W})\right]
\left(1+\tr(\mathbf{W})^2\right).
\]
Finally, since
\[
1+\tr(\mathbf{W})^2
\le
\bigl(1+|\log|\mathbf{W}||\bigr)
+
\tr(\mathbf{W})^2\bigl(1+|\log|\mathbf{W}||\bigr),
\]
Lemma~\ref{eq:amgm} applied separately with \(m=0\) and \(m=2\), and with \(a=\underline{\lambda}/2\), \(r=1\), \(c=b=1\), yields
\begin{equation}\label{eq:etaeta_bound}
\sup_{\mathbf{W}\in\mathbb{S}_{++}^p,\ \boldsymbol{\theta}\in\Theta^*}\|\nabla^2_{\boldsymbol{\eta}\boldsymbol{\eta}}f(\mathbf{W}\mid\boldsymbol{\theta})\|_{\text{op}}<\infty.
\end{equation}

\paragraph{Summary:} By the bounds (\ref{eq:nunu_bound}), (\ref{eq:etanu_bound}) and (\ref{eq:etaeta_bound}) for the three Hessian blocks, and the finite dimension of the parameter space, there exists a constant \(C_H>0\) such that
\[
\sup_{\mathbf{W}\in\mathbb{S}_{++}^p,\ \boldsymbol{\theta}\in\Theta^*}
\|\nabla^2_{\boldsymbol{\theta}}f(\mathbf{W}\mid\boldsymbol{\theta})\|_{\text{op}}
\le C_H.
\]

For any \(\boldsymbol{\theta}_1,\boldsymbol{\theta}_2\in \Theta^*\), define
\(
\boldsymbol{\theta}(t):=\boldsymbol{\theta}_2 + t(\boldsymbol{\theta}_1-\boldsymbol{\theta}_2), t\in[0,1].
\)
Since \(\Theta^*\) is convex, we have \(\boldsymbol{\theta}(t)\in\Theta^*\) for all \(t\in[0,1]\).
Then, by the fundamental theorem of calculus,
\[
\nabla_{\boldsymbol{\theta}}f(\mathbf{W}\mid \boldsymbol{\theta}_1)
-
\nabla_{\boldsymbol{\theta}}f(\mathbf{W}\mid \boldsymbol{\theta}_2)
=
\int_0^1
\nabla^2_{\boldsymbol{\theta}}f(\mathbf{W}\mid\boldsymbol{\theta}(t))
(\boldsymbol{\theta}_1-\boldsymbol{\theta}_2)\,dt.
\]
Therefore,
\begin{align*}
\|\nabla_{\boldsymbol{\theta}}f(\mathbf{W}\mid \boldsymbol{\theta}_1)
-
\nabla_{\boldsymbol{\theta}}f(\mathbf{W}\mid \boldsymbol{\theta}_2)\|_2
&\le
\int_0^1
\|\nabla^2_{\boldsymbol{\theta}}f(\mathbf{W}\mid\boldsymbol{\theta}(t))\|_{\text{op}}\,
\|\boldsymbol{\theta}_1-\boldsymbol{\theta}_2\|_2\,dt\\
&\le
C_H\|\boldsymbol{\theta}_1-\boldsymbol{\theta}_2\|_2.
\end{align*}
Thus, the first-order uniformly Lipschitz property holds with \(\delta=1\).
\qed


\subsection{Proof of Theorem~\ref{thm:contraction}}

It remains to verify the regularity conditions corresponding to (P.1)--(P.4) in \citet{guha21}. Condition (P.1) holds because \(\Theta^*\) is compact under the half-vectorization parameterization, and Lemmas~\ref{lem:first_order_identifiability} and \ref{lem:first_order_uniform_lipschitz} establish first-order identifiability and the first-order uniform Lipschitz property of the Wishart kernel, respectively. Condition (P.2) holds for the restricted model on \(\Theta^*\), because the base density of the component parameter \(\boldsymbol{\theta}=(\boldsymbol{\eta}^\top,\nu)^\top\) is continuous and strictly positive on \(\Theta^*\), and hence is bounded below by a positive constant by compactness. Condition (P.3) is the local \(\chi^2\)-type bound proved in Lemma~\ref{thm:P3}. Finally, condition (P.4) holds because the shifted Poisson prior on \(K\) assigns positive probability to every positive integer. Therefore, the proof of Theorem 3.1 in \citet{guha21} carries over to our MFM--Wishart model, yielding the claimed posterior consistency for \(K\) and the \(W_1\)-posterior contraction rate.

It should be noted that \citet{guha21} state their MFM prior with conditional weights \(\operatorname{Dirichlet}_k(\gamma/k,\ldots,\gamma/k)\), whereas our model uses \(\operatorname{Dirichlet}_k(\gamma,\ldots,\gamma)\). This difference does not affect the validity of the argument. Indeed, in the proof of Theorem 3.1 in \citet{guha21}, the Dirichlet prior on the weights is used only through the fact that it assigns polynomially positive mass to sufficiently small neighborhoods of the true weight vector. The same property holds for \(\operatorname{Dirichlet}_k(\gamma,\ldots,\gamma)\), with a different polynomial exponent. Hence, replacing \(\operatorname{Dirichlet}_k(\gamma/k,\ldots,\gamma/k)\) by \(\operatorname{Dirichlet}_k(\gamma,\ldots,\gamma)\) changes only constants and polynomial exponents in the prior-mass lower bound, and does not affect either the posterior consistency for \(K\) or the posterior contraction rate.
\qed

\subsection{Proof of Corollary~\ref{cor:cluster_consistency}}

By \citet[Theorem 5.2]{miller18}, applied with $k=k_0$, for every fixed realization of the data as $n\to\infty$,
\[
\left|
\Pi_n(K_{+,n}=k_0\mid \mathbf W_{1:n})
-
\Pi_n(K=k_0\mid \mathbf W_{1:n})
\right|
\longrightarrow 0 .
\]
Since this pointwise convergence holds for every realized data path, it holds in particular almost surely under \(P_{G_0}\).

By Theorem~\ref{thm:contraction}(a),
\[
\Pi_n(K=k_0\mid \mathbf W_{1:n})\longrightarrow 1
\qquad P_{G_0}\text{-a.s.}
\]
Hence,
\[
\begin{aligned}
\left|
\Pi_n(K_{+,n}=k_0\mid \mathbf W_{1:n})-1
\right|
&\le
\left|
\Pi_n(K_{+,n}=k_0\mid \mathbf W_{1:n})
-
\Pi_n(K=k_0\mid \mathbf W_{1:n})
\right| \\
&\quad+
\left|
\Pi_n(K=k_0\mid \mathbf W_{1:n})-1
\right|.
\end{aligned}
\]
Both terms on the right-hand side converge to zero almost surely under \(P_{G_0}\), which proves the claim. \qed


\section{Full Conditional Distributions}\label{sec:mcmc_dev}

\subsection{Full conditional of \(z_i\)}

Our derivation of the full conditional of \(z_i\) follows the Algorithm 3 in \citet{neal00}. The ingredients of the full conditional of \(z_i\) will be derived as follows.

We first derive the prior collapsed predictive density \(m(\mathbf{W}\mid \nu)\). Recall that the likelihood of one sample is 
\[
f(\mathbf{W}\mid \mathbf{\Sigma},\nu)=\frac{|\mathbf{\Sigma}|^{-\frac{\nu}{2}}}{2^{\frac{\nu p}{2}}\Gamma_p\left(\frac{\nu}{2}\right)}|\mathbf{W}|^{\frac{\nu-p-1}{2}}\exp{\left\{-\frac{1}{2}\tr\left(\mathbf{\Sigma}^{-1}\mathbf{W}\right)\right\}}.
\]

Recall that the prior is \(\mathbf{\Sigma}\sim \mathcal{IW}_p(\mathbf{\Psi}_0,\kappa_0)\) with \(\mathbf{\Psi}_0\in\mathbb{S}_{++}^p\) and \(\kappa_0>p-1\), with the density
\[
p(\mathbf{\Sigma})=\frac{|\mathbf{\Psi}_0|^{\frac{\kappa_0}{2}}}{2^{\frac{\kappa_0p}{2}}\Gamma_p\left(\frac{\kappa_0}{2}\right)}|\mathbf{\Sigma}|^{-\frac{\kappa_0+p+1}{2}}\exp{\left\{-\frac{1}{2}\tr\left(\mathbf{\Psi}_0\mathbf{\Sigma}^{-1}\right)\right\}}.
\]

The prior predictive density is
\begin{align*}
    m(\mathbf{W}\mid \nu) &= \int f(\mathbf{W}\mid \mathbf{\Sigma},\nu)p(\mathbf{\Sigma}) d\mathbf{\Sigma}\\
    &= \int \frac{|\mathbf{\Sigma}|^{-\frac{\nu}{2}}}{2^{\frac{\nu p}{2}}\Gamma_p\left(\frac{\nu}{2}\right)}|\mathbf{W}|^{\frac{\nu-p-1}{2}}\exp{\left\{-\frac{1}{2}\tr\left(\mathbf{\Sigma}^{-1}\mathbf{W}\right)\right\}}\\
    &\qquad\qquad\qquad\cdot\frac{|\mathbf{\Psi}_0|^{\frac{\kappa_0}{2}}}{2^{\frac{\kappa_0p}{2}}\Gamma_p\left(\frac{\kappa_0}{2}\right)}|\mathbf{\Sigma}|^{-\frac{\kappa_0+p+1}{2}}\exp{\left\{-\frac{1}{2}\tr\left(\mathbf{\Psi}_0\mathbf{\Sigma}^{-1}\right)\right\}}d\mathbf{\Sigma}\\
    &= \frac{1}{2^{\frac{(\kappa_0+\nu) p}{2}}\Gamma_p\left(\frac{\nu}{2}\right)\Gamma_p\left(\frac{\kappa_0}{2}\right)}|\mathbf{W}|^{\frac{\nu-p-1}{2}}|\mathbf{\Psi}_0|^{\frac{\kappa_0}{2}}\int\underbrace{ |\mathbf{\Sigma}|^{-\frac{(\kappa_0+\nu)+p+1}{2}}\exp{\left\{-\frac{1}{2}\tr\!\left((\mathbf{\Psi}_0+\mathbf{W})\mathbf{\Sigma}^{-1}\right)\right\}}}_{\text{kernel of \(\mathcal{IW}_p(\mathbf{\Psi}_0+\mathbf{W},\kappa_0+\nu)\)}}d\mathbf{\Sigma}\\
    &= \frac{1}{2^{\frac{(\kappa_0+\nu) p}{2}}\Gamma_p\left(\frac{\nu}{2}\right)\Gamma_p\left(\frac{\kappa_0}{2}\right)}|\mathbf{W}|^{\frac{\nu-p-1}{2}}|\mathbf{\Psi}_0|^{\frac{\kappa_0}{2}}\cdot\frac{2^{\frac{(\kappa_0+\nu)p}{2}}\Gamma_p\left(\frac{\kappa_0+\nu}{2}\right)}{|\mathbf{\Psi}_0+\mathbf{W}|^{\frac{\kappa_0+\nu}{2}}}\\
    &= \frac{\Gamma_p\left(\frac{\nu+\kappa_0}{2}\right)}{\Gamma_p\left(\frac{\nu}{2}\right)\Gamma_p\left(\frac{\kappa_0}{2}\right)}
\cdot
\frac{|\mathbf{W}|^{\frac{\nu-p-1}{2}}|\mathbf{\Psi}_0|^{\frac{\kappa_0}{2}}}{|\mathbf{W} + \mathbf{\Psi}_0|^{\frac{\nu+\kappa_0}{2}}}.
\end{align*}

Next, we derive the collapsed posterior predictive distribution \(p(\mathbf{W}_i\mid c,\nu, \{\mathbf{W}_j\}_{j:z_j=c,\,j\neq i})\) for any occupied cluster \(c\). To do so, we first derive the posterior of \(\mathbf{\Sigma}_c\) given the \(n_c\) samples within the cluster \(c\). Denote the index set \(\mathcal{I}_c:=\{i:z_i=c\}\) and \(\mathbf{S}_c:=\sum_{i\in\mathcal{I}_c}\mathbf{W}_i\). The conditional posterior of \(\mathbf{\Sigma}_c\) given the samples and \(\nu\) is
\begin{align*}
    p(\mathbf{\Sigma}_c\mid \{\mathbf{W}_i\}_{i\in\mathcal{I}_c},\nu)&\propto p(\mathbf{\Sigma}_c)\prod_{i\in\mathcal{I}_c}f(\mathbf{W}_i\mid \mathbf{\Sigma}_c,\nu)\\
    & \propto |\mathbf{\Sigma}_c|^{-\frac{\kappa_0+p+1}{2}}\exp{\left\{-\frac{1}{2}\tr\left(\mathbf{\Psi}_0\mathbf{\Sigma}_c^{-1}\right)\right\}}\cdot |\mathbf{\Sigma}_c|^{-\frac{n_c\nu}{2}}\exp{\left\{-\frac{1}{2}\tr{\left(\mathbf{\Sigma}_c^{-1}\mathbf{S}_c\right)}\right\}}\\
    & \propto |\mathbf{\Sigma}_c|^{-\frac{(\kappa_0+n_c\nu)+p+1}{2}}\exp{\left\{-\frac{1}{2}\tr{\left[(\mathbf{\Psi}_0+\mathbf{S}_c)\mathbf{\Sigma}_c^{-1}\right]}\right\}}
\end{align*}

So, we have
\[
\mathbf{\Sigma}_c \mid \{\mathbf{W}_i\}_{i\in\mathcal{I}_c},\nu\sim \mathcal{IW}_p(\mathbf{\Psi}_0+\mathbf{S}_c,\kappa_0+n_c\nu)
\]

Then, we can derive the collapsed likelihood within the cluster \(c\), denoted by \(m(\{\mathbf{W}_i\}_{i\in\mathcal{I}_c}\mid \nu)\), similar to the derivation of the prior predictive density \(m(\mathbf{W}\mid \nu)\).
\begin{align*}
    m(\{\mathbf{W}_i\}_{i\in\mathcal{I}_c}\mid \nu)=\frac{\Gamma_p\left(\frac{\kappa_0+n_c\nu}{2}\right)}{\Gamma_p\left(\frac{\kappa_0}{2}\right)\Gamma_p\left(\frac{\nu}{2}\right)^{n_c}}\cdot \frac{|\mathbf{\Psi}_0|^{\frac{\kappa_0}{2}}\prod_{i\in\mathcal{I}_c}|\mathbf{W}_i|^{\frac{\nu-p-1}{2}}}{|\mathbf{\Psi}_0+\mathbf{S}_c|^{\frac{\kappa_0+n_c\nu}{2}}}
\end{align*}

Now, we derive the full conditional of \(z_i\). Take off the sample \(i\) from its current cluster, and we denote \(\boldsymbol{z}_{-i}=(z_1,\dots,z_{i-1},z_{i+1},\dots,z_n)^\top\). For any existing cluster \(c\) given \(\boldsymbol{z}_{-i}\), we denote \(n_{c,-i}=\sum_{j\neq i}\mathbbm{1}(z_j=c)\) and \(\mathbf{S}_{c,-i}=\sum_{j:z_j=c,\,j\neq i}\mathbf{W}_j\). Then, the posterior predictive density, for an existing cluster \(c\), is
\begin{align*}
    p(\mathbf{W}_i\mid c,\nu,\{\mathbf{W}_j:z_j=c,j\neq i\}) &= \frac{m(\{\mathbf{W}_j:z_j=c,j\neq i\}\bigcup \{\mathbf{W}_i\}\mid c, \nu)}{m(\{\mathbf{W}_j:z_j=c,j\neq i\}\mid c, \nu)}\\
    &=\frac{\Gamma_p\left(\frac{\kappa_0+(n_{c,-i}+1)\nu}{2}\right)}{\Gamma_p\left(\frac{\kappa_0+n_{c,-i}\nu}{2}\right)\Gamma_p\left(\frac{\nu}{2}\right)}\cdot
\frac{|\mathbf{W}_i|^{\frac{\nu-p-1}{2}}|\mathbf{\Psi}_0+\mathbf{S}_{c,-i}|^{\frac{\kappa_0+n_{c,-i}\nu}{2}}}{|\mathbf{\Psi}_0+\mathbf{S}_{c,-i}+\mathbf{W}_i|^{\frac{\kappa_0+(n_{c,-i}+1)\nu}{2}}},
\end{align*}
and the predictive density for a new cluster is simply \(m(\mathbf{W}_i\mid \nu)\). 

Thus, based on the \(p(\mathbf{W}_i\mid c,\nu,\{\mathbf{W}_j:z_j=c,j\neq i\})\) and \(m(\mathbf{W}_i\mid \nu)\), we obtain the full conditional of \(z_i\) as presented in Proposition~\ref{thm:gibbs_z} following the Algorithm 3 in \citet{neal00}.

\subsection{Full conditional for \(\nu\)}

By the conditional independence of our model, the joint conditional posterior of \(\nu\) and \(\{\mathbf{\Sigma}_{c}\}_{c\in\mathcal{C}}\) is
\[
p(\nu,\{\mathbf{\Sigma}_{c}\}_{c\in\mathcal{C}}\mid\boldsymbol{z}, \{\mathbf{W}_i\}_{i=1}^n)\propto p(\nu)\prod_{c\in\mathcal{C}}\left[p(\mathbf{\Sigma}_c)\prod_{i:z_i=c}f(\mathbf{W}_i\mid \mathbf{\Sigma}_c,\nu)\right]
\]

Integrating out \(\mathbf{\Sigma}_c\) for all \(c\in\mathcal{C}\) and plugging in the prior \(p(\nu)\propto \mathbbm{1}_{[\nu_L,\nu_U]}(\nu)\), we obtain
\begin{align*}
    p(\nu\mid\boldsymbol{z}, \{\mathbf{W}_i\}_{i=1}^n) &\propto \mathbbm{1}_{[\nu_L,\nu_U]}(\nu)\,\prod_{c\in\mathcal{C}}m\left(\{\mathbf{W}_i\}_{z_i=c}\mid \nu\right)\\
    & \propto \mathbbm{1}_{[\nu_L,\nu_U]}(\nu) \prod_{c\in \mathcal{C}}\left[\frac{\Gamma_p\left(\frac{\kappa_0+n_c\nu}{2}\right)}{\Gamma_p\left(\frac{\kappa_0}{2}\right)\Gamma_p\left(\frac{\nu}{2}\right)^{n_c}}\cdot \frac{|\mathbf{\Psi}_0|^{\frac{\kappa_0}{2}}\prod_{i\in\mathcal{I}_c}|\mathbf{W}_i|^{\frac{\nu-p-1}{2}}}{|\mathbf{\Psi}_0+\mathbf{S}_c|^{\frac{\kappa_0+n_c\nu}{2}}}\right]\\
    &\propto \mathbbm{1}(\nu\in[\nu_L,\nu_U])\,\frac{\prod_{c\in\mathcal{C}}\Gamma_p\left(\frac{\kappa_0+n_c\nu}{2}\right)}{\Gamma_p\left(\frac{\nu}{2}\right)^n}\,\exp{\left\{\frac{\nu}{2}\left[\sum^n_{i=1}\log{|\mathbf{W}_i|}-\sum_{c\in\mathcal{C}}n_c\log{|\mathbf{\Psi}_0+\mathbf{S}_c|}\right]\right\}}.
\end{align*}

\section{Additional Simulation Details and Results}\label{sec:app_addition}

\subsection{Implementation details of baseline methods}\label{app:baseline}

Hierarchical clustering is implemented using the \texttt{hclust} function in base R, with Ward's linkage method. PAM is implemented using the R package \texttt{cluster} \citep{maec19}. For hierarchical clustering and PAM, we use the non-Euclidean Riemannian distance implemented in the R package \texttt{shape}, which was developed in \citet{dryden09}.

For the unpenalized Wishart FMM proposed in \citet{hidot10} and the penalized Wishart mixture model proposed in \citet{capp25}, we adapted the code from \citet{capp25} to implement these methods. The original code does not assume a shared degrees-of-freedom parameter \(\nu\) across clusters, so we modified the code to align with our simulation settings for a fair comparison. For both models, we set the number of clusters to be in \(\{1,2,3\}\) for the three-cluster simulation settings and in \(\{1,2,3,4,5\}\) for the five-cluster simulation settings. For FMM, we set the penalization parameter \(\lambda=0\), which gives an unpenalized FMM. For Penalized FMM, we tune \(\lambda \in \{0,2,4,6,8,10\}\). For both FMM and Penalized FMM, the model with the best BIC score is selected as the final model.

\subsection{Cluster-specific scale matrices \(\mathbf{\Sigma}_k\) for medium- and large-matrix settings}

Figures~\ref{fig:medium_sigma} and Figure~\ref{fig:large_S} provide additional details on the cluster-specific scale matrices used in the simulations. All matrices have unit diagonal entries, and the off-diagonal entries determine the cluster-specific covariance patterns. In the medium-matrix setting in Figures~\ref{fig:medium_sigma}, the components are designed to differ in both the magnitude and the sign of the off-diagonal associations. For \(k_0=3\), the three components represent broadly positive, weakly associated, and mixed-sign covariance structures. For \(k_0=5\), the design further increases heterogeneity by including strongly positive, weakly correlated, diagonal, and sign-changing patterns. These settings create a range of clustering difficulties.

\begin{figure}[h!]
\centering

\begin{subfigure}{0.95\textwidth}
  \centering
  \includegraphics[width=\textwidth]{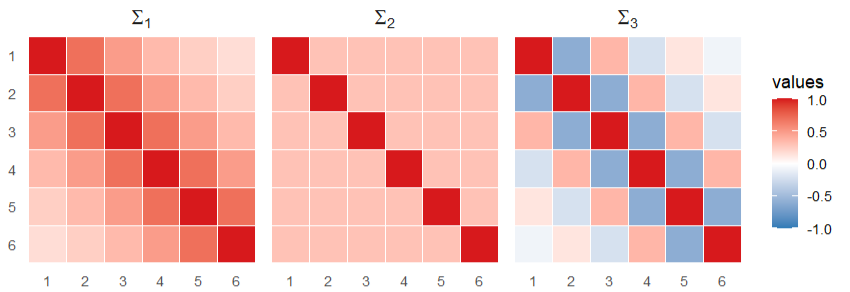}
  \caption{\(k_0=3\): cluster-specific \(\mathbf{\Sigma}_k\)'s.}
\end{subfigure}

\begin{subfigure}{0.95\textwidth}
  \centering
  \includegraphics[width=\textwidth]{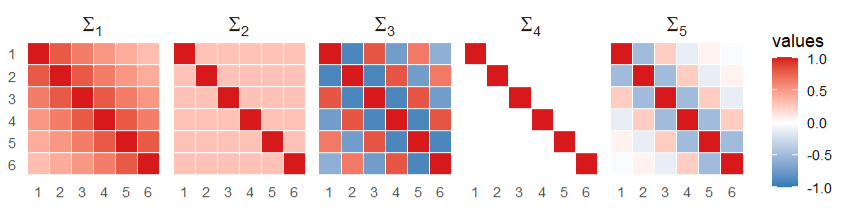}
  \caption{\(k_0=5\): cluster-specific \(\mathbf{\Sigma}_k\)'s.}
\end{subfigure}

\caption{Cluster-specific \(\mathbf{\Sigma}_k\) settings in the medium-matrix (\(p=6\)) simulations.}
\label{fig:medium_sigma}
\end{figure}

In the large-matrix setting, Figure~\ref{fig:large_S} considers a more structured design. The first two scale matrices, \(\mathbf{\Sigma}_1\) and \(\mathbf{\Sigma}_2\), are fixed across replicates and exhibit block-sparse correlation patterns. The third component, \(\mathbf{\Sigma}_3\), is generated randomly in each replicate by standardizing a Wishart draw into a correlation matrix.

\begin{figure}[h!]
\centering
\includegraphics[width=0.99\textwidth]{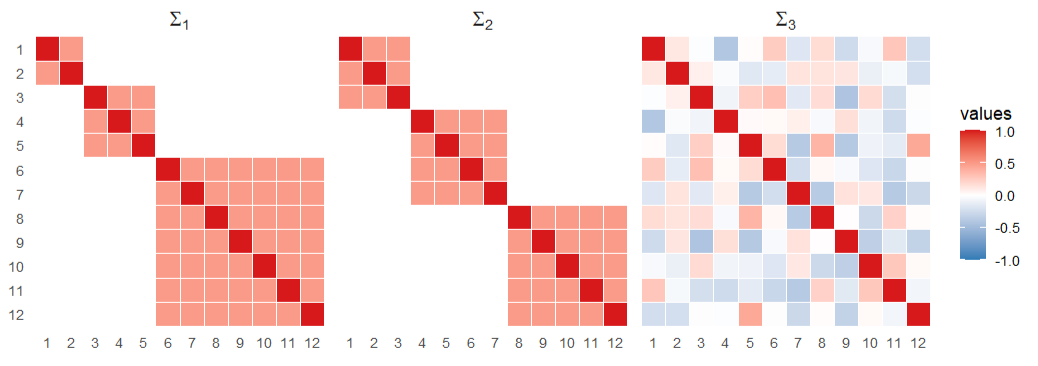}
\caption{Cluster-specific \(\mathbf{\Sigma}_k\) settings in the large-matrix (\(p=12\)) simulations. We note that \(\mathbf{\Sigma}_1\) and \(\mathbf{\Sigma}_2\) are fixed across the 100 replicates, while \(\mathbf{\Sigma}_3\) is randomly generated for each replicate, as explained in Section~\ref{sec:sim_setting}.}
\label{fig:large_S}
\end{figure}

\clearpage
\newpage

\subsection{Additional results for the well-specified simulations}\label{app:add_sim}

Figure~\ref{fig:acc_unbalanced} presents the accuracy of posterior recovery of the true number of clusters under the unbalanced cluster-size configuration for the small- and medium-matrix simulations, comparing our MFM--Wishart model with the existing DPM--Wishart model. The patterns are very similar to those under the balanced cluster-size configuration, as presented in Figure~\ref{fig:prior_acc_balanced} in Section~\ref{sec:est_cluster_num}.

\begin{figure}[!htbp]
\centering
\includegraphics[width=0.99\textwidth]{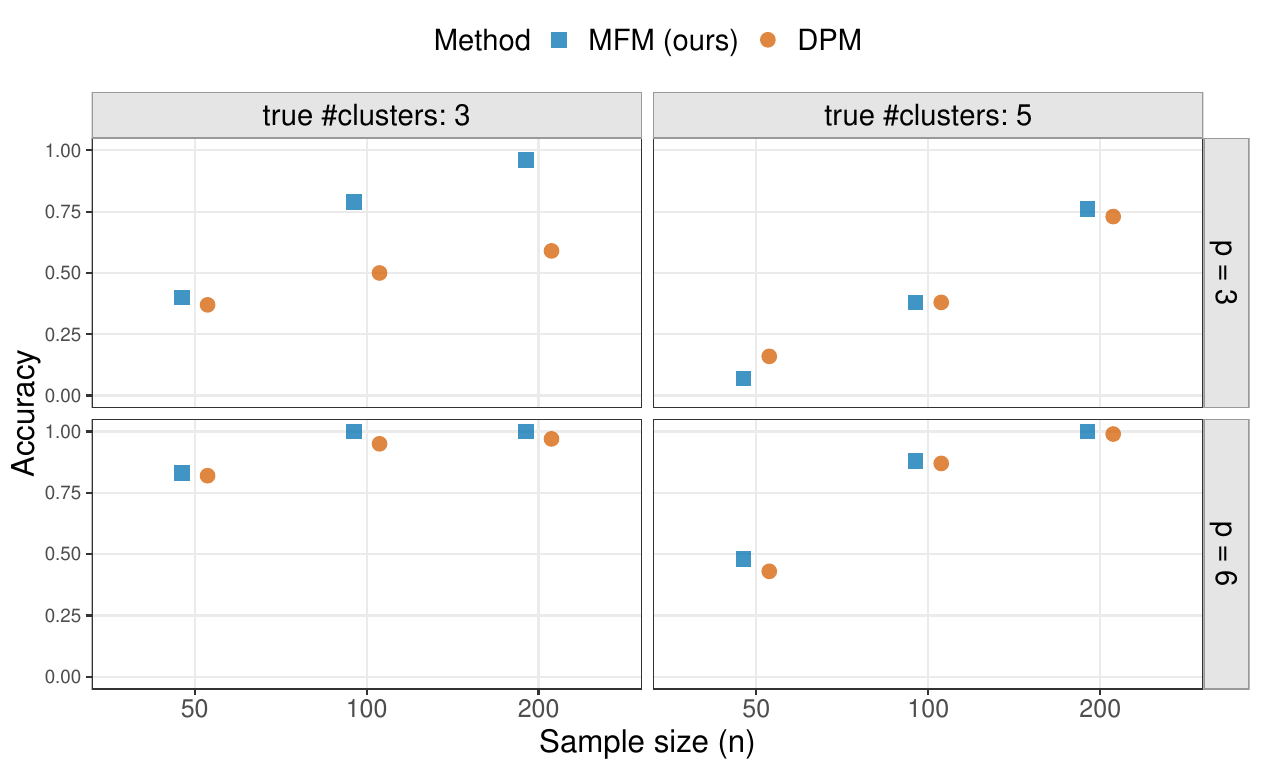}
\caption{Accuracy of recovering the true number of clusters under the unbalanced cluster-size configuration for the small-matrix (\(p=3\)) and medium-matrix (\(p=6\)) simulations. For each replicate, accuracy is defined by whether the Dahl-based estimate of \(K_{+,n}\) equals the true number of clusters \(k_0\). The two columns correspond to \(k_0=3\) and \(k_0=5\), respectively, and the x-axis shows the sample size \(n\).}
\label{fig:acc_unbalanced}
\end{figure}

\vspace{1cm}

Table~\ref{tab:large-matrix-k-accuracy} presents the accuracy of posterior recovery of the true number of clusters for the large-matrix simulations, comparing our MFM--Wishart model with the existing DPM--Wishart model. The two models show almost indistinguishable accuracy across different sample sizes.

\begin{table}[!htbp]
\centering
\caption{
Accuracy of recovering the true number of clusters in the large-matrix setting. For each method and each sample size, accuracy is defined as the proportion of the 100 replicated datasets for which the Dahl-based estimate of \(K_{+,n}\) equals the true number of clusters \(k_0=3\).
}
\label{tab:large-matrix-k-accuracy}
\begin{tabular}{l|ccc}
\toprule
Method & \(n=50\) & \(n=100\) & \(n=200\) \\
\midrule
MFM (ours) & 0.85 & 0.93 & 1.00 \\
DPM & 0.85 & 0.94 & 0.99 \\
\bottomrule
\end{tabular}
\end{table}

\newpage
We next compare the MCMC computation time under different simulation settings for our MFM--Wishart model and the existing DPM--Wishart model. Figures~\ref{fig:mcmc-time-balanced-small-medium} and~\ref{fig:mcmc-time-unbalanced-small-medium} present the computation time under the balanced and unbalanced cluster-size configurations, respectively, for the small-matrix and medium-matrix simulations. Table~\ref{tab:large-matrix-mcmc-time} presents the computation time under the large-matrix setting. The computational costs of our MFM--Wishart model are close to those of DPM--Wishart under various simulation settings.

\begin{figure}[!htbp]
\centering
\includegraphics[width=0.95\textwidth]{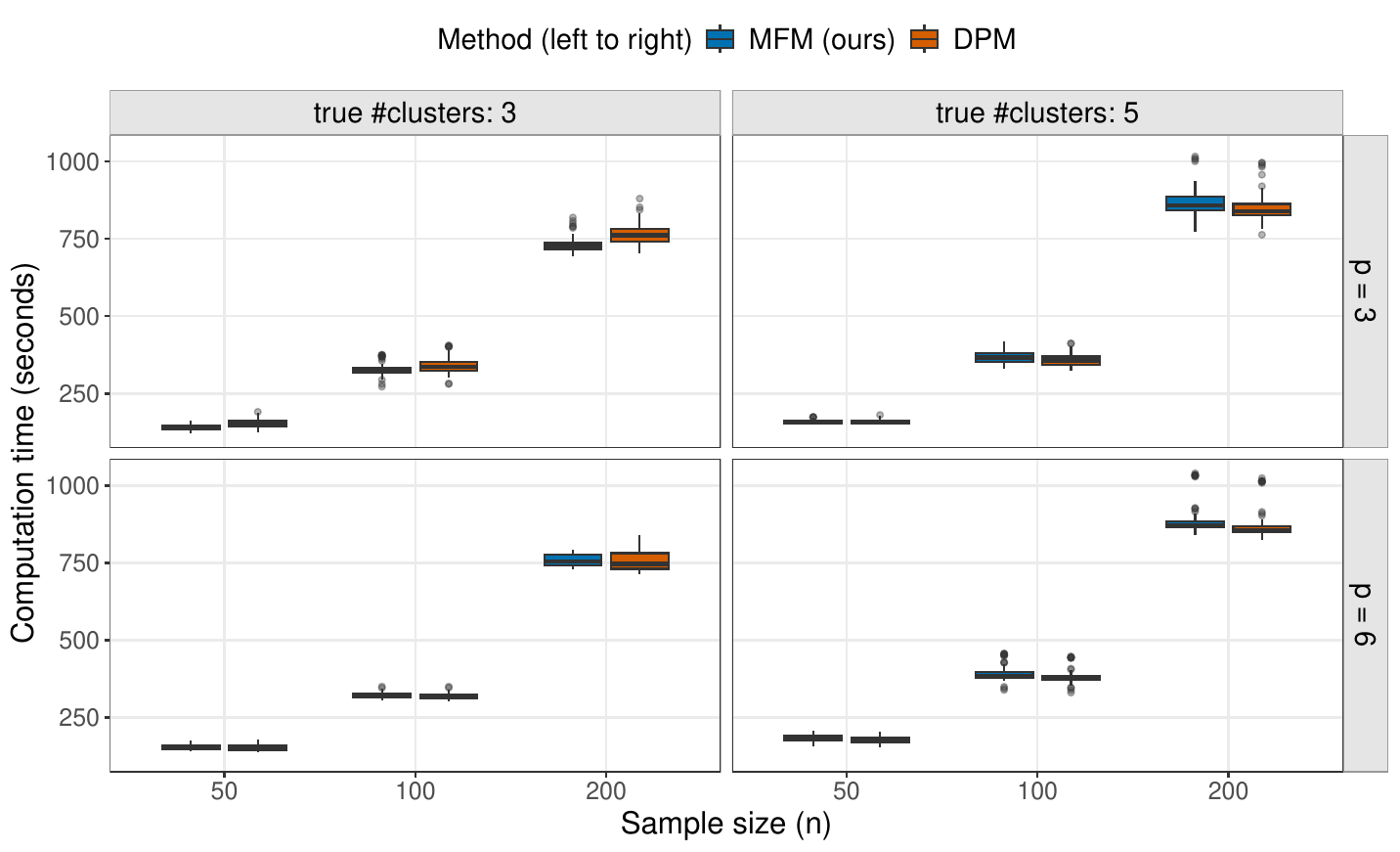}
\caption{
MCMC computation time under the balanced cluster-size configuration for the small-matrix \((p=3)\) and medium-matrix \((p=6)\) simulations. The two columns correspond to \(k_0=3\) and \(k_0=5\), respectively, and the rows correspond to \(p=3\) and \(p=6\). The x-axis shows the sample size \(n\). Boxplots summarize 100 replicated datasets. Methods are ordered from left to right as MFM--Wishart and DPM--Wishart.
}
\label{fig:mcmc-time-balanced-small-medium}
\end{figure}

\begin{figure}[!htbp]
\centering
\includegraphics[width=0.95\textwidth]{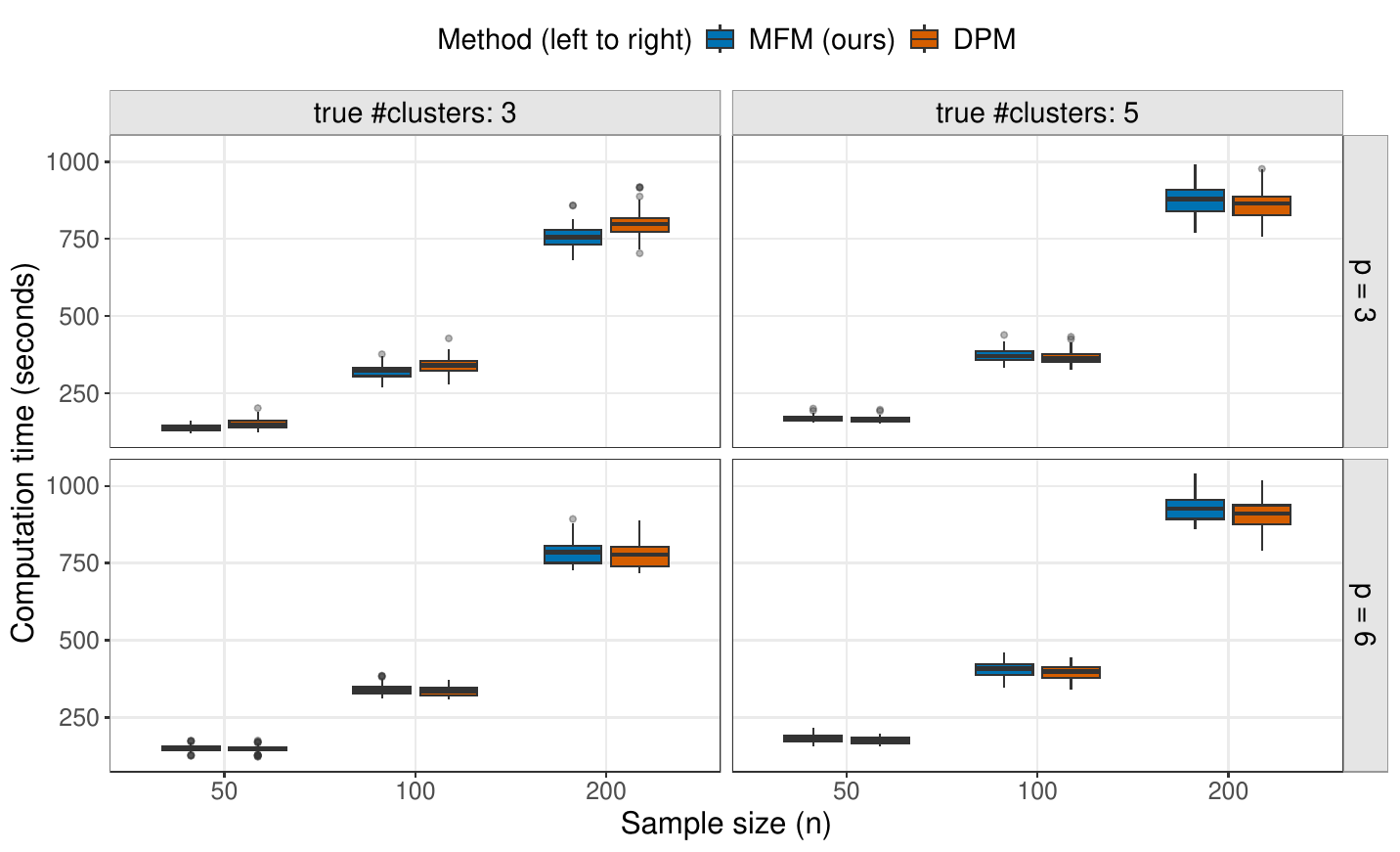}
\caption{
MCMC computation time under the unbalanced cluster-size configuration for the small-matrix \((p=3)\) and medium-matrix \((p=6)\) simulations. The two columns correspond to \(k_0=3\) and \(k_0=5\), respectively, and the rows correspond to \(p=3\) and \(p=6\). The x-axis shows the sample size \(n\). Boxplots summarize 100 replicated datasets. Methods are ordered from left to right as MFM--Wishart and DPM--Wishart.
}
\label{fig:mcmc-time-unbalanced-small-medium}
\end{figure}

\begin{table}[!htbp]
\centering
\caption{MCMC computation time in the large-matrix setting. For each method and each sample size, the table reports the mean computation time and standard deviation, in seconds, over 100 replicated datasets.}
\label{tab:large-matrix-mcmc-time}
\small
\begin{tabular}{l|cc|cc|cc}
\hline
& \multicolumn{2}{c|}{\(n=50\)} & \multicolumn{2}{c|}{\(n=100\)} & \multicolumn{2}{c}{\(n=200\)}  \\
\cline{2-7}
Method & Mean & SD & Mean & SD & Mean & SD  \\
\hline
MFM (ours)    & 158.63 & 11.78 & 346.72 & 20.78 & 792.99 & 42.94 \\
DPM           & 156.94 & 12.28 & 340.80 & 21.21 & 782.33 & 41.75 \\
\hline
\end{tabular}
\end{table}

\clearpage
\newpage
\subsection{Simulations of misspecified cases}\label{app:misspecified}

\subsubsection{Wishart density as a working likelihood}
\label{app:temporal_moments}

The discussion in Section~\ref{sec:wishart} can be made precise in the present misspecified setting as follows. Fix a cluster \(k\), and suppose that \(\{\boldsymbol{x}_t\}_{t=1}^T\) is a jointly Gaussian, second-order stationary \(p\)-variate process with \(E(\boldsymbol{x}_t)=0\) and \(\operatorname{Cov}(\boldsymbol{x}_t)=\mathbf{\Sigma}_k\). Define
\(
\mathbf{S}=\sum_{t=1}^T \boldsymbol{x}_t\boldsymbol{x}_t^\top\), and
\(
\widehat{\mathbf{\Sigma}}=\frac{1}{T}S.
\)
Then \(E(\widehat{\mathbf{\Sigma}})=\mathbf{\Sigma}\). Moreover, if we write
\[
\mathbf{\Gamma}(h)=\operatorname{Cov}(\boldsymbol{x}_t,\boldsymbol{x}_{t+h}),
\]
it can be shown that, for any \(1 \le i,j,r,s \le p\),
\[
\operatorname{Cov}(\widehat{\mathbf{\Sigma}}_{ij},\widehat{\mathbf{\Sigma}}_{rs})
=
\frac{1}{T^2}
\sum_{h=-(T-1)}^{T-1}
(T-|h|)
\left\{
\mathbf{\Gamma}_{ir}(h)\mathbf{\Gamma}_{js}(h)
+
\mathbf{\Gamma}_{is}(h)\mathbf{\Gamma}_{jr}(h)
\right\},
\]
where we write \(\mathbf{\Gamma}_{ir}(h)\) as the \((i,r)\)-th entry of \(\mathbf{\Gamma}(h)\). In the i.i.d.\ Gaussian case, \(\mathbf{\Gamma}(0)=\mathbf{\Sigma}\) and \(\mathbf{\Gamma}(h)=0\) for \(h\neq 0\), so the above expression reduces to
\[
\operatorname{Cov}(\widehat{\mathbf{\Sigma}}_{ij},\widehat{\mathbf{\Sigma}}_{rs})
=
\frac{1}{T}
\left\{
(\mathbf{\Sigma})_{ir}(\mathbf{\Sigma})_{js}
+
(\mathbf{\Sigma})_{is}(\mathbf{\Sigma})_{jr}
\right\},
\]
which is the usual covariance formula associated with the Wishart distribution.

Consider the following stationary vector autoregressive model of order one, VAR(1):
\[
\boldsymbol{x}_t=\phi \boldsymbol{x}_{t-1}+\boldsymbol{\varepsilon}_t,
\qquad
\boldsymbol{\varepsilon}_t \overset{\mathrm{i.i.d.}}{\sim} \mathcal{N}_p\!\left(0,(1-\phi^2)\mathbf{\Sigma}\right),\qquad -1<\phi<1
\]
with stationary initial distribution \(\boldsymbol{x}_1\sim \mathcal{N}_p(0,\mathbf{\Sigma})\), we have
\[
\mathbf{\Gamma}(h)=\phi^{|h|}\mathbf{\Sigma}.
\]
Substituting this into the previous display yields
\[
\operatorname{Cov}(\widehat{\mathbf{\Sigma}}_{ij},\widehat{\mathbf{\Sigma}}_{rs})
=
\frac{1}{T}
\left[
1+2\sum_{h=1}^{T-1}\left(1-\frac{h}{T}\right)\phi^{2h}
\right]
\left\{
(\mathbf{\Sigma})_{ir}(\mathbf{\Sigma})_{js}
+
(\mathbf{\Sigma})_{is}(\mathbf{\Sigma})_{jr}
\right\}.
\]
Thus, relative to the i.i.d.\ case, temporal dependence changes the second-order fluctuations of \(\widehat{\mathbf{\Sigma}}\) by the multiplicative factor
\(
1+2\sum_{h=1}^{T-1}\left(1-\frac{h}{T}\right)\phi^{2h},
\)
which is larger than one when \(\phi\neq0\). Equivalently, one may define the covariance-level effective sample size
\[
\nu_{\mathrm{eff}}(T,\phi)
=
\frac{T}{
1+2\sum_{h=1}^{T-1}\left(1-\frac{h}{T}\right)\phi^{2h}
},
\]
so that
\[
\operatorname{Cov}(\widehat{\mathbf{\Sigma}}_{ij},\widehat{\mathbf{\Sigma}}_{rs})
=
\frac{1}{\nu_{\mathrm{eff}}(T,\phi)}
\left\{
(\mathbf{\Sigma})_{ir}(\mathbf{\Sigma})_{js}
+
(\mathbf{\Sigma})_{is}(\mathbf{\Sigma})_{jr}
\right\}.
\]
This shows that temporal dependence preserves the first-order mean structure, but reduces the effective sample size from \(T\) to \(\nu_{\mathrm{eff}}(T,\phi)\). In this sense, the Wishart degrees-of-freedom may be interpreted as an effective sample size parameter under temporal dependence.

This also explains the rescaling used in the simulation below. Since
\(
\mathbf{W}=\frac{\nu_0}{T}\mathbf{S}=\nu_0\widehat{\mathbf{\Sigma}},
\)
we have
\(
E(\mathbf{W})=\nu_0\mathbf{\Sigma}.
\)
In addition,
\[
\operatorname{Cov}((\mathbf{W})_{ij},(\mathbf{W})_{rs})
=
\frac{\nu_0^2}{\nu_{\mathrm{eff}}(T,\phi)}
\left\{
(\mathbf{\Sigma})_{ir}(\mathbf{\Sigma})_{js}
+
(\mathbf{\Sigma})_{is}(\mathbf{\Sigma})_{jr}
\right\}.
\]
Therefore, when \(T\) is chosen so that \(\nu_{\mathrm{eff}}(T,\phi)\) is close to the benchmark value \(\nu_0\), the mean structure of \(\mathbf{W}\) matches that of a \(\mathcal{W}_p(\mathbf{\Sigma},\nu_0)\) distribution, and its covariance scale is also comparable; indeed, the covariance formulas coincide exactly when \(\nu_{\mathrm{eff}}(T,\phi)=\nu_0\). This yields a useful misspecified setting in which the observed matrices within each cluster are not exactly Wishart-distributed, but remain comparable to the correctly specified Wishart setting in terms of first-order scale and effective information.

\subsubsection{Simulation settings}

In this study, we consider a misspecified setting in which the observed matrices within each cluster do not follow Wishart distributions exactly, but for which the Wishart model can still serve as a useful approximation. This setting is motivated by covariance matrices computed from multivariate time-series data, such as fNIRS signals, where temporal dependence is typically present. In such cases, the resulting lag-0 covariance matrices generally do not have an exact Wishart distribution, even though a Wishart model may still provide a reasonable working likelihood.

To mimic this situation, we consider the balanced setting with \(k_0=3\) true clusters and total sample sizes \(n=50,100,200\). We consider the small- and medium-matrix settings with \(p=3\) and \(p=6\). As in the \(k_0=3\) well-specified settings, we use the same three cluster-specific scale matrices \(\mathbf{\Sigma}_1\), \(\mathbf{\Sigma}_2\), and \(\mathbf{\Sigma}_3\), shown in Figures~\ref{fig:small_sigma} and~\ref{fig:medium_sigma}. We consider two levels of temporal dependence, \(\phi=0.5\) and \(\phi=0.8\), where \(\phi\) is the autoregressive coefficient in a stationary VAR(1) process.

For each subject assigned to cluster \(k\), we generate a \(p\)-variate stationary VAR(1) process
\begin{align*}
\boldsymbol{x}_t = \phi \boldsymbol{x}_{t-1} + \boldsymbol{\varepsilon}_t,
\qquad
\boldsymbol{\varepsilon}_t \sim_{i.i.d} \mathcal{N}_p\bigl(\mathbf{0},(1-\phi^2)\mathbf{\Sigma}_k\bigr),
\end{align*}
with stationary initial distribution \(\boldsymbol{x}_1 \sim \mathcal{N}_p(\mathbf{0},\mathbf{\Sigma}_k)\), so that the marginal covariance of \(\boldsymbol{x}_t\) is \(\mathbf{\Sigma}_k\). For each subject, we then compute the scatter matrix \(\mathbf{S}=\sum_{t=1}^{T}\boldsymbol{x}_t\boldsymbol{x}_t^\top\) and rescale it as \(\mathbf{W}=\frac{\nu_0}{T}\mathbf{S}\), where \(\nu_0\) is a benchmark degrees-of-freedom parameter. This rescaling ensures that \(\mathbb{E}(\mathbf{W})=\nu_0\mathbf{\Sigma}_k\), matching the mean structure of a Wishart distribution with parameters \((\mathbf{\Sigma}_k,\nu_0)\), while still preserving the temporal dependence induced by the VAR(1) process. 

To make the comparison with the correctly specified Wishart setting meaningful, we choose the time-series length \(T\) so that the covariance-level effective sample size under the VAR(1) process is close to the benchmark value \(\nu_0\). We set \(\nu_0=10\) to match the correctly specified setting with \(k_0=3\). The resulting choices of \(T\) are \(16\) for \(\phi=0.5\) and \(43\) for \(\phi=0.8\). With this construction, the resulting matrices are not exactly Wishart-distributed, but they remain comparable to the Wishart setting in terms of first-order scale and effective information. This allows us to assess the robustness of MFM--Wishart and the competing methods under model misspecification caused by temporal dependence.

We keep all other simulation settings the same as in the well-specified small- and medium-matrix settings, including prior hyperparameters, MCMC settings, and baseline-model specifications.

\subsubsection{Results of misspecified cases}

Figure~\ref{fig:ari_misspecified} reports the ARI results under the misspecified setting for the small- and medium-matrix simulations. Across all experimental conditions, MFM--Wishart remains among the best-performing methods, suggesting that its clustering performance is robust to model misspecification induced by temporal dependence. In the small-matrix setting (\(p=3\)), MFM--Wishart and DPM--Wishart generally achieve the highest ARI values, whereas the finite-mixture and distance-based methods perform less well, especially when the sample size is small. In particular, FMM and Penalized FMM exhibit substantial variability at \(n=50\), while HC and PAM remain consistently below the two Bayesian Wishart mixture methods. As the sample size increases, the ARI of all methods improves, but MFM--Wishart remains highly competitive throughout. In the medium-matrix setting (\(p=6\)), the clustering task becomes much easier, and both MFM--Wishart and DPM--Wishart attain very high ARI values, often near one, even at relatively small sample sizes. The results under \(\phi=0.5\) and \(\phi=0.8\) are qualitatively very similar, indicating that the relative performance of the competing methods is stable across the two temporal-dependence settings after controlling the covariance-level effective sample size.

\begin{figure}[h!]
\centering
\includegraphics[width=0.99\textwidth]{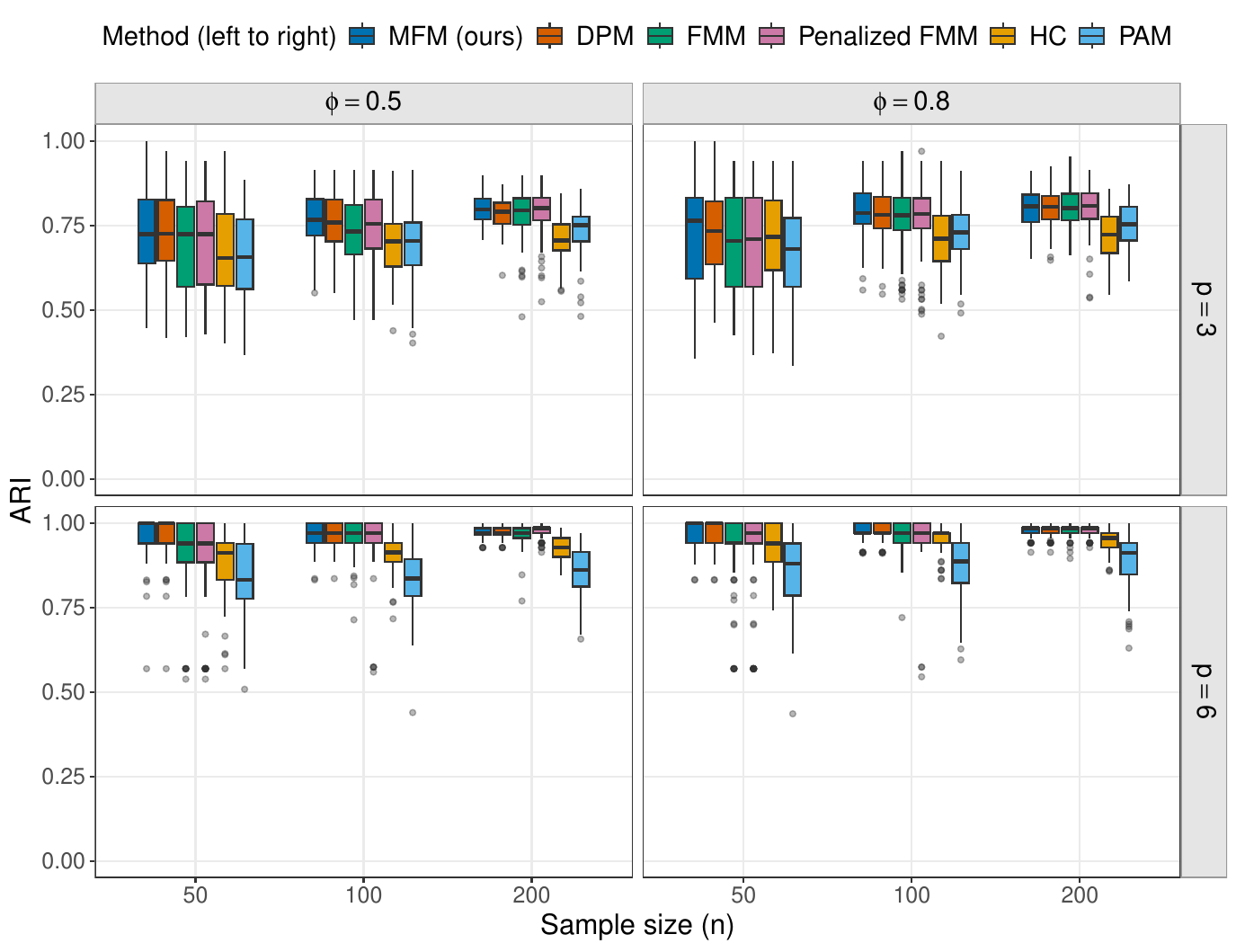}
\caption{ARI under the misspecified setting for the small-matrix (\(p=3\)) and medium-matrix (\(p=6\)) simulations. The rows correspond to \(p=3\) and \(p=6\). The two columns correspond to the autoregressive coefficients \(\phi=0.5\) and \(\phi=0.8\), respectively, and the x-axis shows the sample size \(n\). The true number of clusters is fixed at \(k_0=3\).}
\label{fig:ari_misspecified}
\end{figure}

Figure~\ref{fig:acc_misspecified} reports the accuracy of recovering the true number of clusters under the same misspecified setting. Here the advantage of MFM--Wishart over DPM--Wishart is more pronounced than in the ARI comparison. In the small-matrix setting (\(p=3\)), MFM--Wishart recovers the true number of clusters substantially more often than DPM--Wishart across all sample sizes and under both values of \(\phi\). Moreover, the accuracy of MFM--Wishart increases with the sample size and becomes high at \(n=200\), whereas the corresponding accuracy of DPM--Wishart remains clearly lower. Thus, although the two methods can yield similar ARI values in some low-dimensional misspecified cases, MFM--Wishart provides much more reliable inference on the number of clusters. In the medium-matrix setting (\(p=6\)), both methods recover the true number of clusters almost perfectly, suggesting that the higher-dimensional matrices provide more information for distinguishing clusters in this setting. Overall, these results show that MFM--Wishart is robust to the misspecification scenarios considered here and that its advantage over DPM--Wishart is especially clear for estimating the number of clusters in the more difficult low-dimensional settings.

\begin{figure}[h]
\centering
\includegraphics[width=0.99\textwidth]{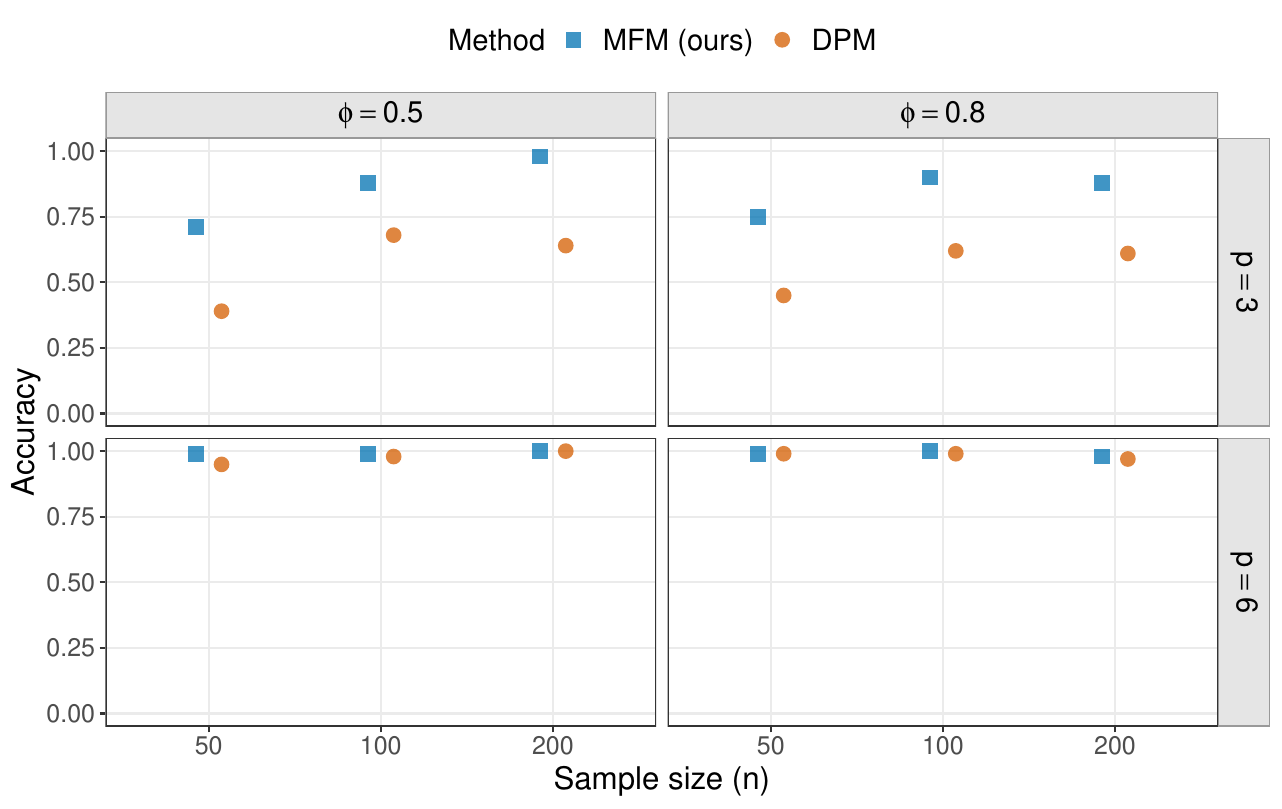}
\caption{Accuracy of recovering the true number of clusters under the misspecified setting for the small-matrix (\(p=3\)) and medium-matrix (\(p=6\)) simulations. For each replicate, accuracy is defined by whether the Dahl-based estimate of \(K_{+,n}\) equals the true number of clusters \(k_0\). The two columns correspond to the autoregressive coefficients \(\phi=0.5\) and \(\phi=0.8\), respectively, and the x-axis shows the sample size \(n\). The true number of clusters is fixed at \(k_0=3\).}
\label{fig:acc_misspecified}
\end{figure}

\subsection{MCMC trace plots}\label{app:application_trace}

Figure~\ref{fig:app_trace_nu_Kplus} presents the trace plots of the shared degrees-of-freedom parameter \(\nu\) and the number of clusters \(K_{+,n}\) in the infant fNIRS application. The MCMC chain was run for \(20{,}000\) iterations, and the first \(8{,}000\) iterations were discarded as burn-in.

\begin{figure}[!htbp]
\centering
\includegraphics[width=0.95\textwidth]{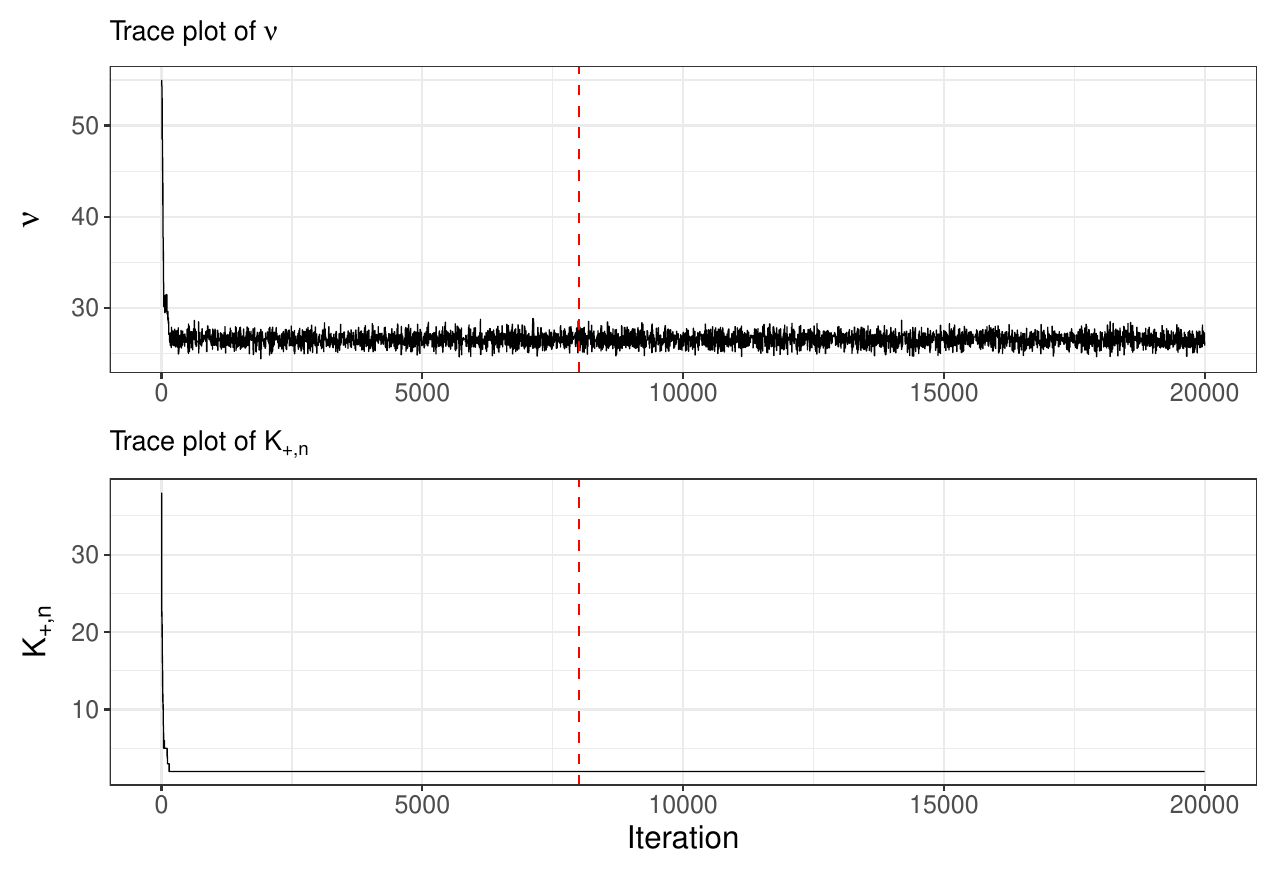}
\caption{
Trace plots of the parameter \(\nu\) and the number of clusters \(K_{+,n}\) in the infant fNIRS application. The chain was run for \(20{,}000\) MCMC iterations, and the first \(8{,}000\) iterations, indicated by the red dashed vertical line, were discarded as burn-in.
}
\label{fig:app_trace_nu_Kplus}
\end{figure}

\end{appendices}
\end{document}